\documentclass[a4paper,UKenglish,cleveref, autoref, thm-restate]{lipics-v2021}

\pdfoutput=1 
\hideLIPIcs  
\nolinenumbers

\usepackage{latexsym}
\usepackage[fleqn]{amsmath}
\usepackage{newtxtext}
\usepackage{thmtools}
\usepackage{fixme}
\usepackage{amssymb}
\usepackage{mathrsfs}
\usepackage{amssymb}
\usepackage{amsthm}
\usepackage{mathtools}

\newcommand{\poly}{\mathrm{poly}}
\newcommand{\defeq}{\stackrel{\mbox{\scriptsize{\normalfont\rmfamily def}}}{=}}

\usepackage{bm}
\renewcommand{\Vec}[1]{\bm{#1}}

\newcommand{\np}{\mathrm{NP}}
\newcommand{\xp}{\mathrm{XP}}
\newcommand{\sharpp}{\#\mathrm{P}}
\newcommand{\distfamily}{\mathcal{LU}}
\newcommand{\pfs}{\mathcal{F}}
\newcommand{\graphzero}{G}
\newcommand{\graphone}{G^+}
\newcommand{\graphtwo}{\tilde{G}}
\newcommand{\graphthree}{G^\ast}
\newcommand{\graphfour}{\acute{G}}
\newcommand{\graphfive}{\hat G}
\newcommand{\graphsix}{\check G}

\newcommand{\Unif}{\mathrm{Unif}}
\newcommand{\unif}{\mathrm{unif}}
\newcommand{\iidsim}{\overset{\mathrm{iid}}{\sim}}
\newcommand{\cat}[2]{\left[ #1 ; #2 \right]}

\title{The Complexity of Computing Path Length Distributions with Edges i.i.d. Random via Local Uniformity} 

\titlerunning{Path Length Distributions with Locally Uniform Edge Lengths} 

\author{Ei ANDO }{Senshu University, 2-1-1, Higashimita, Tama-Ku, Kawasaki, Kanagawa, Japan}{ando.ei@isc.senshu-u.ac.jp}{https://orcid.org/0000-0001-7284-9896}{
}

\authorrunning{E. ANDO} 

\Copyright{Ei ANDO} 

\ccsdesc[500]{Theory of computation~Shortest paths}
\ccsdesc[500]{Theory of computation~Problems, reductions and completeness}
\ccsdesc[500]{Mathematics of computing~Distribution functions}

\keywords{$\sharpp$-hardness, shortest path, longest path, random edge length, distribution function, local uniformity, treewidth, $\xp$} 

\category{} 

\relatedversion{} 

\EventEditors{John Q. Open and Joan R. Access}
\EventNoEds{2}
\EventLongTitle{42nd Conference on Very Important Topics (CVIT 2016)}
\EventShortTitle{CVIT 2016}
\EventAcronym{CVIT}
\EventYear{2016}
\EventDate{December 24--27, 2016}
\EventLocation{Little Whinging, United Kingdom}
\EventLogo{}
\SeriesVolume{42}
\ArticleNo{23}

\begin{document}

\maketitle

\begin{abstract}
We investigate the problem of computing the distribution function 
for the shortest and longest path lengths in a directed graph with random edge lengths. 
Specifically, when these lengths are uniformly distributed, 
the problem reduces to computing the volume of a polytope defined by the graph structure. 
We establish that the problem is 
$\sharpp$-hard, even under the restricted condition that the random edge lengths 
are identically and independently distributed (i.i.d.) according to any continuous 
probability distribution with certain natural conditions, the local uniformity. 
This hardness result applies broadly: while the uniform distribution provides 
an essential case for the reduction, other distributions---such 
as exponential or normal---are similarly hard because they contain uniform 
distributions in every arbitrarily small interval. 
Furthermore, we show that the problem is contained within 
$\xp$ with respect to the treewidth $k$ 
of the underlying undirected graph. 
For the specific case of i.i.d. uniform edge lengths, 
we present a novel dynamic programming algorithm that processes 
a tree decomposition by iteratively performing convolutions 
to propagate distribution functions. 
Our approach achieves a time complexity of 
$n^{O(k^2)}$ for any fixed treewidth $k$.
\end{abstract}

\section{Introduction}
\noindent{\bf Motivation and Background.}
Computing the distribution function of the longest path length with random edge lengths 
presents a fundamental challenge in Statistical Static Timing Analysis (SSTA) for 
VLSI design. In modern semiconductor manufacturing, timing violations arising from 
process variations are a primary source of yield loss and carry significant economic
consequences. Since deterministic analysis fails to capture these stochastic 
behaviors due to constant scaling in semiconductor devices,
the performance of a logic circuit should be modeled using its 
full path length distribution. 
While numerous approximation algorithms have been proposed assuming continuous 
delays~\cite{BCSS2008,DK2003,VRKWN2004}, and the necessity of such stochastic modeling 
is well-documented~\cite{Wagner2016}, the problem remains formidable. Indeed, even 
modern statistical approaches, including machine learning methods, are actively being 
explored, as noted in recent surveys like~\cite{KO2023}.

The computational challenge predates the VLSI era, finding its roots in 
Operations Research with the Performance Evaluation and Review Technique (PERT), 
introduced by Malcolm, Roseboom, Clark, and Fazar~\cite{MRCF1959}. 
PERT models project duration as the longest path length within a Directed Acyclic 
Graph (DAG), where individual task durations are modeled using random variables. 
Over the decades, significant scholarly attention has been dedicated to determining 
the distribution functions of both shortest and longest paths, resulting in various 
heuristic and bounding methods (e.g., Adlakha and Kulkarni~\cite{AK1989}; 
Ludwig, Möhring, and Stork~\cite{LMS2001}). 

The computational complexity of these path length distribution problems is rooted 
in the class $\sharpp$, which was first defined by Valiant~\cite{Valiant1979tcs} 
to characterize the intractability in counting and reliability problems \cite{Valiant1979siam}. 
Initial work on stochastic networks demonstrated that computing both shortest and 
longest path length distributions is $\np$-hard (Ball and Provan~\cite{BP1983}). 
Subsequently, Hagstrom~\cite{Hagstrom1988} established the 
$\sharpp$-completeness of the problem when edge lengths are restricted to 
discrete random variables. 
There the reliability problem was used in the reduction,
associating the discrete edge lengths up (operational) or down (failure). 
However, this cannot be applied to the continuous edge lengths.

When random edge lengths are continuously uniformly distributed, 
the longest path length distribution function is equivalent to computing the volume of 
a high-dimensional polytope. Dyer and Frieze~\cite{DF1988} previously demonstrated the 
$\sharpp$-hardness of computing the distribution of sums of mutually independent 
uniformly random variables, a problem corresponding to the volume of a hypercube 
truncated by a halfspace. 
Their result proves hardness for cases where edge lengths are not necessarily
identically distributed. 
Furthermore, the fundamental difficulty extends 
beyond exact computation: Elekes~\cite{Elekes1986} 
proved that any polynomial-time algorithm cannot achieve a large approximation ratio 
for the volume of a general convex body 
(B\'{a}r\'{a}ny and F\"{u}redi provide an even larger bound in \cite{BF1987}).

Despite these foundational results, complexity studies on shortest and longest path 
length distributions have remained sparse until more recently. Relatively recently, 
Nouri and Ghodsi~\cite{NG2012} considered a stochastic task scheduling problem with
exponentially distributed random durations, proving that minimizing the expected 
makespan is $\np$-hard. Their model aligns with our problem when the workers are 
unlimited; however, we address the more demanding task of computing the entire 
distribution function of the longest path length.

Drawing from the argument above, a significant gap persists between theoretical 
complexity analyses and the practical requirements of the semiconductor industry. 
While $\sharpp$-completeness has been established for discrete edge lengths, 
real-world industrial applications require efficient algorithms that handle 
continuous edge distributions. However, proving that computing 
the distribution functions of the shortest/longest path length is 
$\sharpp$-hard for i.i.d. continuous edge lengths has remained an open 
challenge for many years. This required the development of new techniques capable 
of connecting established discrete $\sharpp$-complete problems to the volume 
defined by polytopes in a continuous space. 

The foundational breakthrough was the 
$\sharpp$-completeness proof of $\#$LE2H by Dittmer and Pak~\cite{DP2020}. 
They demonstrated that computing the number of linear extensions ($\#$LE) 
for a height-two poset given by $\mathcal{P}=(U\cup V, R)~~(R\subseteq U\times V)$
is $\sharpp$-complete.
Our approach to proving the $\sharpp$-hardness for computing shortest and 
longest path length distribution functions with continuous random variables 
leverages the established hardness of calculating the volume of the {\em order polytope}.
Specifically, Stanley~\cite{Stanley1986} showed that the order polytope's volume 
is equivalent to the number of linear extensions of its underlying poset. 
Brightwell and Winkler~\cite{BW1991} initially proved that counting linear 
extensions is $\sharpp$-complete for posets of height at least three.
Then, Dittmer and Pak~\cite{DP2020} proved the hardness 
for posets of height two after almost three decades.

The simplicity of $\#$LE2H provides a powerful framework for our reduction.
For our proof, we consider an associated DAG 
$G=(\{s,t\}\cup U\cup V,R\cup P)$, corresponding to $\mathcal{P}=(U\cup V,R)$. 
Here, the edge set $P$ consists of the edges from $s$ to each vertices in $U$, 
and edges from each vertices in $V$ to $t$.
We then assign static length $0$ to edges in $R$, and 
random lengths i.i.d. uniformly over $[0,1]$ to the edges in $P$. 
By construction, we prove that the shortest/longest path length distribution 
function coincides with the volume of the order polytope defined by $\mathcal{P}$, 
establishing our starting point for the $\sharpp$-hardness proof. 
We call $G$ a {\em transportation graph}
(see Fig.\ref{fig:reduction}(left) in page \pageref{fig:reduction}).

Our next key concept is the use of repeated convolution 
representation for the shortest/longest path length distribution function. 
Early work by Ando, Ono, Sadakane, and Yamashita~\cite{AOSY2009} demonstrated that, 
for a DAG, the longest path length distribution function (LPPDF) could be 
represented by a sequence of $O(n)$ convolutions. They utilized the 
formula to prove the tractability of LPPDF computation in DAGs with bounded ''width.''
Subsequently, Ando~\cite{Ando2017} utilized the 
pathwidth of the underlying undirected graph of the DAG. 
There, the existence of a Fully Polynomial-Time Approximation Scheme 
(FPTAS) was demonstrated 
if the pathwidth is bounded by a constant and edge lengths are mutually independent 
(not necessarily identical) uniformly distributed random variables. 

While these advances established significant results, they often rely on specialized 
constraints or distributions. 
Previous research has generally been constrained to 
specific probability models probably due to the hardness of 
deriving general representations. 
See e.g., \cite{NKBM2006} (normal distribution),
\cite{Kulkarni1986,NG2012} (exponential distribution).
On the contrary, we can concentrate on the local uniformity of many natural distributions
by the above repeated convolution representation of LPPDF, 
which leads to the $\sharpp$-hardness results for many distributions at a time.

For our tractability result,
we utilize {\em treewidth} as our primary structural parameter because 
it represents a broader class of tractable graphs compared to the various 
"width" parameters used by previous works \cite{AOSY2009, NG2012} 
(e.g., Dilworth~\cite{Dilworth1950}, M\"{o}hring~\cite{Moehring1989}). 
Introduced by Robertson and Seymour~\cite{RS1986}, treewidth quantifies how 
closely a graph approximates a tree structure. 
The utility of the measure was significantly advanced by Bodlaender’s 
algorithm~\cite{Bodlaender1996}. 
Courcelle's Theorem and subsequent work 
\cite{CourcelleEngelfriet} demonstrated that any problem expressible in Monadic Second 
Order Logic (MSOL) is solvable in linear time on graphs with bounded treewidth. 
In addition, Arnborg and Lagergren~\cite{AL1991} extended the capability to optimization 
problems involving lengths. 
However, calculating a probability distribution 
function does not fit directly into the MSOL framework
since it requires iterative convolutions.

To bridge the gap, 
we present a polynomial-time algorithm for computing both shortest and longest 
path length distribution functions on graphs of bounded underlying treewidth. 
This result shows that our problems lie within the complexity class $\xp$.

\bigskip
\noindent {\bf Main Contribution.}
We consider the following problem. 
Given a graph $G=(V,E)$ and random edge lengths $\Vec{X}\in \mathbb{R}^{E}$, 
let $X_{\rm MIN}$ and $X_{\rm MAX}$ be the minimum and maximum path length, respectively. 
Specifically, we define 
$X_{\rm MIN}=\min_{\pi\in\Pi_G} \left\{\sum_{e\in \pi} X_e\right\}$ 
and $X_{\rm MAX}=\max_{\pi\in\Pi_G}\left\{\sum_{e\in \pi}X_e\right\}$, 
where $\Pi_G$ is the set of all simple paths from source $s$ to sink $t$.

In either case, our goal is to compute the distribution functions:  
$F_{\rm MIN}^{\Vec{X}}(x)=\Pr[X_{\rm MIN}\le x]$ and 
$F_{\rm MAX}^{\Vec{X}}(x)=\Pr[X_{\rm MAX}\le x]$. 
When all components of $\Vec{X}$ are independent and identically distributed (i.i.d.), 
we refer to these respective problems as SPPDF-IID and LPPDF-IID.
Our main contributions are summarized as follows.
\begin{enumerate}
\item $\sharpp$-hardness Proof for i.i.d. Continuous Uniform Edge Lengths: 
We establish the $\sharpp$-hardness of SPPDF-IID and LPPDF-IID. 
Extending the results in~\cite{DF1988}, 
we prove that the intractability persists even for i.i.d. edge lengths. 
Our initial result establishes a continuous analogue to Hagstrom~\cite{Hagstrom1988},
which originally restricted its scope to directed graphs. 
Furthermore, we generalize this result to undirected graphs.
\begin{restatable}{theorem}{thDirectedHard}
\label{th:directedhard}
For a directed graph, SPPDF-IID and LPPDF-IID are both $\sharpp$-hard if
the edge lengths are uniformly distributed over $[0,1]$. 
\end{restatable}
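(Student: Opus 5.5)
The plan is to reduce from $\#$LE2H, the problem of counting linear extensions of a height-two poset $\mathcal{P}=(U\cup V,R)$, shown $\sharpp$-complete by Dittmer and Pak~\cite{DP2020}. By Stanley~\cite{Stanley1986}, the number of linear extensions equals $|U\cup V|!$ times the volume of the order polytope
\[
\mathcal{O}(\mathcal{P})=\{\Vec{z}\in[0,1]^{U\cup V} : z_u\le z_v \text{ for all } (u,v)\in R\}.
\]
So it suffices to compute $\mathrm{vol}\,\mathcal{O}(\mathcal{P})$ with polynomially many oracle calls to LPPDF-IID, resp.\ SPPDF-IID, on directed graphs whose edge lengths are all i.i.d.\ uniform on $[0,1]$.

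\textbf{Step 1 (transportation graph with zero-length middle edges).} Build $G=(\{s,t\}\cup U\cup V, R\cup P)$ with edges $(s,u)$ for $u\in U$ and $(v,t)$ for $v\in V$. The $s$--$t$ paths are exactly $s\to u\to v\to t$ with $(u,v)\in R$. Give the $R$-edges length $0$ and the $P$-edges lengths $X_e\sim U[0,1]$ i.i.d. Substitute $z_u=X_{su}$ and $z_v=1-X_{vt}$, a measure-preserving change of variables. Then $X_{\rm MAX}\le 1$ holds iff $z_u\le z_v$ for every $(u,v)\in R$, so $F_{\rm MAX}(1)=\mathrm{vol}\,\mathcal{O}(\mathcal{P})$. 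For the shortest path, $X_{\rm MIN}\ge 1$ iff $X_{su}+X_{vt}\ge1$ for every $(u,v)\in R$. With $z_u=1-X_{su}$ and $z_v=X_{vt}$ this is again $z_u\le z_v$, so $1-F_{\rm MIN}(1)=\mathrm{vol}\,\mathcal{O}(\mathcal{P})$. Isolated elements of the poset only contribute a factor and can be handled separately.

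\textbf{Step 2 (removing the deterministic zeros).} This is the main obstacle: the theorem requires every edge, including the $R$-edges, to be i.i.d.\ $U[0,1]$. I would introduce a scale parameter and use polynomial interpolation.

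1. Replace each $P$-edge by a directed chain of $m$ i.i.d.\ uniform edges, for $m=1,\dots,M$ with $M$ polynomial. Also consider evaluating at varying thresholds $x$.
2. For $x$ in a fixed unit interval, $F(x)$ is a single polynomial in $x$ of degree at most $|E|$, since it is the volume of a polytope slice whose combinatorial type is fixed on that range. Querying $|E|+1$ values of $x$ recovers all its coefficients exactly in rational arithmetic.
3. The contribution of the uniform $R$-edge lengths should then be separable from the structural part. Relative to the $P$-chains of total length $\Theta(m)$, each $R$-edge is an $O(1)$ perturbation, so as $m$ grows its effect shifts the relevant coefficients in a controlled way.

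Concretely, I would first try to express $F$ through the repeated-convolution representation: integrate out each $R$-edge variable explicitly (a convolution with the uniform density) and show that the resulting family of polynomials, indexed by $m$ and $x$, forms a triangular, invertible linear system. Its distinguished unknown should be $\mathrm{vol}\,\mathcal{O}(\mathcal{P})$.

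If this becomes unwieldy, I would fall back on a fallback for the longest-path case: choose the threshold so that the $R$-edge lengths and the $P$-edge lengths interact only through the order constraints. Then $F$ should factor as a known, explicitly computable polynomial times the order-polytope volume, up to lower-order terms that can be eliminated by interpolation in $x$.

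\textbf{Step 3.} Everything is finished by a symmetric argument for SPPDF-IID, obtained by complementing the variables as in Step 1, and by checking that all constructed graphs and all interpolation points have polynomial size. The key point to verify carefully is that the interpolation system in Step 2 is nonsingular. I expect that to be the crux of the proof.
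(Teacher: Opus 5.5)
Your Step 1 coincides with the paper's Lemma~\ref{lemma:Uniform-hard}. However, everything that separates the theorem from that lemma sits in Step 2, and there you give no argument. The phrases ``should be separable'', ``should form a triangular, invertible linear system'' and ``should factor'' are exactly the statements that need proof.

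Interpolation in $x$ on $[0,1]$ cannot help. By Proposition~\ref{proposition:constant}, the fully i.i.d.\ graph has $F_{\rm MAX}(x)=\tilde A\,x^{|E|}$ there, which is a single unknown. Your own Step 1 substitution, followed by integrating out each $X_{uv}\in[0,z_v-z_u]$, identifies it as
\[
\tilde A=\int_{\mathcal{O}_{\mathcal{P}}}\prod_{uv\in R}(z_v-z_u)\,d\Vec{z}.
\]
This is a weighted count of linear extensions. Equivalently, it is $\#\mathrm{LE}(\mathcal{P}')/|\mathcal{P}'|!$ for the height-3 poset $\mathcal{P}'$ obtained by inserting a new element between $u$ and $v$ for every $uv\in R$.

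The perturbation heuristic with chains of $m$ edges does not close the gap either. By itself it controls the error only up to the probability that two competing chain sums fall within distance $1$ of each other, which is of order $m^{-1/2}$. Recovering $\#$LE2H exactly needs error below $1/|P|!$, so $m$ would have to be exponential.

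Your fallback is precisely the route the paper takes. It subdivides the $R$-edges and writes the LPPDF as a multivariate convolution via Theorem~\ref{th:DAGLP} (Proposition~\ref{proposition:PsiMax}). It then moves the randomness of the $R$-edges onto a tail of $|R|$ uniform edges at $t$ by repeated integration by parts (Proposition~\ref{proposition:psi_V}, Lemma~\ref{lemma:equalPsi}). With Proposition~\ref{proposition:tail} it concludes $\tilde\Psi_{\rm MAX}(x)=\frac{|R|!\,|P|!}{(|P|+|R|)!}\,x^{|R|}\,\Psi_{\rm MAX}(x)$ on $[0,1]$.

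Do not rely on this factorization; a small example refutes it. Take $U=\{u_1,u_2\}$, $V=\{v\}$, $R=\{u_1v,u_2v\}$. Then $\Psi_{\rm MAX}(x)=x^3/3$ and $\tilde\Psi_{\rm MAX}(x)=\int_0^x\bigl((x-z)^2/2\bigr)^2dz=x^5/20$. The formula instead predicts $\frac{2!\,3!}{5!}\,x^2\cdot\frac{x^3}{3}=x^5/30$.

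The identity breaks in Proposition~\ref{proposition:psi_V}. There $\Psi_V^+(\Vec{y})$ depends on $\Vec{y}$ only through $\min_{u\in U_v}y_{uv}$, whereas $\tilde\Psi_V$ does not. In the example, for $y_{u_1v}=a<b=y_{u_2v}\le 1$, one gets $\tilde\Psi_V=\int_0^a(a-z)(b-z)\,dz=\frac{a^3}{3}+\frac{(b-a)a^2}{2}$, but $|R|!\,\Psi_V^+=\frac{a^3}{3}$. The culprit is that the $|U_v|$-th $z_v$-derivative of $\prod_{u\in U_v}(y_{uv}-z_v)H(y_{uv}-z_v)$ carries singular terms at $z_v=\min_{u}y_{uv}$ unless all $y_{uv}$ coincide.

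So the missing step needs a genuinely new ingredient. One option is a $\sharpp$-hardness proof for $\tilde A$ itself, for instance for counting linear extensions of posets of the form $\mathcal{P}'$. Another is an explicit family of all-uniform instances whose monomial coefficients form a provably nonsingular system with $\mathrm{vol}(\mathcal{O}_{\mathcal{P}})$ among the unknowns. You correctly flag this as the crux, but the proposal does not supply it.
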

Subsequently, we extend our analysis to the undirected case:
\begin{restatable}{theorem}{thUndirectedHard}
\label{th:undirectedhard}
For an undirected graph, SPPDF-IID and LPPDF-IID are both 
$\sharpp$-hard if the edge lengths are uniformly distributed 
over $[1,2]$, or over $[-2,-1]$.
\end{restatable}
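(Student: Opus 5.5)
The plan is to reduce from the directed case, reusing the hard instances built for Theorem~\ref{th:directedhard}. An edge length uniform on $[1,2]$ can be written as $1+U_e$ with $U_e$ uniform on $[0,1]$. The additive constant then charges every path a fixed amount per edge. Below a suitable threshold, this discards every undirected path except the ones that already existed in the directed graph. The $[-2,-1]$ case for the longest path is the negation of the $[1,2]$ case for the shortest path, so it follows by symmetry.

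\textbf{Step 1 (parity and hop count).} Take the DAG $G$ used in the proof of Theorem~\ref{th:directedhard}, forget the orientations, and call the result $\graphtwo$. I expect this DAG to be the transportation graph $G=(\{s,t\}\cup U\cup V,R\cup P)$, possibly with some edges subdivided. Every directed $s$--$t$ path in $G$ has the same number $L$ of edges; for the plain transportation graph, $L=3$. The layering of $G$ is a proper $2$-colouring of the undirected graph $\graphtwo$, so every undirected $s$--$t$ path in $\graphtwo$ has $L+2j$ edges for some $j\ge 0$. Moreover, the paths with exactly $L$ edges are precisely the directed $s$--$t$ paths of $G$, because such a path must advance one layer per step.

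With lengths $1+U_e$, a path with at least $L+2$ edges has length at least $L+2$. Hence for $y\in[0,2)$,
\[
F_{\rm MIN}^{\graphtwo}(L+y)=\Pr\Bigl[\min_{\pi\in\Pi_G}\sum_{e\in\pi}U_e\le y\Bigr],
\]
where $\Pi_G$ is the set of directed paths. The right-hand side is the directed SPPDF-IID with i.i.d.\ $[0,1]$-uniform lengths, evaluated at $y$.

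\textbf{Step 2 (matching the directed reduction).} Next I check that the directed proof only needs evaluations at points $y\in[0,2)$. In the directed case the hard quantity is the volume of the order polytope of $\mathcal{P}$. All edges, including those of $R$, carry i.i.d.\ random lengths, so this volume is isolated by interpolation. For $0\le y\le 1$, the function $F(y)$ is a polynomial in $y$ of degree at most $|E|$, since each simplex-type region scales homogeneously. Its coefficients, including the one encoding $\#$LE of $\mathcal{P}$, are recovered from $|E|+1$ evaluations at distinct small rational $y$, for example $y=i/(|E|+1)$. All these points lie in $[0,1)\subset[0,2)$, so each evaluation is answered by one oracle call to undirected SPPDF-IID at $L+y$.

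\textbf{Step 3 (longest path, $[-2,-1]$).} If $X_e$ is uniform on $[-2,-1]$, then $-X_e$ is uniform on $[1,2]$, and $X_{\rm MAX}=-X'_{\rm MIN}$ where $X'$ denotes the negated lengths. By continuity,
\[
F_{\rm MAX}(x)=1-F'_{\rm MIN}(-x),
\]
so an oracle for one problem answers the other, and hardness transfers. The same negation should also handle SPPDF with $[-2,-1]$ and LPPDF with $[1,2]$, if those combinations are required.

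\textbf{Main obstacle.} I expect Step 2 to be the hardest part. It requires confirming that the directed hardness really is witnessed by evaluations in the window $y<2$, and that no undirected path other than the directed ones has exactly $L$ edges. The second condition needs the layered, bipartite structure; if the directed construction lacks it, I would first subdivide edges to enforce it. Subdividing changes the per-edge distributions into sums of uniforms, so I would try to apply the polynomial-interpolation argument to the subdivided graph directly rather than reuse Theorem~\ref{th:directedhard} as a black box.
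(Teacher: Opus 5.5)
\textbf{Gap: you prove only half of the theorem.} Steps~1--3 show that SPPDF-IID is $\sharpp$-hard for lengths uniform on $[1,2]$, via the hop-count filter on the undirected transportation graph. By negation they also show that LPPDF-IID is $\sharpp$-hard on $[-2,-1]$. This is exactly the paper's Corollary~\ref{corollary:undirected_positive_shortest_negative_longest}.

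The theorem, however, asserts that \emph{both} problems are hard for \emph{each} distribution. The paper needs the separate Lemma~\ref{lemma:positiveLPPDFIID_negativeSPPDFIID} for the remaining pair: LPPDF-IID on $[1,2]$ and SPPDF-IID on $[-2,-1]$.

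Your remark that ``the same negation should also handle'' these cases is wrong. The map $\Vec{X}\mapsto-\Vec{X}$ sends LPPDF on $[1,2]$ to SPPDF on $[-2,-1]$ and back, so it only swaps the two cases you have not proved. The reflection $X_e\mapsto 3-X_e$ preserves $\Unif^+$ but does not help either. It turns $\sum_{e\in\pi}X_e$ into $3|\pi|-\sum_{e\in\pi}X_e$, which converts max into min only when all paths have the same number of edges.

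Your hop-count filter also works in the wrong direction for $F_{\rm MAX}$ with positive lengths. The event requires \emph{every} simple path to be short, including zigzags $s,u,v,u',v',\dots,t$ through $R$. A single $5$-edge path already forces $F_{\rm MAX}(x)=0$ for $x<5$. So the directed $3$-edge paths are never isolated. The nontrivial part of $F_{\rm MAX}$ is governed by the paths with the most edges, whose structure you do not control.

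\textbf{Missing idea.} You need a gadget that forces all candidate longest paths to have the same number of edges while still encoding the hard constant. The paper's construction works as follows.
\begin{enumerate}
\item In $\graphfive_R$, the source $s$ is attached to a cycle through $U^+\cup U^-\cup U$. Every longest-path candidate must then traverse all of $\hat U$ before leaving through a single $R$-edge.
\item The randomness of the always-used cycle edges is moved to a tail before $s$, giving $\graphsix_R$.
\item The cycle's contribution is evaluated via $\mathrm{Vol}(K_C)$ (Propositions~\ref{proposition:ring} and~\ref{proposition:KC}).
\item Lemmas~\ref{lemma:transportation_graph_with_tails} and~\ref{lemma:positiveLPPDFIID_negativeSPPDFIID} then show that $F_{\rm MAX}$ on $[3|U|+2,3|U|+3]$ is a monomial. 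Its coefficient is a known multiple of the hard constant $A$ in $\tilde\Psi_{\rm MAX}(\xi)=A\xi^{|U|+|V|+|R|}$.
\end{enumerate}
Without something of this kind, your argument covers only the easy half.

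\textbf{Smaller point on Step~2.} You should say which coefficient is hard. On $[0,1]$, inclusion--exclusion over sets of directed paths writes $F_{\rm MIN}$ as a polynomial. Its degree-$|E|$ term comes only from the set of all paths, because each $R$-edge lies on exactly one path. Hence the leading coefficient is $\pm A$, which is hard by Theorem~\ref{th:directedhard}. This coefficient is not literally the order-polytope volume, as your Step~2 suggests.
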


\item $\sharpp$-hardness Proof for Edge Length Distributions with Local Uniformity:
We restrict our focus to distributions belonging to the class defined as follows.
\begin{definition}
\label{definition:localuniformity}
\noindent {\bf Local Uniformity}:
For $I,J\subseteq\mathbb{R}$ and $\ell\in \mathbb{R}$,
a polynomial time computable distribution function $F(x)$ 
belongs to $\distfamily_\ell(I;J)$ if
(i) for any $x\in I$, $\frac{d}{dx}F(x)=f(x)> \ell$, 
(ii)$\left|\frac{d}{dx} f(x)\right|$ is bounded for $x\in I$, and
(iii) $f(x)=0$ for all $x\in J$.
\end{definition}
The key implication that
the local uniformity allows us to magnify $f(x)$ to approximate uniformness. 
Such an interval $I$ is commonly found among natural probability distributions. 
Since we require$f(x)=0$ for $x\in J$, 
we define the following for concise description.
\begin{definition}
Let $I\subseteq \mathbb{R}$.
We denote by $I_{\leftarrow}$ (resp. $I_{\rightarrow}$) a one-dimensional 
halfspace $\{x\in\mathbb{R}|x< y,~\forall y\in I\}$ 
(resp. $\{x\in\mathbb{R}|x> y,~\forall y\in I\}$).
\end{definition}
We employ the families $\distfamily_{\ell}(I;I_{\leftarrow})$ and
$\distfamily_{\ell}(I;I_{\rightarrow})$
to perform the magnification process, leading to our next theorem:
\begin{restatable}{theorem}{thDirectedHardManyDistributions}
\label{th:directedhard_manydistributions}
For a directed graph with $n$ vertices, SPPDF-IID and LPPDF-IID are both $\sharpp$-hard if
the distribution of the edge lengths is in $\distfamily_{\ell}(I;\emptyset)$
where $I=[x_0,x_0+\epsilon]$
for $\exists x_0\in \mathbb{R}$, $\ell=\Omega(2^{-\poly(n)})$ 
and $\epsilon=O(\ell/(n+1)!)$.
\end{restatable}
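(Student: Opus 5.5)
We reduce from Theorem~\ref{th:directedhard}, i.e., from the transportation graph $G=(\{s,t\}\cup U\cup V, R\cup P)$ built from a height-two poset $\mathcal{P}=(U\cup V,R)$. In that graph the edges of $R$ have length $0$ and the edges of $P$ are i.i.d.\ uniform on $[0,1]$. The value $F_{\rm MAX}(x)$ at a suitable point (and symmetrically $F_{\rm MIN}$) equals the volume of the order polytope, i.e., $e(\mathcal{P})/|U\cup V|!$.

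\textbf{Step 1: make all edges i.i.d.}
We replace each length-$0$ edge of $R$ by a subdivided path, or more precisely pad all edges so that every $s$--$t$ path has the same number of edges. Since $G$ has height two, every $s$--$t$ path has the form $s\to u\to v\to t$, so it already has exactly three edges. We give all three edges the same distribution $F\in\distfamily_\ell(I;\emptyset)$ with $I=[x_0,x_0+\epsilon]$. We then condition on the event $\mathcal{E}$ that every edge length lies in $I$. On $\mathcal{E}$ we write $X_e=x_0+\epsilon Y_e$ with $Y_e\in[0,1]$. Every path contributes the constant $3x_0$, so
\[
X_{\rm MAX}=3x_0+\epsilon\max_{(u,v)\in R}(Y_{su}+Y_{uv}+Y_{vt}).
\]
The $Y_{uv}$ terms are not zero, which spoils the exact order-polytope correspondence. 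To fix this I would alter the gadget so that the middle edge carries a different scale: replace each $R$-edge by a long chain, or instead use Theorem~\ref{th:directedhard}'s proof with the transportation graph where $P$-edges are chains of $m$ edges. The cleaner alternative is to allow the reduction to query $F_{\rm MAX}$ at polynomially many points and to interpolate. The repeated-convolution representation gives $F_{\rm MAX}$ on $\mathcal{E}$ as a piecewise polynomial of degree $\le |E|$ in $x$, whose coefficients encode the \#LE2H count (as in the proof of Theorem~\ref{th:directedhard}). We recover those coefficients by evaluating at $|E|+1$ points and solving a Vandermonde system.

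\textbf{Step 2: approximation error.}
On $I$ we have $f(x)=f(x_0)+O(\epsilon)$ with $f(x_0)>\ell$, by conditions (i) and (ii). The joint density on $I^{E}$ is therefore $f(x_0)^{|E|}(1+O(\epsilon/\ell))^{|E|}$, i.e., uniform up to relative error $O(|E|\epsilon/\ell)$. Paths that leave $\mathcal{E}$ must be handled by evaluating $F_{\rm MAX}$ and $F_{\rm MIN}$ only at $x\in[3x_0,3x_0+3\epsilon]$ and comparing with the boundary values. One caveat: since $J=\emptyset$, lengths below $x_0$ are possible. I would handle this by differencing. The quantity $\Pr[X_{\rm MAX}\le x,\ \mathcal{E}]$ is extracted as a finite combination of values of $F_{\rm MAX}$ on graphs where edges are replicated. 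Alternatively, one uses that outside-$I$ events contribute a term that is polynomial in $x$ of lower order on the window, and this term is identifiable.

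\textbf{Step 3: rounding.}
The target quantity is $e(\mathcal{P})$ divided by $N!$ with $N\le n$, so it is a rational number with denominator dividing $(n+1)!$. The choice $\epsilon=O(\ell/(n+1)!)$ makes the total relative error smaller than $1/(2(n+1)!)$. Rounding then recovers $e(\mathcal{P})$ exactly, and $\ell=\Omega(2^{-\poly(n)})$ keeps all numbers polynomial in bit-length.

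\textbf{Main obstacle.}
The hardest step is Step 2's treatment of edge lengths falling outside $I$ when $J=\emptyset$, since then $\Pr[\mathcal{E}]$ is not the whole mass. Related to this is making the nonzero middle edges harmless. I would first try to absorb both issues into the interpolation argument through the convolution formula.
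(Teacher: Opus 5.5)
\textbf{Main gap (Step~2).} The gap is the one you flag yourself in Step~2. Nothing in the proposal converts oracle values of $F_{\rm MAX}$ or $F_{\rm MIN}$ for the given law $F$ into the window-conditioned probability $\Pr[X_{\rm MAX}\le x\mid\mathcal{E}]$. With $J=\emptyset$, the event $\{X_{\rm MAX}\le 3x_0+\epsilon\}$ is not contained in $\mathcal{E}$, since an edge may fall anywhere below $x_0$. Moreover $\Pr[\mathcal{E}]=(F(x_0+\epsilon)-F(x_0))^{|E|}$ is typically tiny, so configurations outside $\mathcal{E}$ carry essentially all of $F_{\rm MAX}(3x_0+\epsilon)$.

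Neither of your remedies closes this. The outside contribution depends on the global shape of $F$, is in general not polynomial in $x$ (e.g.\ for the normal law), and is dominant rather than lower order, so no interpolation separates it. You also exhibit no finite combination of oracle calls that isolates $\mathcal{E}$; every call uses i.i.d.\ copies of the same $F$, whatever graph you build.

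The paper localizes differently, through the extremal event itself at a support boundary. Suppose $f=0$ below $a$, every $s$--$t$ path has exactly $k$ edges, and each edge lies on some path. Then $X_{\rm MAX}\le ka+\epsilon$ forces every $X_e\in[a,a+\epsilon]$, so $F_{\rm MAX}(ka+\epsilon)$ is the computable factor $F(a+\epsilon)^{|E|}$ times a nearly uniform probability (Lemma~\ref{lemma:Breakpoint-hard}). An upper tail of mass at most $\ell$ is then removed at the price of a factor between $(1-\ell)^{|E|}$ and $1$ in $F_{\rm MAX}$ (Lemma~\ref{lemma:infty_hard}). The theorem is read off for the three-edge transportation graph.

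Be aware that the paper's bridge from an interior window to such a boundary (Lemma~\ref{lemma:Breakpoint-hard-interval}) uses $F^{\tilde{\Vec{X}}}_{\rm MIN}(x_0)=F(c)^{-|E|}F^{\Vec{X}}_{\rm MIN}(x_0)$. This amounts to $\{X_{\rm MIN}\le x_0\}\subseteq\{X_e\le c\ \text{for all}\ e\}$, which already fails for two disjoint two-edge $s$--$t$ paths: one path can be short while an edge of the other exceeds $c$. So for laws with non-negligible mass on both sides of $I$, you would have to supply a genuinely new argument at exactly this step.

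\textbf{Step~1.} The Vandermonde step buys nothing. After rescaling to the window, the all-uniform LPPDF on $[0,1]$ is a single monomial $Ax^{|E|}$ (Proposition~\ref{proposition:constant}). One evaluation already yields $A$, and all the content lies in what $A$ encodes. The paper does not interpolate; it imports this from Theorem~\ref{th:directedhard}. That proof (Lemma~\ref{lemma:equalPsi} with Proposition~\ref{proposition:tail}) asserts $F_{\rm MAX}(x)=\frac{|R|!\,|P|!}{(|P|+|R|)!}x^{|R|}\Psi_{\rm MAX}(x)$ for the fully i.i.d.\ three-edge graph and $x\le 1$.

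Your unease about the middle edges is nonetheless justified. Take $U=\{u_1,u_2\}$, $V=\{v\}$, $R=\{u_1v,u_2v\}$. A direct computation gives $F_{\rm MAX}(x)=\int_0^x\bigl((x-w)^2/2\bigr)^2dw=x^5/20$, whereas that identity with $\Psi_{\rm MAX}(x)=x^3/3$ predicts $x^5/30$. So however you complete Step~2, make sure the uniform base case you reduce to has a hardness proof that actually holds, rather than resting on that identity.
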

We show some examples of the distributions that our $\sharpp$-hardenss applies
as follows.
\begin{corollary}
For directed graphs, SPPDF-IID and LPPDF-IID are 
$\sharpp$-hard when the edge lengths are drawn from any of the following distributions: 
uniform, exponential, gamma, normal, or Cauchy.
\end{corollary}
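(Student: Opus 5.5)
The plan is to derive the corollary from Theorem~\ref{th:directedhard_manydistributions}; for the uniform case we can also invoke Theorem~\ref{th:directedhard} directly, after an affine rescaling if the support is not $[0,1]$. For each listed distribution we have to exhibit, for every graph size $n$, a point $x_0$ and parameters $\ell,\epsilon$ such that the distribution function lies in $\distfamily_{\ell}(I;\emptyset)$ with $I=[x_0,x_0+\epsilon]$, $\ell=\Omega(2^{-\poly(n)})$ and $\epsilon=O(\ell/(n+1)!)$. Condition (iii) is vacuous because $J=\emptyset$, so only two things need checking on $I$: a positive lower bound on the density (condition (i)) and a bound on $|f'|$ (condition (ii)).

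The key steps are as follows.
\begin{enumerate}
\item Fix a point $x_0$ in the interior of the support, chosen independently of $n$. Concretely: for the exponential use $x_0=0$ or $1$, for the gamma with shape $k$ take $x_0=1$ (strictly inside $(0,\infty)$, which avoids the singularity at $0$ when $k<1$), for the normal and Cauchy take $x_0=0$, and for the uniform take any interior point.
\item Observe that $f$ is continuously differentiable near $x_0$ with $f(x_0)>0$. Hence there is a fixed neighbourhood $[x_0,x_0+\delta]$ on which $f\ge f(x_0)/2=:c>0$ and $|f'|\le M$, where $c$, $\delta$ and $M$ are constants depending only on the distribution's parameters.
\item Set $\ell=c/2$, which is a constant and therefore certainly $\Omega(2^{-\poly(n)})$, and choose $\epsilon=\min\{\delta,\ell/(n+1)!\}$. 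For large $n$ this gives $\epsilon=\ell/(n+1)!\le\delta$, so $I\subseteq[x_0,x_0+\delta]$ and conditions (i) and (ii) hold.
\item Check that $F$ is polynomial-time computable in the sense the paper needs. The densities are explicit elementary functions, and $F$ is given either in closed form (uniform, exponential, Cauchy via $\arctan$) or via standard special functions (the normal via $\mathrm{erf}$, the gamma via the incomplete gamma function). All of these are computable to $\poly(n)$ bits of precision in polynomial time.
\end{enumerate}

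The main obstacle is interpretive rather than computational. The definition of $\distfamily$ requires ``polynomial time computable'' $F$, yet for the normal and gamma distributions $F$ is only computable to any desired precision, not exactly in closed form. I would argue that the reduction in Theorem~\ref{th:directedhard_manydistributions} queries $F$ only to $\poly(n)$ bits, so approximate evaluation is enough. I would also note that the dependence of $\epsilon$ on $n$ is harmless, because the theorem's quantifier allows $x_0$ and $\epsilon$ to be chosen per instance size.
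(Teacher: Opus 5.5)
Your proposal is correct and follows the route the paper intends. The corollary is stated without a separate proof and is meant to follow from Theorem~\ref{th:directedhard_manydistributions} by exactly the membership check you carry out: a constant density lower bound $\ell$ and a derivative bound on a fixed neighbourhood of a point $x_0$ with $f(x_0)>0$, then shrinking $\epsilon$ to $O(\ell/(n+1)!)$, with condition (iii) vacuous since $J=\emptyset$. Your care about the gamma singularity at $0$ and about needing only $\poly(n)$-bit evaluations of $F$ fills in details the paper leaves implicit.
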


We extend the above result to the setting of undirected graphs
by incorporating parallel edges.
We observe that the resulting distribution class is somewhat 
constrained because our reduction relies on how we can separate
the support of the density from the origin.
\begin{restatable}{theorem}{thUndirectedHardManyDistributions}
\label{th:undirectedhard_manydistributions}
For an undirected graph with $n$ vertices, let $F(x)$ be the distribution 
function of the i.i.d. random edge lengths.
Let $I^+=[x_0,x_0+\epsilon]$ and $I^-=[-x_0-\epsilon,-x_0]$ 
for $\exists x_0\ge \epsilon$,
$\epsilon=O(\ell/(n+1)!)$ and $\ell=\Omega(2^{-\poly(n)})$. 
Also, let $x_\ell$ satisfy $\epsilon\le x_\ell \le x_0$.
Then, LPPDF-IID is $\sharpp$-hard if
(i) $F(x_\ell)\le 1-1/\poly(n)$, and 
(ii) there exists an $N=O(\poly(n))$ such that 
$F^N\in \distfamily_{\ell}(I^+;\emptyset)\cup\distfamily_{\ell}(I^-;I^-_\rightarrow)$.
Similarly, SPPDF-IID is $\sharpp$-hard if
(i) $F(-x_\ell)\ge 1/\poly(n)$, and
(ii) the distribution function $1-\bar F^N(x)$ of the minimum of $N$ parallel edge lengths 
belongs to $\distfamily_{\ell}(I^-;\emptyset)\cup \distfamily_{\ell}(I^+;I^+_\leftarrow)$, where $\bar F(x)=1-F(x)$.
\end{restatable}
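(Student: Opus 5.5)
We plan to reduce from the undirected uniform case, Theorem~\ref{th:undirectedhard}, in the same way that Theorem~\ref{th:directedhard_manydistributions} is obtained from Theorem~\ref{th:directedhard}. We treat LPPDF-IID in detail; SPPDF-IID follows by the symmetric argument with $\min$ in place of $\max$, $1-\bar F^N$ in place of $F^N$, and $x\mapsto -x$. Start from the undirected instance $H$ of Theorem~\ref{th:undirectedhard}, whose edge lengths are i.i.d.\ uniform on $[1,2]$ or on $[-2,-1]$. Replace every edge of $H$ by a bundle of $N$ parallel edges. For the longest path, a bundle contributes the maximum of $N$ i.i.d.\ copies, whose distribution function is $F^N$. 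By hypothesis (ii), $F^N\in\distfamily_\ell(I^+;\emptyset)\cup\distfamily_\ell(I^-;I^-_\rightarrow)$, so each bundle behaves like a single edge whose length is locally uniform on a tiny interval $I^\pm$ of width $\epsilon$ located at distance at least $x_0\ge\epsilon$ from the origin.

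Next we carry out the magnification exactly as in the directed case. Write the effective bundle density as $f_N=(F^N)'$. On $I^\pm$ it is bounded below by $\ell$, and its derivative is bounded, so $f_N$ is $\ell(1+O(\epsilon/\ell))$-close to a constant there. Conditioned on every bundle length lying in $I^\pm$, the affine map $x\mapsto (x-x_0)/\epsilon$ turns the lengths into nearly i.i.d.\ uniform $[0,1]$ variables. After shifting by $x_0/\epsilon$ they become uniform on a segment bounded away from $0$, which is the setting of Theorem~\ref{th:undirectedhard} once the scaling is matched to $[1,2]$ (or $[-2,-1]$). Via the repeated-convolution representation of the LPPDF, the total error in the relevant volume is at most $(n+1)!\cdot O(\epsilon/\ell)$ times the main term. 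The choice $\epsilon=O(\ell/(n+1)!)$ makes this error smaller than the granularity $1/(n+1)!$ of the order-polytope volume, so rounding recovers $\#$LE exactly. The requirement $\ell=\Omega(2^{-\poly(n)})$ keeps the bit-length of all numbers polynomial.

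The main obstacle is handling the event that some bundle does \emph{not} land in $I^\pm$, since in an undirected graph paths can detour and mass outside $I^\pm$ contaminates $F_{\rm MAX}$. Separation from the origin is used here. In the $I^+$ case we have $J=\emptyset$, so we must control mass above $x_0+\epsilon$. To do this we evaluate $F_{\rm MAX}$ at $2$ or $3$ suitable thresholds, or combine several instances with different multiplicities, and solve the resulting linear system for the contribution of the all-in-$I$ event. Hypothesis (i), $F(x_\ell)\le1-1/\poly(n)$, ensures that each bundle has probability at least $1/\poly(n)$ of its maximum falling in the relevant region. This keeps the coefficients of that system polynomially invertible.

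In the $I^-$ case, $J=I^-_\rightarrow$ forces every bundle length to be at most $-x_0$. Hence every long detour strictly decreases the path length, and only the structured paths of the Theorem~\ref{th:undirectedhard} construction matter. I expect the $I^+$ case, where the contributions outside $I$ must be isolated with polynomially bounded precision loss, to be the hardest step. My first attempt there is to use interpolation over the multiplicity $N'\le N$ of the bundles, treating $F_{\rm MAX}$ as a polynomial in the per-bundle probabilities.
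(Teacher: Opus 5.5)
Your frame---bundling each edge into $N$ parallel copies so that a bundle acts as one edge with distribution $F^N$, then magnifying as in Lemmas~\ref{lemma:Breakpoint-hard}--\ref{lemma:infty_hard} and reusing the undirected reduction---is the paper's. Your $I^-$ case is essentially Lemma~\ref{lemma:undirectedhard_negativepositive}. The gap is in the $I^+$ case, where you fight the wrong tail. The hypothesis $F^N\in\distfamily_\ell(I^+;\emptyset)$ says nothing about the mass of $F^N$ below $x_0$, in particular near or below the origin. That mass is what breaks the undirected reduction: the analysis of $\graphfive_R$ needs every edge length positive and bounded away from $0$, so that every longest-path candidate sweeps the whole cycle $\hat C$ and has the same number of edges. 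You even note that ``separation from the origin is used here'', but then discuss only mass above $x_0+\epsilon$. The upper tail needs no linear system: for positive lengths it is handled by the truncation and sandwich arguments of Lemmas~\ref{lemma:Breakpoint-hard} and~\ref{lemma:infty_hard}.

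The missing idea is that (i) is an \emph{amplification} hypothesis aimed at exactly this lower tail. It says a single edge exceeds $x_\ell$ with probability at least $1/\phi(n)$. So with $N_2=\lceil\phi(n)\ln 2\rceil$ copies, $F^{N_2}(x_\ell)\le(1-1/\phi(n))^{N_2}\le e^{-N_2/\phi(n)}\le 1/2$, and with $N=N_1N_2$ copies, $F^{N}(x_\ell)\le 2^{-N_1}$. Thus every bundle lies above $x_\ell\ge\epsilon$ except with probability at most $|E|\,2^{-N_1}$, which is negligible for a suitable polynomial $N_1$. Truncating the bundle distribution at $x_\ell$ then leaves a distribution supported away from the origin on the positive side, and Lemma~\ref{lemma:undirectedhard_negativepositive} finishes. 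Your reading of (i)---probability $1/\poly(n)$ of landing in a ``relevant region'', used to keep an unspecified system invertible---reverses its role. Interpolating over multiplicities $N'\le N$ is not available anyway, since (ii) asserts local uniformity only for the one given $N$. A smaller slip: per-edge density ratios within $1\pm O(\epsilon/\ell)$ give a relative error of $\poly(n)\cdot O(\epsilon/\ell)$, not $(n+1)!\cdot O(\epsilon/\ell)$. With your bound, $\epsilon=O(\ell/(n+1)!)$ would leave a constant relative error.
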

For LPPDF-IID, 
the distribution function class 
$\distfamily_{\ell}(I^+;\emptyset)\cup \distfamily_{\ell}(I^-;I^-_\rightarrow)$ 
for $x_0\ge \epsilon$ represents distributions
whose maximum of polynomially many edge lengths 
has local uniformity in $I^+$ or $I^-$, and
we can separate the support from the $\epsilon$-neighborhood of the origin
by replacing each edge by $N$ parallel edges.
The condition $F(x_\ell)\le 1-1/\poly(n)$ is necessary for ignoring the probability
for $x<x_\ell$ by the parallel edges.
In the union, the earlier $\distfamily_{\ell}(I^+;\emptyset)$ 
and the latter $\distfamily_{\ell}(I^-;I^-_\rightarrow)$ are not symmetry
to each other since the parallel edge pushes the random edge lengths toward 
the positive direction in LPPDF-IID.
We have a symmetry analogue for SPPDF-IID. 
Note that the parallel edge works for the minimum of edge lengths, which
pushes the random edge lengths to the negative direction. 
To show some well-known distributions, we state the following.
\begin{corollary}
For an undirected graph $G=(V,E)$ with $n$ vertices,
SPPDF-IID (resp. LPPDF-IID) is $\sharpp$-hard 
if the length of each edge is $X_e-x_0$ (resp. $X+x_0$) 
for $x_0\ge 2^{-\poly(n)}$
where $X_e$ for $e\in E$ is a random variable drawn from
uniform over $[0,1]$, exponential, gamma, normal or Cauchy.
\end{corollary}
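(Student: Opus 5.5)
The plan is to derive this corollary directly from Theorem~\ref{th:undirectedhard_manydistributions}. For each listed distribution we exhibit parameters $x_0$, $\epsilon$, $\ell$, $x_\ell$ and $N$ that satisfy conditions (i) and (ii) of that theorem. We treat LPPDF-IID with lengths $X_e+x_0$; SPPDF-IID with lengths $X_e-x_0$ then follows by the reflection $x\mapsto -x$, which exchanges maxima and minima and sends $F$ to $1-F(-x)$. Throughout we take $N=1$, so no parallel edges are needed and $F^N=F$ is just the distribution function of $X_e+x_0$. We aim for $F\in\distfamily_{\ell}(I^+;\emptyset)$ with $I^+=[x_0,x_0+\epsilon]$. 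Choosing $J=\emptyset$ avoids having to verify any vanishing of the density.

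\emph{Step 1 (local uniformity near $x_0$).} The density of $X_e+x_0$ on $I^+$ equals the density $g$ of $X_e$ on $[0,\epsilon]$. For each family we check two facts. First, $g$ is bounded below by a constant $\ell$ on $[0,\epsilon]$: this holds for uniform on $[0,1]$, for exponential with rate $\lambda$ (near $0$), for normal and for Cauchy, each with fixed parameters, so $\ell=\Omega(1)\supseteq\Omega(2^{-\poly(n)})$. Second, $|g'|$ is bounded on $[0,\epsilon]$; each of these densities is smooth there, or constant in the uniform case. We then set $\epsilon=\Theta(\ell/(n+1)!)$ and take $x_0\ge\epsilon$. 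This is consistent with the hypothesis $x_0\ge 2^{-\poly(n)}$: if the given $x_0$ is smaller than $\epsilon$, we shrink $\epsilon$ further, which is permissible because the theorem only requires $\epsilon=O(\ell/(n+1)!)$. Polynomial-time computability of $F$ holds for all of these families to the precision required.

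\emph{Step 2 (condition (i)).} Take $x_\ell=\epsilon\le x_0$. Then $F(x_\ell)=\Pr[X_e\le \epsilon-x_0]\le \Pr[X_e\le 0]$. This probability is $0$ for the uniform, exponential and gamma families and $1/2$ for the normal and Cauchy families. In every case it is at most $1-1/\poly(n)$.

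\emph{The gamma family.} This is the main obstacle. For shape $k>1$ the density vanishes at $0$, and for $k<1$ its derivative is unbounded there, so the window $[0,\epsilon]$ fails. The fix we would try first is to recentre. Take the interval at a point $y_0>0$ where the gamma density is positive and smooth, for instance its mode or $y_0=1$, and set $I^+=[x_0+y_0,\,x_0+y_0+\epsilon]$. We then apply the theorem with $x_0'=x_0+y_0\ge\epsilon$. Condition (i) is unchanged, since $F(x_\ell)=0$ for any $x_\ell\le x_0$. The same recentring trick handles any of the other distributions should a parameter choice make the density degenerate at $0$.
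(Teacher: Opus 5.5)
Your LPPDF-IID half follows the intended route. The corollary is offered as a direct instance of Theorem~\ref{th:undirectedhard_manydistributions}, and your parameter choices ($N=1$, $x_\ell=\epsilon\le x_0$, shrinking $\epsilon$ when $x_0$ is tiny, recentring for the gamma family) do verify its hypotheses. One small caveat: for the uniform and exponential densities you must read $f'$ one-sidedly at the jump, or nudge the window to the right.

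The gap is the sentence that derives SPPDF-IID ``by reflection''. By Proposition~\ref{proposition:minmaxsymmetry}, SPPDF-IID with lengths $X_e-x_0$ is LPPDF-IID with lengths $x_0-X_e=(-X_e)+x_0$. So Steps~1 and~2 must be redone for $-X_e$, not for $X_e$. Concretely, you need the density of $X_e$ to be bounded below on $[-\epsilon,0]$ (or at some $-y_0<0$ after recentring), and you need $\Pr[X_e<x_0-x_\ell]\ge 1/\poly(n)$. This is harmless for the normal and Cauchy families, which are closed under $x\mapsto -x$. For the uniform over $[0,1]$, exponential and gamma families, however, $X_e\ge 0$. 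Then $x_0-X_e$ has no density on $(x_0,\infty)$, and Step~1 fails outright: the reflected law $1-F(-x)$ is not on your list, and you never re-check the hypotheses for it.

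Checking the theorem's SPPDF-IID hypotheses directly does not rescue these three cases either. Condition~(i) demands $\Pr[X_e-x_0\le -x_\ell]=\Pr[X_e\le x_0-x_\ell]\ge 1/\poly(n)$ for some $x_\ell\ge\epsilon>0$. For nonnegative $X_e$ this probability is at most $\Pr[X_e\le x_0]$: at most $x_0$ for the uniform law and at most $\lambda x_0$ for the exponential law. The statement allows $x_0$ as small as $2^{-\poly(n)}$, and then (i) fails for every admissible $x_\ell$, whatever window and $N$ you choose. The reason is that $X_e-x_0$ has only exponentially small negative mass. Even the minimum of polynomially many parallel copies is then positive with overwhelming probability, so the parallel-edge trick behind the theorem cannot push the effective edge lengths below $-\epsilon$.

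What your argument actually establishes for SPPDF-IID is the following. The normal and Cauchy cases go through. The uniform, exponential and gamma cases go through only under the additional assumption $x_0\ge 1/\poly(n)$. In that regime no reflection is needed: $x_\ell=\epsilon$ satisfies (i), and a suitably placed window $I^-$ inside the negative part of the support of $X_e-x_0$ satisfies (ii) with $N=1$. Covering exponentially small $x_0$ in those three cases would require an idea beyond Theorem~\ref{th:undirectedhard_manydistributions}.
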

It is not very likely that the translation $x_0$ is essential.
We conjecture as follows.
\begin{conjecture}
For an undirected graph with $n$ vertices, 
SPPDF-IID and LPPDF-IID are both $\sharpp$-hard if
the distribution of the edge lengths is in $\distfamily_{\ell}(I;\emptyset)$
where $I=[x_0,x_0+\epsilon]$
for $\exists x_0\in \mathbb{R}$, $\ell=\Omega(2^{-\poly(n)})$ 
and $\epsilon=O(\ell/(n+1)!)$.
\end{conjecture}

\item Exact $\xp$ Algorithm with Respect to Treewidth:
We present an algorithm that computes the exact distribution functions for both shortest and longest path lengths, specifically when edge lengths are i.i.d. uniform random variables. Our key result is summarized in Theorem \ref{th:XPalgorithm}:
\begin{restatable}{theorem}{thXPalgorithm}
\label{th:XPalgorithm}
  Given a directed graph $G=(V,E)$ and its tree decomposition with width $k$,
  we can solve SPPDF-IID and LPPDF-IID in
  $O(n^{2k^2+2k+2})$ time if the edge lengths obey the uniform distribution
  over $[0,1]$ or $[-1,0]$.
\end{restatable}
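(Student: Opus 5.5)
We will prove Theorem~\ref{th:XPalgorithm} by dynamic programming over a nice tree decomposition, storing at each node a piecewise polynomial function of the vertex ``potentials'' in the current bag. We treat LPPDF-IID with lengths uniform over $[0,1]$; the $[-1,0]$ case and SPPDF-IID follow by negating lengths and swapping $\min$ and $\max$.

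\textbf{Reformulation.} For a DAG, $X_{\rm MAX}\le x$ holds iff there are potentials $d:V\to\mathbb{R}$ with $d(s)=0$, $d(v)\ge d(u)+X_{uv}$ for every edge, and $d(t)\le x$. Taking $d(v)$ to be the longest $s$--$v$ path length gives a canonical choice. Hence
\[
F_{\rm MAX}(x)=\Pr\Bigl[\bigwedge_{(u,v)\in E} X_{uv}\le d(v)-d(u)\Bigr],
\]
where $d$ is the longest-path potential. Because the $X_e$ are independent uniforms, conditioning on the potentials of the bag vertices makes the contributions of the two sides of a separator independent. At node $i$ of a nice tree decomposition with bag $B_i$ ($|B_i|\le k+1$), we store the joint distribution (density or CDF in the bag coordinates) of the partial longest-path values $d_i(v)$, $v\in B_i$. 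These partial values are taken over paths using only edges introduced in the subtree of $i$, with a consistent edge-introduction rule so that every edge is handled exactly once.

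\textbf{Operations.} Each table entry is a piecewise polynomial on the arrangement of hyperplanes $y_u-y_v\in\mathbb{Z}$ and $y_u\in\mathbb{Z}$, with $|y|\le n$. This is the key representation claim, and we prove it by induction on the operations:
\begin{itemize}
\item An \emph{introduce-vertex} step adds a coordinate.
\item An \emph{introduce-edge} step $(u,v)$ maps $y_v\mapsto\max(y_v,\,y_u+X)$. For the CDF this multiplies by $\Pr[y_u+X\le z_v]$, which is a convolution with the uniform density and is piecewise linear in $z_v-y_u$.
\item A \emph{forget} step integrates, i.e.\ marginalizes, one coordinate.
\item A \emph{join} step combines the two children. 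The children depend on disjoint edge sets, so conditioned on the bag values they are independent. We pass to CDFs in the bag coordinates and take the product, since the max over two independent families has CDF equal to the product of CDFs.
\end{itemize}
Directed paths can leave and re-enter a bag, so the state must record $d_i(v)$ relative to one another for all pairs. This is why we track differences $y_u-y_v$, not just absolute values.

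\textbf{Complexity and main obstacle.} The breakpoints of each function are restricted to the hyperplanes $y_u-y_v=c$ and $y_u=c$ with integer $c\in[-n,n]$. A bag has $O(k^2)$ such differences, each with $O(n)$ possible values, which gives $n^{O(k^2)}$ cells. On each cell the function is a polynomial of degree at most $n$ in $k+1$ variables, with rational coefficients of polynomial bit length. Each operation costs polynomial time per cell. With $O(n)$ nodes, this yields the claimed $O(n^{2k^2+2k+2})$ bound after careful counting of cells (roughly $n^{2k^2}$) and monomials/degree (roughly $n^{2k+2}$).

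The main obstacle is showing that forget (integration) and join (product) keep all breakpoints on integer-offset difference hyperplanes. Integrating out $y_w$ over a cell produces new bounds of the form $y_u-y_v+c$ with integer $c$, so the arrangement is closed under these operations. This needs a careful closure argument for the arrangement class. We would also verify that the degree grows by at most one per introduced edge, so it stays at most $|E|$, keeping the per-cell polynomials within polynomial size. Finally, at the root we evaluate at $d(s)=0$, $d(t)\le x$ to read off $F_{\rm MAX}(x)$.
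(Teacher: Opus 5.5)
\textbf{The DP state is not a valid summary, even for DAGs.} Your operations are: a law of the $k+1$ bag potentials, one relaxation $y_v\mapsto\max(y_v,y_u+X)$ per introduced edge, and a product of CDFs at joins. They rest on the claim that conditioning on the bag potentials makes the two sides of a separator independent, and that claim is false. Write $F$ for the edge-length CDF and $f=F'$. Suppose $v\in B$ has in-neighbours $a,b$ that are forgotten in different children of a join node. Then $v$ contributes to the joint density of the canonical potentials the factor
\[
\frac{\partial}{\partial d_v}\bigl[F(d_v-d_a)\,F(d_v-d_b)\bigr]=f(d_v-d_a)\,F(d_v-d_b)+F(d_v-d_a)\,f(d_v-d_b).
\]
For $F=\Unif$ this equals $p+q$, with $p=d_v-d_a$ and $q=d_v-d_b$ in $[0,1]$. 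That is not a product of a function of $d_a$ and a function of $d_b$. The subtrees stay coupled through \emph{which} in-edge attains the maximum, so multiplying the children's CDFs gives the wrong function.

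Two related problems. Your displayed identity is trivially $1$ for the canonical $d$, because the condition $d(t)\le x$ is missing. Moreover, products $\prod F(d_v-d_u)$ measure whether a \emph{fixed} $d$ is feasible, not whether a feasible $d$ exists.

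There is also a separate failure: one relaxation per edge in decomposition order is Bellman--Ford in a non-topological order. If $(a,b)$ is introduced before $(s,a)$, then $d_b$ is never updated after $d_a$ grows. Paths that cross separators repeatedly are therefore lost, and recording differences of the same $k+1$ coordinates does not recover them.

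The missing idea is the derivative structure of Theorem~\ref{th:DAGLP}:
\begin{enumerate}
\item integrate over the potentials as free variables against $\prod_u\partial_{z_u}\prod_{v\in V_u}F(z_u-z_v)$;
\item expand each derivative as a sum over the tight edge;
\item carry that choice in the state, e.g.\ a flag per bag vertex recording whether its tight edge has already been placed.
\end{enumerate}
In the paper this is the job of the vertex gadget $\mathcal{V}_i$. Its DNF terms choose one in-edge and one out-edge per vertex (functional subgraphs), and it is combined with the edge gadgets $\mathcal{E}_i$ by local convolutions.

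\textbf{Cycles are not covered.} The theorem concerns arbitrary directed graphs. The paper's recurrence is set up to admit cycles: it uses two variables $z_{uv,u},z_{uv,v}$ per edge, the sentinel vertex, and the argument that cyclic selections have measure zero. Your reformulation is stated only for DAGs. For LPPDF with $[0,1]$ lengths (equivalently SPPDF with $[-1,0]$) it is false once a directed cycle is present. Take edges $sa,at,ab,ba$: the only simple $s$--$t$ path is $s\,a\,t$, so $X_{\rm MAX}=X_{sa}+X_{at}$ is finite. Yet $d(a)\ge d(b)+X_{ba}\ge d(a)+X_{ab}+X_{ba}$ has no solution almost surely. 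So this part of the statement is not reached even after the DAG argument is repaired.

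Finally, the exponent $2k^2+2k+2$ is asserted ``after careful counting'' rather than derived from a bound on cells and monomials.
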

The result confirms that the presence of cycles within a graph does not 
significantly increase the problem's hardness with respect to $\xp$. 
\end{enumerate}
\smallskip
\noindent{\bf Organization.}
The remainder of this paper is organized as follows.
Section 2 presents the overview of our techniques.
Section 3 shows detailed notations and formalizes 
the shortest and longest path problems. 
Section 4 establishes the $\sharpp$-hardness for i.i.d. continuous edge lengths,
proving Theorems~\ref{th:directedhard}, \ref{th:undirectedhard}, \ref{th:directedhard_manydistributions}, and \ref{th:undirectedhard_manydistributions}.
In Section 5, we present the $\xp$ algorithm for i.i.d. uniform edge lengths,
where we prove Theorem~\ref{th:XPalgorithm}.
Section 6 shows the conclusion. 
In addition, we have Section 7 for supplemental proofs: 
Propositions~\ref{proposition:convolution}, \ref{proposition:PsiMax}, \ref{proposition:constant_min}, and Theorem~\ref{th:DAGLP}.

\section{Technical Overview}
\subsection{$\sharpp$-hardness for Directed/Undirected Graphs with i.i.d. Uniformly Distributed Edge Lengths}
For our $\sharpp$-hardness reduction concerning Theorem~\ref{th:directedhard}, 
we restrict our focus to the interval $0\le x\le 1$ for $F_{\rm MAX}(x)$. 
A crucial property of the uniform distribution is that when edge lengths are 
i.i.d. uniformly distributed over $[0,1]$, 
$F_{\rm MAX}(x)$ takes the form of a monomial over the interval. 
Furthermore, the monomiality is preserved even if 
we constrain some edge lengths to fixed value zero.
\begin{proposition}
\label{proposition:constant}
Let $G=(V,E)$ be a graph.
We assume that there are $m_0$ edges with static length $0$ and
$m_1=|E|-m_0$ edges with uniform random lengths i.i.d. over $[0,1]$.
For $0\le x \le 1$, we have $F_{\rm MAX}(x)=Ax^{m_1}$ for a constant $A$. 
\end{proposition}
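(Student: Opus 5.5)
The proof is a scaling argument. Write the random edge lengths as a vector $\Vec{X}\in[0,1]^{m_1}$, indexed by the random edges, with the static edges fixed at $0$. The event $X_{\rm MAX}\le x$ is the set
\[
S(x)=\Bigl\{\Vec{X}\in[0,1]^{m_1} : \sum_{e\in\pi}X_e\le x~\text{for all}~\pi\in\Pi_G\Bigr\},
\]
where the sum over $\pi$ involves only the random edges on $\pi$, since static edges contribute $0$. The $X_e$ are i.i.d. uniform on $[0,1]$, so their joint density is $1$ on the unit cube. Hence $F_{\rm MAX}(x)=\mathrm{vol}(S(x))$.

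Next we remove the cube constraint. Let $C(x)=\{\Vec{X}\in\mathbb{R}_{\ge 0}^{m_1} : \sum_{e\in\pi}X_e\le x~\forall\pi\in\Pi_G\}$.

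\begin{itemize}
\item If every random edge lies on at least one $s$--$t$ simple path, then any $\Vec{X}\in C(x)$ with $x\le 1$ satisfies $X_e\le \sum_{e'\in\pi}X_{e'}\le x\le 1$. So $C(x)\subseteq[0,1]^{m_1}$ and $S(x)=C(x)$.
\item A random edge $e$ on no $s$--$t$ path is unconstrained. It contributes an independent factor $\Pr[0\le X_e\le 1]=1$, but then the exponent would count only the relevant edges.
\end{itemize}

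I will therefore state that $m_1$ counts the random edges lying on some $s$--$t$ path, or equivalently assume that irrelevant edges have been pruned, which is harmless in the reductions. Alternatively, one can absorb the irrelevant edges into the constant by noting they contribute a factor $1$ in $[0,1]$. I expect this bookkeeping to be the only delicate point.

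Finally, $C(x)$ is a cone slice: since every constraint is homogeneous, $C(x)=x\,C(1)$ for $x\ge 0$. Linear scaling by $x$ in $\mathbb{R}^{m_1}$ multiplies volume by $x^{m_1}$, so
\[
F_{\rm MAX}(x)=\mathrm{vol}(C(1))\,x^{m_1}\qquad (0\le x\le 1).
\]
Setting $A=\mathrm{vol}(C(1))=F_{\rm MAX}(1)$ completes the proof. This constant is finite because $C(1)\subseteq[0,1]^{m_1}$.

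The case $x=0$ and measure-zero boundary issues (strict versus non-strict inequalities) do not affect the volume, so no further work is needed.
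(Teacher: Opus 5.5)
Your proof is correct and is the same scaling argument as the paper's: $F_{\rm MAX}(x)$ is the volume of the polytope cut out by $\Vec{x}\ge\Vec{0}$ and the homogeneous path constraints, and this polytope dilates linearly in $x$. Your caveat fills a gap rather than creating one: the paper's proof silently drops the cube constraints $\Vec{x}\le\Vec{1}$, which is valid only when every random edge lies on some $s$--$t$ path (otherwise the exponent is the number of such edges rather than $m_1$, so your ``absorb into the constant'' alternative changes the exponent too, not just the constant).
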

\begin{proof}
By definition, for $0\le x\le 1$, $F_{\rm MAX}(x)$ corresponds to the volume of 
an $m_1$-dimensional polytope $K(x)$. 
The facet defining constraints are given by $\Vec{x}\ge \Vec{0}$ and 
$\sum_{e\in \pi}x_e\le x$ for all $\pi \in \Pi_G$. 
Since every vertex position of $K(x)$ is scaled linearly by $x$, 
its volume is proportional to a monomial in $x$, 
implying the proposition.
\end{proof}
Thus, difficulty of computing the constant $A$ determines 
the $\sharpp$-hardness of the problem. Notably,
Proposition~\ref{proposition:constant} holds regardless of whether 
the graph is directed or undirected. 
While the placement of edges with random lengths affects the numerical 
constant factor, we focus on $F_{\rm MAX}(x)$ (LPPDF-IID) because 
the choice maintains a cleaner probability description. 
We will defer the introduction of the analogue for 
$F_{\rm MIN}(x)$ until Proposition~\ref{proposition:constant_min} 
on page \pageref{proposition:constant_min}. Instead, we remark
that the results for LPPDF-IID holds symmetrically for SPPDF-IID 
as long as we are concerned with simple paths.
\begin{restatable}{proposition}{proposition_minmaxsymmetry}
\label{proposition:minmaxsymmetry}
Consider a directed graph $G=(V,E)$ 
and its random edge length vector $\Vec{X}\in\mathbb{R}^E$.
For any $x\in\mathbb{R}$ satisfying $\Pr[X_e=x]=0$ for any $e\in E$,
we have $\bar F_{\rm MIN}^{\Vec{X}}(x)=F_{\rm MAX}^{-\Vec{X}}(-x)$. 
\end{restatable}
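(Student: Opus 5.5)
The identity follows from negating every edge length: minima become maxima with the sign flipped, because the set of paths $\Pi_G$ depends only on $G$ and not on the lengths.

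First I show the pointwise identity. For every realization $\Vec{X}$ and every $\pi\in\Pi_G$ we have $\sum_{e\in\pi}(-X_e)=-\sum_{e\in\pi}X_e$. Since $\Pi_G$ is a finite set (simple paths only), $\min$ and $\max$ are attained, and $\min_{\pi}a_\pi=-\max_{\pi}(-a_\pi)$. Hence
\[
X_{\rm MIN}^{\Vec{X}}=-X_{\rm MAX}^{-\Vec{X}}
\]
holds for every outcome. Using simple paths is essential here. For walks with cycles, negation could turn a bounded minimum into an unbounded maximum. This is the only place where the structure of $\Pi_G$ enters.

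Next I translate this into distribution functions. By definition, $\bar F_{\rm MIN}^{\Vec{X}}(x)=\Pr[X_{\rm MIN}>x]=\Pr[-X_{\rm MAX}^{-\Vec{X}}>x]=\Pr[X_{\rm MAX}^{-\Vec{X}}<-x]$. The target is $F_{\rm MAX}^{-\Vec{X}}(-x)=\Pr[X_{\rm MAX}^{-\Vec{X}}\le -x]$. The two therefore differ by $\Pr[X_{\rm MAX}^{-\Vec{X}}=-x]$, which equals $\Pr[X_{\rm MIN}^{\Vec{X}}=x]$. It remains to show this atom is zero.

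This atom is the main obstacle. The event $X_{\rm MIN}=x$ is contained in $\bigcup_{\pi\in\Pi_G}\{\sum_{e\in\pi}X_e=x\}$, a finite union, so it suffices to bound each path-sum atom. I would argue this using the stated hypothesis $\Pr[X_e=x]=0$. In the intended settings the edge lengths are independent and each is either continuous or a static constant. If $\pi$ contains a random edge $e$ with a continuous law, then conditioning on the other summands leaves $X_e$ equal to a single value, which has probability $0$, so $\Pr[\sum_{e\in\pi}X_e=x]=0$. The remaining case is a path made only of constant edges, for example length-$0$ edges with $x=0$. Such a path can produce an atom, and the hypothesis of the proposition, read as excluding values $x$ that are atoms of the path-length laws, is exactly what removes this case. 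I would state this interpretation explicitly. In the applications $x$ ranges over $(0,1]$, where no all-constant path length can equal $x$, so the atom vanishes and the proposition follows.
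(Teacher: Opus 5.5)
Your proof is correct and is essentially the paper's argument, which negates each path constraint, $\sum_{e\in\pi}X_e>x \Leftrightarrow -\sum_{e\in\pi}X_e<-x$, to turn the conjunction defining $\bar F_{\rm MIN}^{\Vec{X}}(x)$ into the one defining $F_{\rm MAX}^{-\Vec{X}}(-x)$. The paper silently replaces $<$ by $\le$ in that step, so your explicit isolation of the atom $\Pr[X_{\rm MIN}=x]$ is a genuine tightening. Your doubt about the hypothesis is also justified: the per-edge condition $\Pr[X_e=x]=0$ alone does not suffice. For example, a path of two static edges of length $1/2$ with $x=1$ gives $0$ on the left and $1$ on the right. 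Your reading, which holds for independent lengths with a continuous edge on every path, does suffice.
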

\begin{proof}
Remember that $\Pi_G$ is the set of all simple $s-t$ paths in $G$. 
By definition, we have
\begin{align*}
    \bar{F}_{\rm MIN}^{\Vec{X}}(x)=\Pr\left[\bigwedge_{\pi\in \Pi_G}\left(\sum_{e\in \pi} X_e > x\right)\right]=
    \Pr\left[\bigwedge_{\pi\in \Pi_G}\left(-\sum_{e\in \pi} X_e \le -x\right)\right]=F_{\rm MAX}^{-\Vec{X}}(-x).
\end{align*}
\end{proof}

Our $\sharpp$-hardness results are built upon the work of Dittmer and Pak~\cite{DP2020}. 
By incorporating both edges with random lengths and those with static lengths, 
we show that solving our problem on a corresponding transportation graph 
is equivalent to calculating $\#$LE2H, which is $\sharpp$-complete.
\begin{restatable}{definition}{definitionLEtwoH}
\label{definition:LE2H}
{\bf $\#$LE2H}: 
Given a height-2 poset $\mathcal{P}=(U\cup V,R)$, 
where $R$ is a partial order subset of $U\times V$. 
Compute the number of permutations on $U\cup V$ that 
do not contradict the partial order $R$.
\end{restatable}
\begin{restatable}{theorem}{thDP}
\cite{DP2020}
\label{th:DP2020}
$\#$LE2H is $\sharpp$-complete.
\end{restatable}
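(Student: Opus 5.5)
Since the paper cites this result from \cite{DP2020}, it needs no new proof here. The following is the route I would take to reprove it.

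Membership in $\sharpp$ is immediate. A candidate permutation of $U\cup V$ can be checked against every pair in $R$ in polynomial time. So $\#$LE2H counts the accepting certificates of a polynomial-time verifier.

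For hardness I would reduce from the general $\#$LE problem, which Brightwell and Winkler~\cite{BW1991} proved $\sharpp$-complete for posets of height at least three. I would not rebuild that chain of reductions from $\#$SAT. The plan is to compile an arbitrary poset $\mathcal{Q}$ into a polynomial family of height-two posets $\mathcal{P}_1,\dots,\mathcal{P}_M$. The counts $e(\mathcal{P}_i)$ should be explicit, efficiently invertible combinations of $e(\mathcal{Q})$ and auxiliary quantities, so that querying an $\#$LE2H oracle on all $\mathcal{P}_i$ lets us solve for $e(\mathcal{Q})$.

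The concrete steps, in order:
\begin{enumerate}
\item Bipartize $\mathcal{Q}$. Each element $q$ becomes a minimal copy $q^U\in U$ and a maximal copy $q^V\in V$. Each cover relation $q<q'$ becomes $q^U<q'^V$, and each $q$ gets the relation $q^U<q^V$.
\item Attach to each pair $(q^U,q^V)$ a gadget of many extra elements with a tunable parameter $t$. The gadget should penalize, by a known weight in $t$, the linear extensions in which the two copies are far apart.
\item Show that $e(\mathcal{P}(t))$ is a polynomial (or a rational function of controlled degree) in $t$. Its leading or lowest coefficient should count exactly the extensions in which each pair $q^U,q^V$ is adjacent, and such extensions correspond to linear extensions of $\mathcal{Q}$.
\item Evaluate at polynomially many values of $t$ and recover that coefficient by Vandermonde interpolation, all in polynomial time with numbers of polynomial bit-length.
\end{enumerate}

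The main obstacle is step 2. We need height-two gadgets whose contribution to the count is exactly computable and sufficiently separating. The naive bipartization loses transitivity, because $q^U<q'^V$ and $q'^U<q''^V$ do not force anything between $q$ and $q''$. So the gadget must genuinely glue $q^U$ to $q^V$ in the dominant term.

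My first attempt would be a large antichain of $N$ fresh $V$-elements placed above $q^U$ only, together with $N$ fresh $U$-elements placed below $q^V$ only. Then the number of ways to interleave them grows like a factorial in the gap between $q^U$ and $q^V$, and extensions with a zero gap dominate by a factor exponential in $N$. If clean interpolation fails, I would fall back on Dittmer and Pak's actual device: realize arbitrary rational ratios of linear-extension counts by height-two posets through a continued-fraction (Euclid-type) staircase construction, and then combine these ratios multiplicatively to encode a $\sharpp$-hard quantity.
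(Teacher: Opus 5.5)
Your opening move matches the paper, which gives no argument for this theorem and simply cites \cite{DP2020}. Your membership argument for $\sharpp$ is correct. Your step 1 is also sound: in an extension of the bipartization where every $q^V$ immediately follows $q^U$, a relation $q^U<q'^V$ forces $q^U$ before $q'^U$, so contracting the pairs gives a bijection with the linear extensions of $\mathcal{Q}$.

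The independent hardness route, however, has a genuine gap exactly at step 2, and your first gadget pushes in the wrong direction. Take a single pair $u=q^U<v=q^V$, add $N$ new maximal elements above $u$ and $N$ new minimal elements below $v$. The number of linear extensions with exactly $g$ elements strictly between $u$ and $v$ is $(N!)^2\sum_{c}\binom{g}{c}$, summed over $\max(0,g-N)\le c\le\min(g,N)$. For $g\le N$ this is $(N!)^2 2^g$; for $N=1$ the counts for $g=0,1,2$ are $1,2,2$. So the adjacent configuration receives the \emph{smallest} weight, not a dominant one. Your own remark that the interleaving count grows with the gap already says this.

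The reason is structural. In a height-two poset $q^U$ is minimal and $q^V$ is maximal. Anything you attach is comparable to $q^U$ only from above and to $q^V$ only from below. Nothing can be forced between them without creating a three-element chain. Such relations reward moving $q^U$ earlier and $q^V$ later, which widens the gap.

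Step 3 is then unsupported as well. With identical gadgets on all pairs, the number of ways to insert the gadget elements into a core extension is a function of that extension's entire $U/V$ pattern of positions. The all-adjacent extensions are exactly those with pattern $(UV)^{|\mathcal{Q}|}$. But there are exponentially many other patterns, each with its own weight. So there is no evident polynomial in $t$ (or $N$) that has $e(\mathcal{Q})$ as a single coefficient recoverable from polynomially many evaluations.

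Building a height-two mechanism that couples $q^U$ and $q^V$ globally is precisely the obstacle that left the problem open for almost three decades after Brightwell and Winkler. Your fallback invokes Dittmer and Pak's construction without supplying it, so it is the citation again rather than a proof. Inside this paper the citation alone suffices, and it is all the paper offers. The sketched reproof does not stand on its own.
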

Furthermore, it is a known result that $\#$LE (the number of linear extensions) 
is proportional to the volume of the corresponding order polytope. 
This relationship is formalized by Stanley~\cite{Stanley1986}:
\begin{restatable}{theorem}{thorderpolytope}
\label{th:orderpolytope}
\cite{Stanley1986}
Let $\mathcal{P}=(U\cup V,R)~(R\subseteq U\times V)$ be a poset of height two.
Let $\mathcal{O_P}$ be a $|U\cup V|$-dimensional polytope, where
$\mathcal{O_P}=\{\Vec{x}\in[0,1]^{U\cup V}|x_u\le x_v ~~\text{for all}~~uv\in R\}$.
Then, ${\rm Vol}(\mathcal{O_P})=\frac{\#{\rm LE2H}}{(|U|+|V|)!}$.
\end{restatable}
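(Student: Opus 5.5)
The statement to prove is Stanley's: ${\rm Vol}(\mathcal{O_P})$ equals the number of linear extensions of $\mathcal{P}$ divided by $(|U|+|V|)!$. I would prove it by partitioning the unit cube according to the relative order of the coordinates. Write $N=|U|+|V|$ and index the coordinates by $U\cup V$. For each permutation $\sigma$ of $U\cup V$, i.e.\ a bijection $\sigma:\{1,\dots,N\}\to U\cup V$, define the simplex
\[
\Delta_\sigma=\{\Vec{x}\in[0,1]^{U\cup V} \mid x_{\sigma(1)}\le x_{\sigma(2)}\le\cdots\le x_{\sigma(N)}\}.
\]
The plan has three steps: (a) these simplices cover the cube and overlap only on a null set; (b) each has volume $1/N!$; (c) $\mathcal{O_P}$ is, up to measure zero, exactly the union of the $\Delta_\sigma$ for which $\sigma$ is a linear extension of $R$.

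For step (a), every point of $[0,1]^{U\cup V}$ lies in some $\Delta_\sigma$: sort its coordinates. Two distinct simplices intersect only where some equality $x_a=x_b$ with $a\neq b$ holds. That set is a finite union of hyperplane pieces, so it has Lebesgue measure zero.

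For step (b), the coordinate permutation $\Vec{x}\mapsto(x_{\sigma(1)},\dots,x_{\sigma(N)})$ is volume-preserving and maps $\Delta_\sigma$ onto the standard simplex $\{0\le y_1\le\cdots\le y_N\le1\}$. By step (a) applied to the identity labelling, the $N!$ congruent simplices tile the unit cube, so each has volume $1/N!$. Alternatively, iterated integration gives $\int_0^1 y_N^{N-1}/(N-1)!\,dy_N=1/N!$.

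For step (c), take a point $\Vec{x}$ in the interior of $\Delta_\sigma$, so all its coordinates are distinct.
\begin{itemize}
\item If $\sigma$ respects $R$, meaning $u$ appears before $v$ in $\sigma$ for every $uv\in R$, then $x_u<x_v$ for every $uv\in R$. Since $\Delta_\sigma$ is the closure of its interior, all of $\Delta_\sigma$ lies in $\mathcal{O_P}$.
\item If $\sigma$ does not respect $R$, some $uv\in R$ has $v$ before $u$ in $\sigma$. Then $x_v<x_u$ on the interior, so the interior of $\Delta_\sigma$ misses $\mathcal{O_P}$ entirely.
\end{itemize}
Hence $\mathcal{O_P}$ agrees, up to measure zero, with the union of the $\Delta_\sigma$ over the linear extensions $\sigma$. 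Summing the volumes from step (b) gives ${\rm Vol}(\mathcal{O_P})=\#{\rm LE2H}/N!$, where $\#{\rm LE2H}$ counts the permutations not contradicting $R$, as in Definition~\ref{definition:LE2H}. The height-two assumption is not used; the argument works for any finite poset.

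The main obstacle is only the measure-theoretic bookkeeping: the non-strict inequalities in $\mathcal{O_P}$ and the shared boundaries of the simplices must be handled carefully. Restricting to points with pairwise distinct coordinates, a full-measure subset of the cube, makes the correspondence between these points and permutations a clean bijection, and the rest is routine.
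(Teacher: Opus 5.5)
Your proof is correct and is the standard argument behind Stanley's theorem: the cube is triangulated into the unimodular simplices $\Delta_\sigma$, and $\mathcal{O_P}$ is, up to a null set, the union of those with $\sigma$ a linear extension, each of volume $1/(|U|+|V|)!$. The paper gives no proof of its own and only cites Stanley, whose argument is this same triangulation and matches the paper's informal remark that a linear extension is the order of magnitudes of the coordinates; your observation that the height-two hypothesis is never used is also right.
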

For the initial part of our hardness results, the $\sharpp$-hardness of SPPDF and LPPDF 
with i.i.d. uniformly random length edges combined with fixed zero-length edges is 
formalized by Lemma~\ref{lemma:Uniform-hard}.
We use the concatenation $\cat{\Vec{u}}{\Vec{v}}$ of 
two vectors $\Vec{u}$ and $\Vec{v}$ in the statement.
\begin{restatable}{lemma}{lemmaUniformhard}
\label{lemma:Uniform-hard}
Given a height-2 poset $\mathcal{P}=(U\cup V, R)$ 
and its related transportation graph 
$\graphzero_{R}=(U\cup V\cup\{s,t\},R\cup P)$.
Let $\Vec{X}\in [0,1]^P$ be a random vector with components 
$X_e\iidsim \Unif$ for $e\in P$.
Computing 
$F_{\rm MAX}^{\cat{\Vec{0}}{\Vec{X}}}(1)$ and 
$\bar F_{\rm MIN}^{\cat{\Vec{0}}{\Vec{X}}}(1)$
exactly is $\sharpp$-hard.
\end{restatable}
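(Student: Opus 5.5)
The plan is a direct parsimonious reduction from $\#$LE2H (Theorem~\ref{th:DP2020}). Given $\mathcal{P}=(U\cup V,R)$, we construct $\graphzero_R$ in polynomial time. We then show that both $F_{\rm MAX}^{\cat{\Vec{0}}{\Vec{X}}}(1)$ and $\bar F_{\rm MIN}^{\cat{\Vec{0}}{\Vec{X}}}(1)$ equal ${\rm Vol}(\mathcal{O_P})$. By Theorem~\ref{th:orderpolytope}, multiplying either quantity by $(|U|+|V|)!$ yields $\#$LE2H, so an exact algorithm for either quantity would solve a $\sharpp$-complete problem.

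The first step is to list the simple $s$--$t$ paths of $\graphzero_R$. Every edge leaving $s$ enters $U$, every edge of $R$ goes from $U$ to $V$, and every edge entering $t$ leaves $V$; no edge leaves $V$ except toward $t$. Hence $\Pi_{\graphzero_R}$ is exactly $\{s\to u\to v\to t : uv\in R\}$. Since edges of $R$ have length $0$, the length of such a path is $X_{su}+X_{vt}$. Therefore
\[
F_{\rm MAX}^{\cat{\Vec{0}}{\Vec{X}}}(1)=\Pr\Bigl[\textstyle\bigwedge_{uv\in R} X_{su}+X_{vt}\le 1\Bigr].
\]

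The second step is a change of variables. Set $y_u=X_{su}$ for $u\in U$ and $y_v=1-X_{vt}$ for $v\in V$. This map is a measure-preserving affine bijection of $[0,1]^{U\cup V}$, and the uniform product measure coincides with Lebesgue measure on the unit cube. Under it the event becomes $\{y_u\le y_v \text{ for all } uv\in R\}$, i.e.\ exactly $\mathcal{O_P}$, so $F_{\rm MAX}(1)={\rm Vol}(\mathcal{O_P})$. Elements of $\mathcal{P}$ that appear in no relation correspond to $P$-edges lying on no path. Their coordinates are unconstrained in both the event and $\mathcal{O_P}$, so they contribute a factor of $1$ to both sides.

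For the minimum, $\bar F_{\rm MIN}(1)=\Pr[\bigwedge_{uv\in R} X_{su}+X_{vt}>1]$. Here we substitute $y_u=1-X_{su}$ and $y_v=X_{vt}$, which turns the event into $\{y_u<y_v \text{ for all } uv\in R\}$. This set differs from $\mathcal{O_P}$ only on a finite union of hyperplanes, which has measure zero. Alternatively, one could invoke Proposition~\ref{proposition:minmaxsymmetry} after the reflection $\Vec{X}\mapsto \Vec{1}-\Vec{X}$.

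The main point requiring care is the first step, namely that no other simple paths exist and that all path lengths take the stated form. This follows from the bipartite, height-two structure, with $R\subseteq U\times V$ and $P$ consisting only of $s$-out and $t$-in edges. The remaining bookkeeping is routine: ${\rm Vol}(\mathcal{O_P})$ is a rational number with polynomially many bits, and the final multiplication by $(|U|+|V|)!$ takes polynomial time. Together these make the reduction a valid polynomial-time Turing reduction, which establishes $\sharpp$-hardness.
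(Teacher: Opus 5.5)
Your proposal is correct and takes the same route as the paper. Both reduce $F_{\rm MAX}^{\cat{\Vec{0}}{\Vec{X}}}(1)$ to the event $\bigwedge_{uv\in R}(X_{su}+X_{vt}\le 1)$, substitute $1-X_{vt}$ to obtain ${\rm Vol}(\mathcal{O_P})=\#{\rm LE2H}/(|U|+|V|)!$ via Theorem~\ref{th:orderpolytope}, and get the $\bar F_{\rm MIN}$ case through the reflection $\Vec{X}\mapsto\Vec{1}-\Vec{X}$ and Proposition~\ref{proposition:minmaxsymmetry}. Your direct treatment of the minimum, where strict and non-strict inequalities differ only on a null set, is an equally valid and slightly more explicit way to finish that case.
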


Figure~\ref{fig:reduction} illustrates the transformation process 
for three instances sharing an underlying poset structure. 
We begin with an initial transportation graph (left) 
characterized by zero-length edges and progress through two subsequent forms: 
one where all edge lengths are i.i.d. uniformly over $[0,1]$ (center), 
and an intermediate incorporating the $|R|$-edge "tail" (right). 
Given that their LPPDFs are monomials for $x \le 1$ by
Proposition~\ref{proposition:constant}, 
we prove an equivalence relationship among all three instances, 
with details provided in Proposition~\ref{proposition:tail} and 
Lemma~\ref{lemma:equalPsi}. 
The fundamental technique involves
commutativity of convolution.
We show a proof of Proposition~\ref{proposition:convolution} in Section~\ref{section:supplementary},
since it provides the fundamental basis of Theorem~\ref{th:DAGLP}.
\begin{restatable}{proposition}{propositionConvolution}
\label{proposition:convolution}
\noindent {\bf Commutativity of Convolution:}
Let $X_1$ and $X_2$ be two mutually independent random variables,
where $F_i(x)=\Pr[X_i\le x]$ and $f_i(x)=\frac{d}{dx}F_i(x)~(i=1,2)$. Then, we have
\begin{align*}
\Pr[X_1+X_2\le x] = \int_{\mathbb{R}}f_1(x_1)F_2(x-x_1) dx_1= \int_{\mathbb{R}}F_1(x_1)f_2(x-x_1) dx_1.
\end{align*}
\end{restatable}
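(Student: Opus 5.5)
We prove the two equalities by conditioning on one of the variables and then applying Fubini's theorem. The event $\{X_1+X_2\le x\}$ is a measurable subset of $\mathbb{R}^2$. Because $X_1$ and $X_2$ are independent and $X_1$ has density $f_1$, the joint law of $(X_1,X_2)$ is the product of $f_1(x_1)\,dx_1$ with the law of $X_2$. We first write
\begin{align*}
\Pr[X_1+X_2\le x]=\int_{\mathbb{R}}\Pr[X_1+X_2\le x\mid X_1=x_1]\,f_1(x_1)\,dx_1 .
\end{align*}
By independence the conditional probability equals $\Pr[X_2\le x-x_1]=F_2(x-x_1)$. This gives the first expression $\int_{\mathbb{R}}f_1(x_1)F_2(x-x_1)\,dx_1$. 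Formally, this is Tonelli's theorem applied to the nonnegative indicator $\mathbf{1}[x_1+x_2\le x]$ under the product measure, integrating out $x_2$ first.

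For the second expression we write $\Pr[X_1+X_2\le x]=\Pr[X_2+X_1\le x]$, which is immediate since addition is commutative, and run the same argument with the roles of the two variables swapped. Conditioning on $X_2=x_2$ gives $\int_{\mathbb{R}}f_2(x_2)F_1(x-x_2)\,dx_2$. The substitution $x_1=x-x_2$ preserves the orientation of $\mathbb{R}$ and has Jacobian of absolute value $1$. It turns this integral into $\int_{\mathbb{R}}F_1(x_1)f_2(x-x_1)\,dx_1$, as required.

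The only delicate point is the justification of the interchange of integrals and the use of densities. The statement presupposes that $f_i=\frac{d}{dx}F_i$ exists, and we read this as each $F_i$ being absolutely continuous with density $f_i$. This holds for all the distributions the paper uses, including uniform and locally uniform ones, possibly after adding point masses for the constant-$0$ edges. Even in that mixed case, the first equality only needs a density for $X_1$, and we can interpret $f_1\,dx_1$ as $dF_1$ in the Lebesgue--Stieltjes sense. Every integrand is nonnegative and bounded by $f_i$, which integrates to $1$, so Tonelli's theorem applies without further conditions. I expect this measure-theoretic justification to be the main (if mild) obstacle; the computation itself is routine.
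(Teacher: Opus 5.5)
Your proof is correct and takes the paper's route. Like the paper, you condition on $X_1$ and use independence to turn the conditional probability into $F_2(x-x_1)$; your Tonelli argument makes rigorous what the paper does by a Riemann-sum limit over the intervals $[hn,h(n+1)]$, and you also derive the second equality, which the paper's proof leaves implicit. One small slip: the substitution $x_1=x-x_2$ reverses rather than preserves the orientation of $\mathbb{R}$, but Lebesgue measure is invariant under this reflection (equivalently, the sign from $dx_1=-dx_2$ cancels the swapped limits), so your conclusion is unaffected.
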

By applying the latter identity between two integrals twice to 
$F_{\rm MAX}^{\Vec{X}}(x)$'s integral representation which
we explain next, we shift the randomness from an 
incoming edge length to an outgoing edge length 
(for more detail, see also Fig.~\ref{fig:reduction}, \ref{fig:subdivision} and Lemma~\ref{lemma:equalPsi}). 
The manupilation is kept simple since
the $s-t$ paths in the transportation
graph has exactly three edges. 
\begin{figure}[ht]
\begin{center}
    \includegraphics[clip,scale=0.7, bb=10 620 470 850]{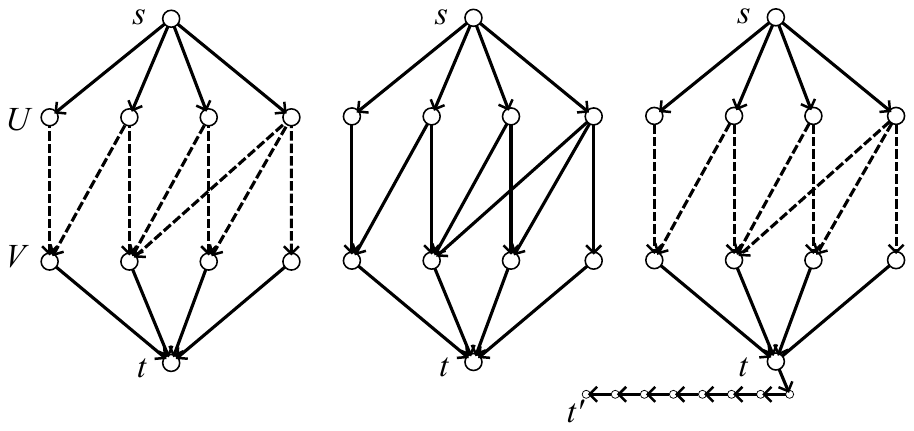}
\end{center}
\vspace*{-10pt}
\caption{
On the left, we establish that for the transportation graph $\graphzero_R$, 
the shortest/longest path distribution function equals the volume of an associated 
order polytope when edges in $R \subseteq U \times V$ are zero-length and 
other edge lengths are i.i.d. uniform on $[0,1]$. We then analyze a transformation 
(center to right): starting with an entirely i.i.d. random length edge graph, 
we introduce a ``tail'' containing exactly $|R|$ edges. 
}
\label{fig:reduction}
\end{figure}
The manipulation stands on 
Theorem~\ref{th:DAGLP} established in \cite{AOSY2009}.
A revised version of the proof is provided 
in Section~\ref{section:supplementary} for self-containment.
\begin{restatable}{theorem}{thDAGLP}
\cite{AOSY2009}
\label{th:DAGLP}
Let $G=(V,E)$ be a directed acyclic graph (DAG) with vertex set $V$ and edge set $E$. 
Let $S$ and $T$ denote the sets of sources and sinks, respectively. 
Let $\Vec{X}\in\mathbb{R}^E$ 
be any random vector whose components are mutually independent, 
but not necessarily identically distributed.
For an edge $uv \in E$, let $X_{uv}$ be its random length, 
and let $F_{uv}(x)=\Pr[X_{uv}\le x]$ be the corresponding distribution function. 
Then, we have 
\begin{align*}
F_{\rm MAX}^{\Vec{X}}(x)=\int\limits_{\mathbb{R}^{V\setminus T}} \prod_{s\in S} H(x-z_s) \prod_{u\in V\setminus T} \frac{\partial}{\partial z_u}\prod_{v\in V_u} F_{uv}(z_u-z_v) {\rm d}\Vec{z},
\end{align*}
where $V_u=\{v\in V| uv\in E\}$ is the set of children of vertex $u$, and 
we set $z_t=0$ for all sinks $t\in T$. 
\end{restatable}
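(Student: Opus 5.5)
We prove the formula by induction on the number of non-sink vertices, processing vertices in a topological order from the sinks upward. The idea is to interpret $z_u$ as the longest path length from $u$ to the sink set $T$, and to show that the integrand, read in reverse topological order, is the joint density of these quantities.

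For $u\in V\setminus T$ define $L_u=\max_{v\in V_u}(X_{uv}+L_v)$, with $L_t=0$ for $t\in T$. Then $X_{\rm MAX}=\max_{s\in S}L_s$, and $F_{\rm MAX}^{\Vec{X}}(x)=\Pr[\bigwedge_{s\in S}L_s\le x]=\mathbb{E}[\prod_{s\in S}H(x-L_s)]$. The key observation is conditional: once the values $z_v=L_v$ of all children of $u$ are fixed, we have
\[
\Pr[L_u\le z_u\mid (L_v)_{v\in V_u}=(z_v)]=\prod_{v\in V_u}F_{uv}(z_u-z_v).
\]
This holds because the edges $uv$ out of $u$ do not lie on any path starting at a descendant of $u$. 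Since the graph is acyclic, every path from $v$ to $T$ avoids $u$, and therefore avoids the edges leaving $u$. So the $X_{uv}$ are independent of the sigma-algebra generated by the lengths of edges out of vertices strictly below $u$ in the topological order.

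Differentiating in $z_u$ gives the conditional density $\frac{\partial}{\partial z_u}\prod_{v\in V_u}F_{uv}(z_u-z_v)$. Now order $V\setminus T$ topologically, $u_1,\dots,u_m$ with sinks last, so that the children of $u_i$ are among $u_{i+1},\dots,u_m$ and $T$. For each $i$, the variable $L_{u_i}$ is a function of $(L_{u_j})_{j>i}$ and of the fresh independent edge variables $\{X_{u_iv}\}$. Chaining these conditional densities, the joint density of $(L_u)_{u\in V\setminus T}$ is the product over $u$ of those partial derivatives. Note that $L_{u_i}$ is conditionally independent of its non-children given its children, since $L_{u_i}$ depends only on its own out-edges and its children's values. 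Integrating $\prod_s H(x-z_s)$ against this joint density yields exactly the stated formula.

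The main obstacle is rigor in the density manipulation, since the $F_{uv}$ need not be differentiable, and for constant-length edges, as used later with static length $0$, they are step functions. I would handle this by interpreting $\frac{\partial}{\partial z_u}(\cdot)\,dz_u$ as a Lebesgue--Stieltjes measure $d_{z_u}\prod_v F_{uv}(z_u-z_v)$, i.e., the conditional law of $L_u$. The chain argument is then just iterated disintegration (Fubini for kernels), and no densities are needed. A second subtle point is that the conditional-independence claim uses that the edge variables out of different vertices are disjoint sets of independent variables. I would state this explicitly as a lemma proved by induction along the topological order, formalized as: for every $i$, the family $\{X_{u_iv}\}_v$ is independent of $\sigma(L_{u_j}:j>i)$.
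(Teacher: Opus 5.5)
Your proposal is correct and rests on the same idea as the paper. Both arguments identify $z_u$ with the longest $u$-to-sink length $L_u$. Both then factor the joint law of these quantities over vertices as $\prod_{u}\frac{\partial}{\partial z_u}\prod_{v\in V_u}F_{uv}(z_u-z_v)$, using that the out-edge families of distinct vertices are disjoint and independent.

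The paper reaches the formula heuristically: it discretizes $z_u=h_un_u$, factors the small-interval probabilities, passes to a Riemann-sum limit, and treats the source factor $\prod_{s}\prod_{v}F_{sv}(x-z_v)$ separately before rewriting it with $H(x-z_s)$. Your chaining of conditional kernels along a topological order is the rigorous version of that argument, and it also covers non-differentiable $F_{uv}$ such as the static zero-length edges.

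One small correction: the conditional independence you actually need is only with respect to vertices later in the topological order, which is exactly your formal lemma. It does not hold for all non-children, since $L_{u_i}$ is clearly dependent on its ancestors' values.
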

Moreover, the explicit formulation of $F_{\rm MAX}^{\Vec{X}}(x)$ 
reveals how the structure of a DAG can be systematically decomposed 
and reassembled. By decomposing the DAG corresponding to an instance of $\#$LE2H, 
we reduce LPPDF to computing the multivariate convolution of 
two readily computable functions. 
These outcomes are shown in 
Propositions~\ref{proposition:PsiMax}, 
\ref{proposition:psi_U}, and \ref{proposition:psi_V}.

Extending our results to undirected graphs, 
there is another challenge, 
specifically due to candidate paths having varying numbers of edges. 
We address this issue using two techniques. 
First, by considering the uniform distribution over $[1,2]$ (rather than $[0,1]$), 
we separate the path length density function based on edge count. 
Second, we add a cycle that exhaustively traverses the neighbor of the source of
the transportation graph (see, Fig.~\ref{fig:Undirected_Reduction}). 
The modification guarantees an identical 
edge count for all candidate longest paths. Taken together, 
these techniques allow us to prove that $F_{\rm MAX}(x)$ is 
computationally hard in its leftmost support interval.

\subsection{$\sharpp$-hardness of Graphs with i.i.d. Edge Lengths Drawn From
Any Distribution with Local Uniformity}
To generalize the result to distributions with local uniformity, 
we approximate any edge length density function $f(x)$ by  
bounding $x$ so that we can utilize the almost-uniform behavior. 
We present three steps as follows.
\begin{enumerate}[{\normalfont (\roman*)}]
\item Suppose we have an edge length density $f(x)$ with a jump 
on the boundary of its support (e.g., the exponential distribution). 
By focusing on the proximity of the jump, 
it can be shown that $f(x)$ is approximately a uniform distribution. 
To be precise, consider the following.
Given that $X_1$ and $X_2$ are non-negative random variables with density $f(x)$, 
calculating the probability $\Pr[X_1+X_2\le x]$ simplifies 
to considering only the case $X_1,X_2\in [0,x]$. 
For sufficiently small $x$, this constraint, combined 
with the jump in $f(x)$ when $X_1$ and $X_2$ are i.i.d.,
which demonstrates that $f(x)$ behaves as a uniform density near zero.
This step is presented as Lemma~\ref{lemma:Breakpoint-hard}.
\item If $f(x)$ does not have any jump at its support boundary
but bounded from below by $a$ (e.g., gamma distribution with shape parameter $\alpha >1$),
we translate $f(x-a)$ so that the support is nonnegative.
Then, since the probability calculation is restricted to the
path lengths at most $x_0$, 
we truncate the function $f(x-a)$ at $x=x_0+a$.
This truncation introduces an artificial discontinuity, 
leading to an approximate uniform distribution. 
Now, $f(-(x-x_0-a))$ applies symmetrically to step (i).
We have the case as Lemma~\ref{lemma:Breakpoint-hard-interval}.
\item Even for unbounded supports (e.g., normal distribution), 
the density $f(x)$ must decay toward zero as 
$x \rightarrow \pm\infty$, since $\int_{\mathbb{R}} f(x) dx = 1$. 
The distribution $F(x)$ is negligible
below a certain point $x_\ell$ 
(i.e., $F(x)\le \ell=\Omega(2^{-\poly(n)})$ for $x\le x_\ell$).
We truncate $f(x)$ at $x_\ell$, leading to 
an approximate uniform distribution near some $x_0 \ge x_\ell$, 
which falls upon the previous step (ii) (Lemma~\ref{lemma:infty_hard}). 
\end{enumerate}

While extending the $\sharpp$-hardness result to undirected graphs 
remains conceptually similar,  
the edge length density $f(x)$ must be separable from the origin,
since we use the uniform distribution over $[1,2]$ for our reduction.
We address two cases: First, 
if the support of $f(x)$ is strictly confined to one side 
of the origin with some margin 
(i.e., $f(x)=0$ for $|x|\le \epsilon$), 
we scale the neighborhood of the origin by $1/\epsilon$. 
It allows us to approximate uniform density over $[1,2]$ by $f(x/\epsilon)$.
Second, if $f(x)$ includes the origin's neighborhood, 
we introduce polynomial number of parallel edges to impose biases,
positive for LPPDF-IID ($\max$ operation) and 
negative for SPPDF-IID ($\min$ operation). 
We see that the resulting distribution of parallel edge lengths 
is negligible at $x_\ell=0$, as established in step (iii) of the above.

\subsection{$\xp$ Algorithm for Directed Graphs with Edge Lengths i.i.d. Uniformly over $[0,1]$}
Our $\xp$ algorithm generalizes prior results \cite{AOSY2009,Ando2017} 
by accommodating graphs with cycles. 
Our algorithm is a bottom-up dynamic programming across a tree decomposition, 
maintaining state as a piecewise polynomial. 
By developing a recurrence utilizing local convolutions, 
we efficiently propagate the distribution function from 
the leaf bags to the root. 
Let $G=(V,E)$ be a directed graph.
Specifically, our $\xp$ algorithm advances upon the prior work 
through three tricks.
\begin{enumerate}[{\normalfont (\arabic*)}]
\item Firstly, we introduce two dummy variables $z_{uv,u}$ and $z_{uv,v}$
for every edge $uv\in E$, 
contrasting with the single variable per vertex used previously. 
The modification provides a degree of redundancy when applied to DAGs, 
which at the same time grants the necessary capability to handle 
graphs containing cycles.

\item Secondly, we define two gadgets,  distinctly for vertices and edges, 
whereas prior research treated a vertex and its outgoing edges as 
an indivisible unit. 
Here, a gadget is a formula that describes probability of some local event.
Our two variables for each edge are defined to represent the optimal path length 
starting from either the source or the destination of each edge, ending at the 
overall graph sink.
By the description of our construction, we have more clarity relative to 
the prior work. 

\item Thirdly, we explicitly encode local convolutions 
using gadgets for both vertices and edges. 
By performing partial differentiation of a gadget, 
we obtain a probability density corresponding to that one inequality 
in the probability condition turned into a equality.
A convolution with the partial derivative process combines two gadgets, 
realizing substituting one variable in the probability condition 
by the same variable in the above equality. 
Finally, the distribution function for the shortest/longest path length 
is constructed by properly taking the product or convolution of 
these vertex and edge probability descriptions.
\end{enumerate}

We here have a closer look at our algorithm.
Interestingly, we simplify the analysis by restricting consideration to 
SPPDF-IID, or $F_{\rm MIN}(x)$ in contrast to that 
considering $F_{\rm MAX}(x)$ saved the description in the previous Subsections. 
Since our algorithm can be applied to the edge lengths uniformly distributed over $[-1,0]$,
too, our algorithm can be applied to $F_{\rm MAX}(x)$ by Proposition~\ref{proposition:minmaxsymmetry}.
We further simplify the argument by introducing 
{\em sentinel vertex} ($\bot$) and 
its adjacent edges, {\em sentinel edges}. 
We fix the edge lengths associated with these sentinel edges. 
\begin{restatable}{definition}{definitionsentinel}
\label{definition:sentinel}
Given a directed graph $G=(V,E)~(V=\{1,\dots,n\})$, 
we add a sentinel vertex $\bot$ and
sentinel edges $E_\bot=\{\bot v, v\bot | v\in \{1,\dots,n\}\}$. 
For $s\in S(G)$ and $t\in T(G)$, 
the lengths of $\bot s$ and $t\bot$ are $0$;
otherwise, the edge length is a sufficiently large static value $M$.
\end{restatable}
The use of sentinel vertex saves the algorithm description, 
allowing the algorithm's core operation to be restricted solely 
to the sequential recurrence across the vertices from the leaf 
bags to the root. Our algorithm is listed as follows:
\begin{restatable}{algorithm}{algorithmmainalg}
\label{algorithm:mainalg}
Input: Directed Graph $G=(V,E)$ and its Tree Decomposition $\mathcal{T}=(\mathcal{B},\mathcal{A})$\\
1. Order the vertices in the bottom-up order, from vertices in the leaf bags to the root. \\
2. Add sentinel vertex $\bot$ and sentinel edges $\bot v$ and $v\bot$ for each $v\in V$; \\
3. for $i=1,\dots,n$ do:\\
4. \hspace*{5mm} Compute $F_{\partial G_i}(\Vec{z}_{\partial_E[i]})$, as described in Section \ref{section:recurrence};\\
5. Output $F_{\partial G_n}(\Vec{z}_{\partial_E[n]})=F_{\rm MIN}(x)$.
\end{restatable}

The notation $\partial$ refers to the frontier in the computation. 
Specifically, let $G = (V,E)$ be the graph with vertex set $V=\{1,\dots,n\}$. 
We define $\partial_E[i]$ as the frontier edge set of 
the subgraph induced by the vertices $[i] = \{1, \dots, i\}$. 
The intermediate graph $\partial G_i$ is obtained 
by setting each $e=uv\in \partial_E[i]$ as a source or a sink vertex
depending on whether $u\not\in [i]$ and $v\in[i]$ ($e$ is a new source) or 
$u\in [i]$ and $v\not\in [i]$ ($e$ is a new sink). 
\begin{restatable}{definition}{definitiongreatergraph}
\label{definition:greatergraph}
Let $G_i=([i],E_i)$ be the subgraph of $G=(V,E)$ induced by $[i]$.
Let $\partial G_i$ be the graph with vertex set $[i]\cup \partial_E[i]$
and edge set 
$E_i\cup \{ve| v\in [i], e\in \partial_E^+\}\cup \{ev| e\in\partial_E^-, v\in[i]\}$.
We define $\Pi_i$ as the set of intersections between an $s-t$ path of $G$ and
paths between vertices in $S(\partial G_i)$ and $T(\partial G_i)$.
\end{restatable}
Our algorithm iterates over all vertices, excluding the sentinel 
vertex $\bot$. By setting the vector $\Vec{z}_{\partial_E[i]}$ 
appropriately, our algorithm outputs $F_{\rm MIN}(x)$.
\begin{restatable}{lemma}{lemmaFDN}
  \label{lemma:FDN}
  Let $\Vec{z}_{\partial_E[n]}$ is a vector indexed by 
  $\partial_E[\bot]=\{\bot v, v\bot| v\in V\}$.
  We set $z_{\bot s, s}=x$ and $z_{s \bot,\bot}=0$ for $s\in S(G)$; 
  $z_{\bot v,v}=z_{v,\bot,\bot}=0$ for $v\in V\setminus S(G)$ then, we have
  \begin{align}
    &F_{\partial G_n}(\Vec{z}_{\partial_E[n]})=\Pr\left[\min_{\pi\in\Pi_n}\left\{\sum_{e\in \pi}X_e\right\}\le x\right]=\Pr\left[\bigvee_{\pi\in\Pi_n}\left(\sum_{e\in \pi}X_e\le x\right)\right].\nonumber
  \end{align}
\end{restatable}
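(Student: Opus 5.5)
The plan is to unfold the definition of $F_{\partial G_i}$ given in Section~\ref{section:recurrence} and specialise it to $i=n$. After the final iteration every original vertex lies in $[n]$, so the frontier $\partial_E[n]$ consists only of sentinel edges. Consequently $\partial G_n$ is $G$ with each sentinel edge $\bot v$ turned into a new source and each $v\bot$ turned into a new sink.

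The first step is to recall the defining property of the gadget function. $F_{\partial G_i}(\Vec{z}_{\partial_E[i]})$ is the probability that, for every pair of a frontier source $e\in\partial_E^-$ and a frontier sink $e'\in\partial_E^+$, every path in $\Pi_i$ joining them has length at most $z_{e}-z_{e'}$. The variables on frontier edges act as potentials, namely optimal residual lengths to the global sink, consistent with the description of $z_{uv,u}$ and $z_{uv,v}$ in the technical overview. I would state this as the invariant maintained by the recurrence, taking it from the definition, and then substitute the prescribed values.

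The second step is the substitution. With $z_{\bot s,s}=x$ for $s\in S(G)$ and $z_{s\bot,\bot}=0$, the constraints coming from the source $\bot s$ read: every path from $s$ to some exit $v\bot$, plus the static sentinel lengths, compares against $x$. Here I must respect the direction of the inequalities for the minimum, where the event is a disjunction. The condition should therefore be read as the existence of a path of length at most $x$, equivalently $\min\le x$, which is the right-hand form in the statement. For an exit $v\bot$ with $v\notin T(G)$, the edge has length $M$. For an entry $\bot v$ with $v\notin S(G)$, the edge also has length $M$ and its potential is set to $0$. Any path using such a sentinel edge then has length at least $M-\text{(bounded)}>x$ on the relevant range of $x$, so these paths never realise the minimum. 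Only paths entering through some $\bot s$ with $s\in S(G)$ and leaving through $t\bot$ with $t\in T(G)$ (both of length $0$) survive. These are exactly the $s$--$t$ paths of $G$ intersected with $\partial G_n$, that is, $\Pi_n$ in the sense of Definition~\ref{definition:greatergraph}.

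The third step identifies the event. Since $X_e\ge 0$ (or $X_e$ is bounded, for $[-1,0]$) and $M$ exceeds the sum of all maximal edge magnitudes plus $|x|$, the sentinel-edge terms drop out. The event becomes $\bigvee_{\pi\in\Pi_n}(\sum_{e\in\pi}X_e\le x)$, which is trivially equal to $\{\min_\pi\sum X_e\le x\}$.

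I expect the main obstacle to be pinning down precisely how $F_{\partial G_i}$ encodes min-type (disjunctive) events with potentials on both endpoints of an edge. In particular, the difficulty is justifying that setting $z_{\bot v,v}=0$ for non-sources disables, rather than constrains, those entries. I would handle this by choosing $M$ larger than $n$ times the maximal edge length plus $|x|$, and then checking both directions of event inclusion.
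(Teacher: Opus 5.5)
Your plan assumes exactly what this lemma is meant to establish. In the paper, $F_{\partial G_i}$ is what step 4 of Algorithm~\ref{algorithm:mainalg} computes. It is produced by the recurrence of Definitions~\ref{definition:FGi} and~\ref{definition:FGi+1}: $\mathcal{V}^+_{i+1}$ is the convolution of the vertex gadget $\mathcal{V}_{i+1}$ with the edge gadget $\mathcal{E}_{i+1}$, and $F_{\partial G_{i+1}}$ is obtained by integrating $F_{\partial G_i}\cdot\mathcal{V}^+_{i+1}$ over $\Vec{\zeta}\in\mathbb{R}^{\Delta E(i+1)}$. The lemma is stated ``to make sure the correctness of the recurrence.'' Its proof is an induction on $i$ showing that this recursively computed function satisfies the probabilistic description of Definition~\ref{definition:intermediateF}:
\begin{itemize}
\item The base case uses that $\partial 1$ is a star, so its functional subgraphs are exactly $\Pi_1$.
\item In the step, integrating out $\Vec{\zeta}$ replaces each frontier potential by the expression coming from an inequality of $\mathcal{V}^+_{i+1}$. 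This creates a new sink $(i+1,v)$, a new source $(u,i+1)$, or glues two pieces at $i+1$.
\item Expanding into a DNF over the local choices $\mathcal{U}_{(u,i+1),(i+1,v)}$ gives one term per functional subgraph.
\item Terms whose choices disagree at the two ends of an edge are made false by the $\mathrm{Garbage}$ clauses together with the length-$M$ sentinel edges.
\item Closed cycles have probability zero, because they force equal potentials.
\end{itemize}
None of this appears in your plan, so as written it says nothing about the algorithm's output. What you actually verify is the final substitution of the potentials, which is the last two sentences of the paper's proof. That step is nearly immediate once $\Pi_n$ is read as in Definition~\ref{definition:greatergraph}, i.e.\ already restricted to pieces of $s$--$t$ paths of $G$. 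So the effort you spend on ``disabling'' the entries $\bot v$ with $v\notin S(G)$ goes to the easy part, not the hard one.

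The invariant you do state is also wrong for this lemma. ``Every path in $\Pi_i$ joining them has length at most $z_e-z_{e'}$'' is a conjunction over paths, i.e.\ a max-type event. Substituting $x$ and $0$ into it yields $F_{\rm MAX}(x)$, not $F_{\rm MIN}(x)$. Re-reading it as ``there exists a path'' partway through changes the invariant rather than deriving the claim. The invariant that has to be stated, and then shown to survive each convolution, is the disjunction $\bigvee_{\pi\in\Pi_i}$ of Definition~\ref{definition:intermediateF}. Its bounds $z_{e_s(p),s_p}-z_{e_t(p),t_p}$ use one potential per (frontier edge, endpoint) pair. To close the gap you need the inductive argument above, or some other proof that convolving with $\mathcal{V}^+_{i+1}$ turns the description for $\partial G_i$ into the one for $\partial G_{i+1}$.
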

Specifically, $F_{\partial G_i}(\Vec{z}_{\partial_E[i]})$, appeared in Step 4 
of Algorithm~\ref{algorithm:mainalg}, 
is an intermediate distribution function. 
It takes the form of a piecewise polynomial parameterized 
by the components of $\Vec{z}$. 
Such functions can be encoded \LaTeX-like format and stored on the memory 
using the Heaviside step function $H(x)$ to describe 
case dependencies (e.g., the uniform distribution function is $\Unif(x)=x H(x)H(1-x)+ H(x-1)$
ignoring the value at $x=1$). 
Here, $F_{\partial G_i}(\Vec{z}_{\partial_E[i]})$ is the probability that  
$s-t$ paths in $\partial G_i$ satisfy path length constraints. 
These constraints are specified by the vector $\Vec{z}_{\partial_E[i]}$, 
which contains $|\partial_E[i]|$ components, 
corresponding precisely to the frontier edges of $\partial G_i$.
\begin{restatable}{definition}{intermediateF}
\label{definition:intermediateF}
Let $e_s(\pi)$ and $e_t(\pi)$ are the first and the last edges of a path $\pi$, respectively.
Also, let $s_\pi$ and $t_\pi$ are the source and the sink of $\pi$, respectively.
For $x\in\mathbb{R}$ and 
$\Vec{z}_{\partial_E[i]}\in\mathbb{R}^{\partial_E[i]}$,
we set 
\begin{align*} 
  F_{\partial G_i}(\Vec{z}_{\partial_E[i]})\!&=\!\Pr\left[\bigvee_{\pi\in \Pi_i}\bigvee_{p\in \mathrm{Con}(\pi)}\left(\sum_{e\in p} Z_e\le z_{e_s(p),s_p}-z_{e_t(p),t_p}\right)
  \right],
\end{align*}
where $\mathrm{Con}(\pi)$ is the set of connected components of $\pi$.
\end{restatable}
We use $\mathrm{Con}(\pi)$ since an $s-t$ path in $\Pi_G$ may enter and exit $\partial G_i$ 
multiple times.

Our algorithm must track an exponentially large number of paths in $\partial G_i$. 
Accordingly, we construct the local convolution using the following gadgets 
as atomic elements. In the following definition, $E^+(i)$ (resp. $E^-(i)$) is 
the set of outgoing (resp. incoming) edge set of vertex $i$.
\begin{restatable}{definition}{definitiongadgets}
\label{definition:gadgets}
Let $D_{\Vec{z}^-}=\prod_{ui\in E^-(i)}\frac{\partial}{\partial z_{ui,i}}$. 
For $\Vec{z}_{\rm src}^+\in \mathbb{R}^{E^+(i)}$ and $\Vec{z}^-_{\rm dst}\in\mathbb{R}^{E^-(i)}$,
we define the vertex gadget by
\begin{align*}
  &\mathcal{V}_i(\Vec{z}^+_{\rm src},\Vec{z}^-_{\rm dst})=D_{\Vec{z}^-}\Pr\left[\bigvee_{ui,iv\in E(i)}\mathcal{U}_{ui,iv}(\Vec{z}^+_{\rm src},\Vec{z}^-_{\rm dst})\right], \\
  &\text{where}~~\mathcal{U}_{ui,iv}(\Vec{z}^+_{\rm src},\Vec{z}^-_{\rm dst})=(0\le z_{ui,i} - z_{iv,i} )\wedge\mathrm{Garbage}(ui,iv,\Vec{z}),\\
  &\text{and}~~\mathrm{Garbage}(ui,iv,\Vec{z})=\bigwedge_{\begin{subarray}{c}
  u'i,iv'\in E(i)\\
  u'\neq u, v'\neq v
  \end{subarray}}(0\le z_{u'i,u'}-z_{i\bot,i}\wedge 0\le  z_{\bot i,i}-z_{iv',i}). 
\end{align*}
For $\Vec{z}_{\rm src},\Vec{z}_{\rm dst}\in \mathbb{R}^{E^+(i)}$, the edge gadget is 
\begin{align*}
    \mathcal{E}_i(\Vec{z}_{\rm src}, \Vec{z}_{\rm dst})&=\left(\prod_{iv\in E^+(i)}\frac{\partial}{\partial z_{iv,i}}\right)\Pr\left[\bigvee_{iv\in E^+(i)}\left(\vphantom{\frac{1}{1}}X_{iv}\le z_{iv,i}-z_{iv,v}\right)\right]\\
    &=\prod_{iv\in E^+(i)}\unif(z_{iv,i}-z_{iv,v}).
\end{align*}
\end{restatable}
See Remark~\ref{remark:generalizedfunction} for the shape of $\mathcal{V}_i$.
The edge gadget $\mathcal{E}_i$ is the product of the length density functions for
outgoing edges from $i$, 
and the vertex gadget $\mathcal{V}_i$ models branching at vertex $i$. 
In case a path does not include $i$, then $u=\bot \wedge v=\bot$ represents the case.
By our algorithm, 
we execute the convolutions across the all vertex and edge gadgets.
The integrand is a product of all of these gadgets. 
Considering only the product of the vertex gadgets yields a probability 
equivalent to the Iverson bracket evaluation across all conditions $\mathcal{V}_i(\Vec{z}_{E(i)})$ for $i=1, \dots, n$, 
combined via logical ANDs, 
the Conjunctive Normal Form (CNF) of $\mathcal{U}_{ui,iv}$.
If we convert the form to DNF, each term is a logical product of 
$\mathcal{U}_{ui,iv}$ for all vertex $i$,
where each of $\mathcal{U}_{ui,iv}$ corresponds 
that vertex $i$ chooses incoming edge $ui$ and outgoing edge $iv$. 
Such choices define a functional subgraph for each DNF term defined as follows.
\begin{definition}
Let $G$ be a directed graph.
A subgraph $G'$ of $G$ is the functional subgraph
of $G$ if every vertex in $G'$ has outdegree $1$.
We set $\pfs_i$ as the family of all functional subgraphs of $G_i$.
\end{definition}

The functional subgraphs yield the set of all $s-t$ paths. 
We filter these using the condition $\mathrm{Garbage}(u i, i v)$ 
to remove non-essential structures.
By invoking the recurrence relation in Definitions~\ref{definition:FGi} and \ref{definition:FGi+1}, 
we obtain the comprehensive collection of edge sets $\pi \in \pfs_n$. 
The functional subgraph edge set $\pi$ is 
described as a DNF term by the existence of 
$\mathcal{U}_{ui,iv}$ after expanding the logical AND operations with other vertex gadgets. 
The condition restricts consideration to paths where all 
internal vertices satisfy both indegree and outdegree exactly one; 
specifically, every edge must be chosen by both its source and sink vertices.
The DNF and the derived family of functional graphs are necessary only for the proof; 
they are not loaded on the memory during algorithm execution.

Although $\mathrm{Garbage}(ui,iv)$ does not eliminate cyclic paths, 
the probability measure of any cycle is zero. 
Remember that each component of $\Vec{z}$ is the shortest path length
from an edge source or edge destination to the overall sink of graph.
Given the constraint 
$z_{e_1}\ge z_{e_2}\ge\cdots\ge z_{e_{|\pi|}}$ and then 
$z_{e_{|\pi|}}\le z_{e_1}$ 
for edges $e_1,\dots,e_{|\pi|}\in \pi$, 
assuming that all lengths are non-negative or non-positive, 
the cycle requires $z_{e_1}=\cdots =z_{e_{|\pi|}}$, 
which ends up with measure zero.

The running time of our algorithm on graphs with bounded treewidth 
is established by analyzing the complexity of calculating the piecewise polynomial 
function $F_{\partial G_i}(\Vec{z}_{\partial_E[i]})$. 
Expanding this function into the sum of products shows that 
the variation number of these terms can be rigorously bounded 
by a function of the underlying treewidth, since
the factors of piecewise polynomial's each term is limited to
the power of components of $\Vec{z}$, or step function $H$
with its input at most two components of $\Vec{z}$ and an integer
in $0,1,\dots,n$. This leads to the proof of Theorem~\ref{th:XPalgorithm}.

A comprehensive formalization and rigorous proof of the concepts 
discussed herein are presented in the following sections.

\section{Preliminaries}
We start with notations for vectors, vectors indexed by sets, and 
concatenation of two vectors.
\begin{definition}
\noindent {\bf Notation for Vectors.}
We denote by $\Vec{0}$ and $\Vec{1}$ the vector of all zeros and the vector of all ones, 
respectively.  
The dimensions of $\Vec{0}$ and $\Vec{1}$ 
are given by the context.
Let $S$ be a set.
For $N\subseteq \mathbb{R}$,
we write $\Vec{v}\in N^{S}$ so that $\Vec{v}$ is an $|S|$-dimensional
vector indexed by $S$; the components of $\Vec{v}$ are specified by $s\in S$,
where $\Vec{v}(s)=v_s\in N$ is a component indexed by $s\in S$. 
Let $S'\subseteq S$. Then $\Vec{u}=\Vec{v}_{S'}\in N^{S'}$ is a vector where 
$u_s=v_s$ for $s\in S'$.
Let $U$ and $V$ be two sets.
Let $\Vec{u}$ and $\Vec{v}$ be two vectors.
We denote the concatenation by $\Vec{w}=\cat{\Vec{u}}{\Vec{v}}$ 
a vector indexed by $U\cup V$,
assuming that $U\cap V=\emptyset$.
We set the components $w_q=u_q$ if $q\in U$
and $w_q=v_q$ if $q\in V$.
\end{definition}

The following are the definitons of the basic terms for probability.
\begin{definition}
\noindent {\bf Notation in Probability.}
Let $X$ be a random variable.
The distribution function of $X$ is $\Pr[X\le x]$.
The density function of $X$ is $\frac{d}{dx} \Pr[X\le x]$.
It is sometimes convenient to consider the complementary 
distribution function $1-\Pr[X\le x]$; in case the distribution
function of $X$ is denoted by $F(x)$, the complementary distribution
function is denoted by $\bar F(x)=1-F(x)$.
Let $X_1,\dots,X_m$ be a sequence of random variables.
If $X_1,\dots,X_m$ share the same distribution function, the random variables
are identically distributed.
A sequence of random variables 
$X_1,\dots, X_m$ is called i.i.d. (identically and independently distributed),
if they are mutually independent and share a common distribution function.
We use the notation $X_e\iidsim F$ for $e\in E$, to signify that 
$\Vec{X}\in \mathbb{R}^E$ is a random vector where each component
$X_e$ is mutually independent and has distribution function $\Pr[X_e\le x]=F(x)$.
\end{definition}

We define some specific distributions that appear repeatedly throughout
this paper. 
\begin{definition}
\noindent{\bf Specific Distributions.}
Let $H(x)$ be the Heaviside step function ($H(x)=0$ for $x<0$, otherwise $H(x)=1$).
If $X$ has density function $\frac{d}{dx}\Pr[X\le x]=H(x)H(1-x)$, 
we say that $X$ is uniformly distributed over $[0,1]$.
Here, we set $\Unif(x)=x H(x)H(1-x) + H(x-1)$ and $\unif(x)=H(x)H(1-x)$  
to be the uniform distribution function and its density function, respectively.
Here, the distribution and the density functions 
of $m$ uniform random variables i.i.d. over $[0,1]$ are
denoted by $\Unif_m(x)$ and $\unif_m(x)$, respectively.
For a static value $c$, the distribution function is $\Pr[c\le x]=H(x-c)$,
whose derivative is the Dirac delta $\delta(x-c)$.
\end{definition}
To maintain the analytical consistency, we consider $\delta(x)$ to be the
generalized derivative of $H(x)$, which can be viewed as a narrowed uniform distribution.
One important property of $\delta(x)$ is 
$\int_{\mathbb{R}}F(x-t)\delta(t-c)dt = F(x-c)$ (sifting property),
implying that the convolution with $F(x)$ and $\delta(x-c)$  
ends up with a translation of $F(x)$.

We formalize our problems through the following two definitions.
\begin{definition}
\noindent{\bf Graphs and Paths} 
Let $G=(V,E)$ be a directed graph with vertex set $V=\{1,\dots,n\}$ 
and edge set $E\subseteq V\times V$.
For each $e=(u,v) \in E$ (denoted by $uv$ for short), 
its length is a random variable $X_{uv}$ with 
distribution function $F_e(x)=\Pr[X_e\le x]$.
We denote by $\Pi_G$ the set of all simple paths in $G$. 
\end{definition}

\begin{definition}
\noindent{\bf Shortest and Longest Path Length Distribution Function (SPPDF and LPPDF):} 
Given a graph $G=(V,E)$ and a random vector $\Vec{X}\in \mathbb{R}^E$,
compute the probabilities $F_{\rm MIN}^{\Vec{X}}(x)$ and 
$F_{\rm MAX}^{\Vec{X}}(x)$, where 
\begin{align*}
    F_{\rm MIN}^{\Vec{X}}(x)&=\Pr\left[\left(\min_{\pi\in \Pi_G}\sum_{e\in \pi} X_e\right) \le x\right]=\Pr\left[\bigvee_{\pi\in \Pi_G}\left(\sum_{e\in \pi} X_e \le x\right)\right],\\
    F_{\rm MAX}^{\Vec{X}}(x)&=\Pr\left[\left(\max_{\pi\in \Pi_G}\sum_{e\in \pi} X_e\right) \le x\right]=\Pr\left[\bigwedge_{\pi\in \Pi_G}\left(\sum_{e\in \pi} X_e \le x\right)\right].
\end{align*}
If the edge lengths are i.i.d., we call the problems SPPDF-IID and LPPDF-IID, respectively.
We sometimes consider the complementary distribution function 
$\bar F_{\rm MIN}^{\Vec{X}}(x)=1-F_{\rm MIN}^{\Vec{X}}(x)$ for the convenience of description.
\end{definition}

Notice that the problem of finding out ``the shortest path'' or 
``the longest path'' is another separate problem (see, e.g., \cite{NKBM2006}).
If the random edge length vector $\Vec{X}$ is 
clear from context, we omit $\Vec{X}$ and write
$F_{\rm MIN}(x)$ and $F_{\rm MAX}(x)$.
For undirected graphs, we regard the undirected edge as a anti-parallel 
directed edge pair, considering only simple paths.

We review here the definitions of tree decomposition
and treewidth. See, e.g.,~\cite{Kloks1994}
for more on this topic.
Since we consider directed graphs in this paper, 
we import the treewidth for undirected graphs to our problem.
\begin{definition}
\noindent{\bf Underlying Tree Decomposition and Underlying Treewidth}:
  Given a directed graph $G=(V, E)$. Then,
  the underlying tree decomposition of $G$ is a pair
  ${\cal T}=({\cal B},{\cal A})$ such that
  ${\cal T}$ is a tree satisfying the following conditions.\\
  \noindent (TD1). $\bigcup_{B\in{\cal B}}B=V$;\\
  \noindent (TD2). $uv\in E\Rightarrow \exists B\in {\cal B}~\text{s.t.}~uv\subseteq B$;\\
  \noindent (TD3). for all $B_h,B_i,B_j \in {\cal B}$, if $B_i$ is on a path
  between $B_h$ and $B_j$ in {\cal T}, then $B_h\cap B_j\subseteq B_i$.
  
  We call $B\in {\cal B}$ a bag.
  The tree decomposition ${\cal T}$ has width $k$
  if $|B|\le k+1$ for all $B\in{\cal B}$.
  The graph $G$ has treewidth $k$ if the minimum width of
  all tree decompositions of $G$ is at most $k$.
  The underlying treewidth of $G$ is the treewidth of $G$'s underlying undirected graph.
\end{definition}

\section{$\sharpp$-hardness}
In this section, we prove that SPPDF-IID and LPPDF-IID are $\sharpp$-hard problems
when edge lengths are i.i.d. uniformly distributed.
We first establish the $\sharpp$-hardnes of SPPDF and LPPDF utilizing both
static zero length edges and random length edges i.i.d. distributed uniformly over
$[0,1]$.
We then prove the $\sharpp$-hardness of SPPDF-IID and
LPPDF-IID for a DAG in case edge lengths are distributed over $[0,1]$.
By extending the result, we prove that 
SPPDF-IID and LPPDF-IID are $\sharpp$-hard also for an undirected graph
in case edge lengths are i.i.d. and obey the uniform distribution
over $[1,2]$ or $[-2,-1]$.
We then extend the result to the other probability distributions
by the local uniformity.

\subsection{$\sharpp$-hardness using zero-length edges}
Here, we show that any instance of $\#$LE2H can be reduced to 
SPPDF or LPPDF, incorporating both fixed zero length edges and
 random edge lengths i.i.d. uniformly over $[0,1]$.
As mentioned, before our $\sharpp$-hardness results build upon 
the hardness of the following problem, due to \cite{DP2020}.
\definitionLEtwoH*
\thDP*
We show that any instance of $\#$LE2H can be reduced to SPPDF and LPPDF 
for the following DAG whose example is shown in 
Fig.~\ref{fig:reduction}~(left). 
\begin{definition}
\label{definition:graphzero}
Given a poset $\mathcal{P}=(U\cup V, R)$,
we set
\begin{align*}
\graphzero_{R}\defeq (U\cup V\cup \{s,t\}, R \cup P)~~\text{where}~~P=P_U\cup P_V=\{su|u\in U\}\cup \{vt| v\in V\}.
\end{align*}
We call $\graphzero_R$ the {\em transportation graph} related to poset $\mathcal{P}$.
\end{definition}

The number of linear extensions ($\#$LE) bridges between
the discrete counting problems and the continuous polytope volumes, 
namely the order polytope.
The following is the special case of Stanley's work \cite{Stanley1986} 
restricting the poset's height up to two.
\thorderpolytope*

We consider the case where some edges
has static $0$ length and the other 
has random edge lengths i.i.d. uniformly over $[0,1]$.
We denote the vector of edge lengths by $\cat{\Vec{0}}{\Vec{X}}$ using 
a random vector $\Vec{X}\in\mathbb{R}^P$, meaning that the length is $X_e$
for $e\in P$ and $0$ for the other edges.
As mentioned before,
SPPDF and LPPDF are $\sharpp$-hard 
by the reduction from $\#$LE2H as follows.
\lemmaUniformhard*
We note that a one-to-one correspondence does not exist 
between an arbitrary linear extension of $\mathcal{P}$ 
and an $s-t$ path within the graph $\graphzero_R$. 
Instead, the actual bijection exists between each linear extension 
and the order polytope associated with a directed path whose 
vertices are entirely comparable to each other.
A clearer understanding of this correspondence can be achieved 
by considering the realized values of random edge lengths 
drawn from the uniform distribution. 
If we fix any vector $\Vec{X} \in [0, 1]^P$, 
the corresponding linear extension is given by 
verifying whether every edge length $uv$ satisfies the condition $X_{su}+X_{vt}\le 1$. 
This condition is equivalent to testing $X_{su}\le Y_{vt}$ by defining $Y_{vt}=1-X_{vt}$, 
where $Y_{vt}$ is also i.i.d. uniformly distributed. 
Since the linear extension is the order of the magnitude of the components 
of $\cat{\Vec{X}_U}{\Vec{Y}_V}$,
the linear extension is considered valid if all these inequalities are satisfied, 
implying that $\Vec{X}$ belongs to the order polytope.
\begin{proof} (of Lemma~\ref{lemma:Uniform-hard})
By Theorem~\ref{th:orderpolytope}, we would solve $\#$LE2H
if we could compute the probability 
\begin{align}
Q=\Pr\left[\bigwedge_{uv\in R} (X_u\le X_v)\right]=\frac{\#\text{LE2H}}{|P|!}.
\end{align}

If we could compute $F_{\rm MAX}^{\cat{\Vec{0}}{\Vec{X}}}(1)=\bar F_{\rm MIN}^{\cat{\Vec{0}}{-\Vec{X}}}(-x)$,
we would have 
\begin{align}
F_{\rm MAX}^{\cat{\Vec{0}}{\Vec{X}}}(1)=\Pr\left[\bigwedge_{\pi\in \Pi_G(1,n)}\left(\sum_{e\in \pi} X_e \le 1\right)\right]=\Pr\left[\bigwedge_{uv\in R}(X_{su}+X_{vt}\le 1)\right].
\end{align}
Therefore, for each edge $uv\in R$, 
there exists exactly one condition $X_{su}+X_{vt}\le 1$  
that is equivalent to $X_{su}\le 1-X_{vt}$. 
Notice that, since $X_{vt}$ is a random variable 
uniformly distributed over $[0,1]$ independent from the other edge lengths, 
so is $Y_{vt}=1-X_{vt}$ by the definition of the uniform distribution. 
By using the new edge length $Y_{vt}$, 
we can see $Q=F_{\rm MAX}^{\cat{\Vec{0}}{\Vec{X}}}(1)$ since 
\begin{align}
F_{\rm MAX}^{\cat{\Vec{0}}{\Vec{X}}}(1)=\Pr\left[\bigwedge_{uv\in R}(X_{su}+X_{vt} \le 1)\right]=\Pr\left[\bigwedge_{uv\in R}(X_{su}\le Y_{vt})\right]=Q.
\end{align}

Since
$F_{\rm MAX}^{\cat{\Vec{0}}{\Vec{X}}}(1)=\bar F_{\rm MIN}^{\cat{\Vec{0}}{-\Vec{X}}}(-1)=\bar F_{\rm MIN}^{\cat{\Vec{0}}{\Vec{1}-\Vec{X}}}(2-1)=\bar F_{\rm MIN}^{\cat{\Vec{0}}{\Vec{X}}}(1)$ 
by Proposition~\ref{proposition:minmaxsymmetry} and 
the definition of the transportation graph, 
we have the $\sharpp$-hardness of computing 
$\bar F_{\rm MIN}^{\cat{\Vec{0}}{\Vec{X}}}(1)$.
\end{proof}

\subsection{$\sharpp$-hardness in case zero-length edges are not allowed}
Here, we prove $\sharpp$-hardness for the case where all edge 
lengths are i.i.d. uniformly over $[0,1]$. 
We subdivide edges in $R$ of the transportation graph.
Then, using Theorem~\ref{th:DAGLP}, 
we transform the computation of the longest path length 
distribution function into 
a multivarate convolution of some functions 
$\Psi_U(x,\Vec{y})$ and $\Psi_V^+(\Vec{y})$, or 
$\Psi_U(x,\Vec{y})$ and $\tilde \Psi_V(\Vec{y})$ 
for $\Vec{y}\in \mathbb{R}^{R}$.
The transformation gives us a clearer foresight because 
$\Psi_U(x,\Vec{y}), \Psi_V^+(\Vec{y})$,  
and $\tilde \Psi_V(\Vec{y})$ 
can be explicitly computed.

\begin{definition}
Let $\mathcal{P}=(U\cup V,R)$ and 
$\graphzero_R=(U\cup V\cup\{s,t\},R\cup P)$ be a height-2 poset and its related
transportation graph.
We have $\graphtwo_R$ by subdividing all edges in $R$ of $\graphzero_R$.
We obtain $\graphone_R$ by 
subdividing all edges in $R$ of $\graphzero_R$, 
and by adding a path of $|R|$ edges at the sink $t$,
where the new sink is denoted by $t'$.
\end{definition}
Let $\graphthree_R$ be the transportation graph 
related to $\mathcal{P}=(U\cup V\cup\{s,t\}, R\cup P)$ 
with all edge lengths i.i.d. uniformly over $[0,1]$.
We show an example of $\graphtwo_R$ and 
$\graphone_R$ in Figure~\ref{fig:subdivision} (b) and (d).
Clearly, the longest path length distribution of $\graphthree_R$ 
(Fig.~\ref{fig:subdivision} (a))
coincides with that of a graph $\graphtwo_R$
(Fig.~\ref{fig:subdivision} (b)).

\begin{figure}[ht]
\begin{center}
\includegraphics[clip,scale=0.7, bb=10 620 600 850]{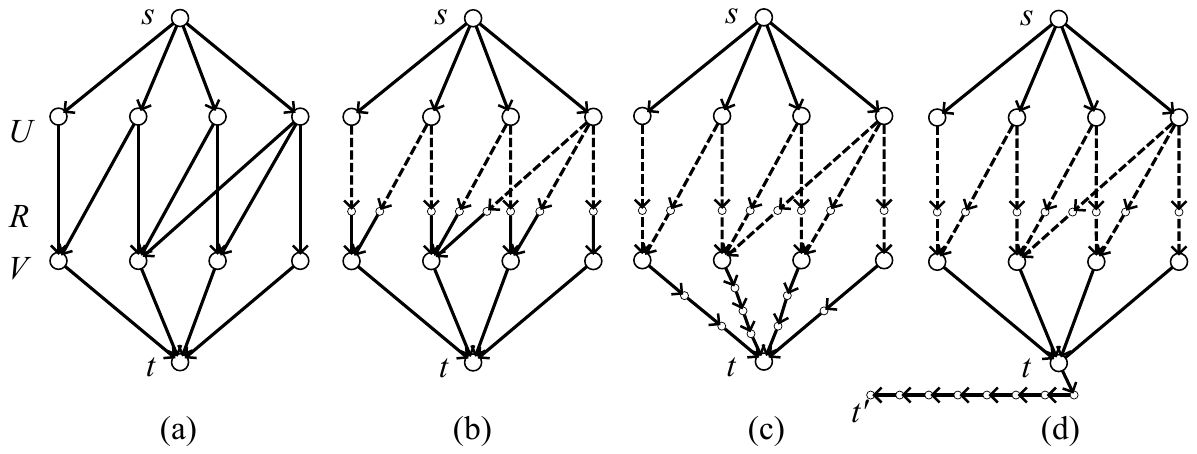}
\end{center}
\vspace*{-10pt}
\caption{Example of subdivision graphs 
(a) $\graphthree_R$, (b) $\graphtwo_R$ and (d) $\graphone_R$. 
(c) is an intermediate that appears in the transformation.
Lemma~\ref{lemma:equalPsi} proves the longest path length distribution function of (d) $\graphone_R$ 
is proportional to that of (a) $\graphthree_R$ that is with completely i.i.d. edge lengths.}
\label{fig:subdivision}
\end{figure}

The two graphs $\graphtwo_R$ and $\graphone_R$ 
above are essential to our proof.
In $\graphtwo_R$, we set the length of the edge between $U$ and $R$ 
as a static value $0$; any other edge $e$ in $\graphtwo_R$ has edge length $X_e\iidsim\Unif$.
In $\graphone_R$, we set the length of the edge incident to 
any one of $R$ as a static value $0$.
Any other edge $e$ in $\graphone_R$ has edge length $X_e\iidsim \Unif$.
Let $\Psi_{\rm MAX}(x)$ be the longest path length distribution function
of $\graphzero_R=(U\cup V\cup \{s,t\}, R\cup P)$ 
with $0$ length for edges in $R$ and i.i.d. uniformly
over $[0,1]$ random length for edges in $P$.
We state that we can compute $\Psi_{\rm MAX}(x)$ 
if we have the value of $\Psi_{\rm MAX}^+(x)$.
\begin{proposition}
\label{proposition:tail}
Let $\Psi_{\rm MAX}^+(x)$ be the longest path length distribution function 
from $s$ to $t'$ in $\graphone_R$.
For $0\le x\le 1$ and a constant $A$, we have
\begin{align*}
\Psi_{\rm MAX}^+(x)=Ax^{|P|+|R|}=\frac{|P|!}{(|P|+|R|)!}x^{|R|} \Psi_{\rm MAX}(x). 
\end{align*}
\end{proposition}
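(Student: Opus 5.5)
The plan is to separate the tail of $\graphone_R$ from the rest of the graph. Then $\Psi_{\rm MAX}^+$ becomes a one-dimensional convolution of $\Psi_{\rm MAX}$ with the density of a sum of $|R|$ i.i.d.\ uniforms, and a Beta integral finishes the proof.

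\emph{Path structure.} First I describe the $s$--$t'$ paths of $\graphone_R$. Each edge $uv\in R$ is subdivided into $u r_{uv}$ and $r_{uv}v$, and both of these edges have static length $0$. Every $s$--$t'$ path therefore has the form $s\to u\to r_{uv}\to v\to t\to\cdots\to t'$ for some $uv\in R$. Its length is $X_{su}+X_{vt}+T$, where $T$ is the sum of the $|R|$ i.i.d.\ uniform lengths on the tail. The tail edges are disjoint from $P$, so $T$ is independent of $\Vec{X}_P$. Moreover $\max_{uv\in R}(X_{su}+X_{vt})=:M$ is exactly the longest path length in $\graphzero_R$ with zero-length $R$-edges. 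Hence the longest $s$--$t'$ path length equals $M+T$, and it suffices to study $\Psi_{\rm MAX}^+(x)=\Pr[M+T\le x]$.

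\emph{Convolution.} By Proposition~\ref{proposition:convolution},
\begin{align*}
\Psi_{\rm MAX}^+(x)=\int_{\mathbb{R}}\Psi_{\rm MAX}(x-y)\,\unif_{|R|}(y)\,dy .
\end{align*}
Since $M\ge 0$, $\Psi_{\rm MAX}(x-y)=0$ whenever $y>x$. For $0\le x\le 1$ only $y\in[0,x]\subseteq[0,1]$ contributes. On that range $\unif_{|R|}(y)=y^{|R|-1}/(|R|-1)!$, because the simplex formula for sums of uniforms holds on $[0,1]$. Also, for $0\le x-y\le 1$, Proposition~\ref{proposition:constant} with $m_1=|P|$ gives $\Psi_{\rm MAX}(x-y)=A_0(x-y)^{|P|}$, where $A_0=\Psi_{\rm MAX}(1)$.

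\emph{Beta integral.} Substituting both forms gives
\begin{align*}
\Psi_{\rm MAX}^+(x)=\frac{A_0}{(|R|-1)!}\int_0^x (x-y)^{|P|}y^{|R|-1}\,dy=\frac{A_0\,|P|!}{(|P|+|R|)!}\,x^{|P|+|R|}.
\end{align*}
This equals $\frac{|P|!}{(|P|+|R|)!}x^{|R|}\Psi_{\rm MAX}(x)$, and we may take $A=A_0|P|!/(|P|+|R|)!$. The monomial form of $\Psi_{\rm MAX}^+$ is also consistent with Proposition~\ref{proposition:constant} applied directly to $\graphone_R$.

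The case $R=\emptyset$ is degenerate: then $\graphone_R$ has no $s$--$t'$ path, so $R\neq\emptyset$ is assumed throughout. The main point needing care is the first step, namely checking that the tail variable $T$ is genuinely independent of $M$. I also need to check that restricting to $0\le x\le1$ keeps both the uniform-sum density and $\Psi_{\rm MAX}$ in their polynomial regimes. The remaining computation is routine.
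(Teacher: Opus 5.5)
Your proof is correct and follows the same route as the paper. You write $\Psi^+_{\rm MAX}$ as the convolution of $\Psi_{\rm MAX}$ with the independent sum of the $|R|$ tail lengths, use that both are polynomial on $[0,1]$, and evaluate the resulting integral; the differences are cosmetic (you put the derivative on the tail density $\unif_{|R|}$ and use a Beta integral, where the paper differentiates $\Psi_{\rm MAX}$ and integrates by parts repeatedly), and your separate constant $A_0$ cleans up the paper's double use of $A$.
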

\begin{proof}
Remember that $\Psi_{\rm MAX}(x)=Ax^{|P|}$ by 
Proposition~\ref{proposition:constant}, and that 
the difference of $\graphzero_R$ and $\graphone_R$ is 
whether or not the graph has $t-t'$ path.
Then, since $t-t'$ path is always in the $s-t'$ path, 
we have
\begin{align*}
\Psi_{\rm MAX}^+(x)&=\Pr[X_{\rm MAX}+X_t\le x]=
\int_0^x\Unif_{|R|}(x-s)\frac{d}{dx}\Psi_{\rm MAX}(s) ds \\
&=\int_0^x\frac{(x-s)^{|R|}}{|R|!}A |P|s^{|P|-1} ds,
\end{align*}
where $X_{\rm MAX}$ is the longest path length in $\graphzero_R$, 
and $X_t$ is the length of $t-t'$ path.
This is computed by repeating the integration by parts, which gives
\begin{align*}
&\Psi_{\rm MAX}^+(x)=\left[\frac{(x-s)^{|R|-1}}{(|R|-1!}\frac{s^{|P|}}{|P|}\right]_0^x +\int_0^x\frac{(x-s)^{|R|-1}}{(|R|-1)!}A \frac{s^{|P|}}{|P|} ds=\int_0^x\frac{(x-s)^{|R|-1}}{(|R|-1)!}A \frac{s^{|P|}}{|P|} ds \\
&=\int_0^x\frac{(x-s)^{|R|-2}}{(|R|-2)!}A \frac{s^{|P|+1}}{|P|(|P|+1)} ds=\cdots=\int_0^x A\frac{s^{|P|+|R|-1}}{|P|(|P|+1)\cdots(|P|+|R|-1)} ds\\
&=A x^{|P|+|R|}\frac{|P|!}{(|P|+|R|)!}=\frac{|P|!}{(|P|+|R|)!}x^{|R|}\Psi_{\rm MAX}(x).
\end{align*}
\end{proof}

To our surprise,
we can shift the random length of the edges in $R$ to the ``tail'' of $\graphone_R$.
Let $\tilde \Psi_{\rm MAX}(x)$ be the longest path length 
distribution function of $\graphtwo_R$.
In computing $\tilde \Psi_{\rm MAX}(x)$ and $\Psi_{\rm MAX}^+(x)$, 
we separate the integrand of Theorem~\ref{th:DAGLP} into $\Psi_U(x,\Vec{y})$, 
$\tilde \Psi_V(\Vec{y})$ and $\Psi_V^+(\Vec{y})$ for $\Vec{y}\in\mathbb{R}^R$ 
so that:
\begin{restatable}{proposition}{propositionPsiMax} \label{proposition:PsiMax}
There exist $(|R|+1)$-variable function $\Psi_U(x,\Vec{y})$ and 
$|R|$-variable functions $\tilde \Psi_V(\Vec{y})$ and $\Psi_V^+(\Vec{y})$ 
for $\Vec{y}\in \mathbb{R}^R$ satisfying
\begin{align*}
\tilde \Psi_{\rm MAX}(x)&=\int_{\mathbb{R}^R}\Psi_U(x,\Vec{y}) \tilde\psi_V(\Vec{y})d\Vec{y}, ~~\text{and}\\
\Psi_{\rm MAX}^+(x)&=\int_{\mathbb{R}^R} \Psi_U(x,\Vec{y}) \psi_V^+(\Vec{y})d\Vec{y}, 
\end{align*}
where $\psi_V^+(\Vec{y})=D_R\Psi_V^+(\Vec{y})$ and
$\tilde \psi_V(\Vec{y})=D_R\tilde\Psi_V(\Vec{y})~~ (D_R=\prod_{uv\in R}\frac{\partial}{\partial y_{uv}})$.
That is,
\begin{align}
\Psi_U(x,\Vec{y})&=\int_{\mathbb{R}^U}\prod_{u\in U}\Unif(x-z_u)\frac{\partial}{\partial z_u}\prod_{v\in V_u}H(z_u-y_{uv})d\Vec{z}, \label{form:Psi_U0} \\
\tilde \Psi_V(\Vec{y})&=\int_{\mathbb{R}^V}\prod_{uv\in R} \Unif(y_{uv}- z_v)\prod_{v\in V} \unif(z_u) d\Vec{z}, ~~\text{and} \label{form:tildePsi_V0}\\
\Psi_V^+(\Vec{y})&=\int_{\mathbb{R}^V}\prod_{uv\in R} H(y_{uv}- z_v)\int_{\mathbb{R}}\prod_{v\in V} \unif(z_u-t) \unif_{|R|}(t)dt d\Vec{z}. \label{form:Psiplus_V0}
\end{align}
\end{restatable}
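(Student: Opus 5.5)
We apply Theorem~\ref{th:DAGLP} to $\graphtwo_R$ and to $\graphone_R$ and then split the resulting integral at the subdivision vertices. Write $w_{uv}$ for the vertex subdividing $uv\in R$, and rename its potential $z_{w_{uv}}$ as $y_{uv}$. In both graphs the source $s$ contributes the factor $H(x-z_s)$. The vertex $s$ has children $U$, so its derivative factor is $\frac{\partial}{\partial z_s}\prod_{u\in U}\Unif(z_s-z_u)$. Integrating out $z_s$ against $H(x-z_s)$ turns this into $\prod_{u\in U}\Unif(x-z_u)$; this is the same calculation as in Proposition~\ref{proposition:tail}, written at the level of distribution functions. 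Each $u\in U$ has children $\{w_{uv}:v\in V_u\}$, and these edges have static length $0$ in both graphs, so the factor at $u$ is $\frac{\partial}{\partial z_u}\prod_{v\in V_u}H(z_u-y_{uv})$. Integrating over $\Vec{z}_U$ with $\Vec{y}$ held fixed therefore gives exactly \eqref{form:Psi_U0}. Everything that depends on $U$ and $s$ is collected in $\Psi_U(x,\Vec{y})$, and this part is identical for the two graphs.

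The remaining factors are those at the $w_{uv}$ and at $V$ (and, for $\graphone_R$, along the tail). Each $w_{uv}$ has the single child $v$, so its factor is $\frac{\partial}{\partial y_{uv}}F_{w_{uv}v}(y_{uv}-z_v)$. Since $y_{uv}$ appears in no other factor of this group, the product of all these derivatives equals $D_R$ applied to $\prod_{uv\in R}F_{w_{uv}v}(y_{uv}-z_v)$. The operator $D_R$ acts only on $\Vec{y}$, so it commutes with the integration over $\Vec{z}_V$ (and over the tail variables); this is the step that produces $\tilde\psi_V=D_R\tilde\Psi_V$ and $\psi_V^+=D_R\Psi_V^+$. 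In $\graphtwo_R$ the edges $w_{uv}v$ are random, so the factor is $\Unif(y_{uv}-z_v)$. Each $v$ has the single child $t$ with $z_t=0$, so its factor is $\frac{\partial}{\partial z_v}\Unif(z_v)=\unif(z_v)$, and this gives \eqref{form:tildePsi_V0} (the index $\unif(z_u)$ printed there should read $\unif(z_v)$).

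In $\graphone_R$ the edges $w_{uv}v$ have length $0$, so the factor is $H(y_{uv}-z_v)$. Now $t$ is an internal vertex followed by a chain of $|R|$ random edges ending at $t'$. Integrating out the chain variables by repeated use of Proposition~\ref{proposition:convolution} collapses the chain to the density $\unif_{|R|}(z_t)$ of a sum of $|R|$ uniforms; write $t$ for the variable $z_t$ from now on, which is how \eqref{form:Psiplus_V0} names it. The factor at $t$ is $\frac{\partial}{\partial z_t}\Unif(z_t-z_{t_1})$, where $t_1$ is the next tail vertex, and since $V$ are now parents of $t$, the factor at $v$ becomes $\unif(z_v-t)$. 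This gives \eqref{form:Psiplus_V0}. Fubini then separates the full integral into $\int\Psi_U\cdot(\text{V-part})\,d\Vec{y}$, which proves both identities.

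The main obstacle is justifying Fubini and the exchange of $D_R$ with the integrals, because $\Psi_V^+$ contains Heaviside factors, so $D_R\Psi_V^+$ involves Dirac deltas. I would handle this the way the paper does after Proposition~\ref{proposition:convolution}: treat $\delta$ as the generalized derivative of $H$ and use the sifting property, checking that every integrand is bounded with compact support, so that all manipulations are legitimate.
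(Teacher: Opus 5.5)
Your proposal is correct and follows the paper's route. The paper states that the proposition is immediate from Theorem~\ref{th:DAGLP}; its supplementary ``longer version'' only re-derives the same factorization for $\graphtwo_R$ and $\graphone_R$ via Riemann sums and the sifting property. Your steps (reading the integrand vertex by vertex, splitting along the subdivision variables $\Vec{y}$, commuting $D_R$ with the $\Vec{z}_V$-integral, and collapsing the tail to $\unif_{|R|}$) are exactly what that application amounts to. You are also right that $\unif(z_u)$ in the formula for $\tilde\Psi_V$ should read $\unif(z_v)$, and likewise $\unif(z_u-t)$ in the formula for $\Psi_V^+$ should read $\unif(z_v-t)$.
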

\begin{proof}
The claim is immediate by Theorem~\ref{th:DAGLP}\footnote{
Though Proposition~\ref{proposition:PsiMax} is immediate by 
Theorem~\ref{th:DAGLP}, we show a longer version of the proof
in Section~\ref{section:supplementary} , which is a demonstration of Theorem~\ref{th:DAGLP}
for the specific graphs, $\graphtwo_R$ and $\graphone_R$.
}.
\end{proof}

The following lemma shows that $\tilde \Psi_{\rm MAX}(x)$ is 
proportional to $\Psi_{\rm MAX}^+(x)$.
\begin{lemma}
\label{lemma:equalPsi}
For $x\le 1$, we have 
\begin{align*}
\tilde \Psi_{\rm MAX}(x)=|R|!\Psi_{\rm MAX}^+(x).
\end{align*}
\end{lemma}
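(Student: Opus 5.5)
The plan is to work entirely inside the integral representations of Proposition~\ref{proposition:PsiMax}. Both $\tilde\Psi_{\rm MAX}(x)$ and $\Psi^+_{\rm MAX}(x)$ are multivariate convolutions of the same kernel $\Psi_U(x,\Vec{y})$, against $\tilde\psi_V$ and $\psi_V^+$ respectively. So it suffices to show that, for $x\le 1$, replacing $\tilde\psi_V$ by $|R|!\,\psi_V^+$ does not change the value of $\int_{\mathbb{R}^R}\Psi_U(x,\Vec{y})\,\cdot\,d\Vec{y}$.

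A pointwise comparison will not work. Conditioning on the values $a_{uv}=X_{su}+X_{vt}$, the $\graphtwo_R$ side contributes $\prod_{uv\in R}(x-a_{uv})^+$, while the $\graphone_R$ side contributes $(x-\max_{uv}a_{uv})^{|R|}/|R|!$, and these differ. The equality must therefore come from moving randomness between edges, as in Fig.~\ref{fig:subdivision}(b)$\to$(c)$\to$(d), using Proposition~\ref{proposition:convolution}.

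\textbf{Step 1: reduce to a box.} Since $x\le 1$, every relevant variable ($z_u$, $y_{uv}$, $z_v$ and the tail variable $t$) ranges in $[0,x]\subseteq[0,1]$. On this box every $\unif$ factor equals $1$ and every $\Unif(w)$ equals $w$. I will make this explicit:
\begin{itemize}
\item $\Psi_U(x,\Vec{y})=\prod_{u\in U}\bigl(x-\max_{v\in V_u}y_{uv}\bigr)^+$, since the factor $\frac{\partial}{\partial z_u}\prod_v H(z_u-y_{uv})$ is $\delta\bigl(z_u-\max_v y_{uv}\bigr)$, and the sifting property then applies.
\item $\tilde\psi_V(\Vec{y})$ is obtained by differentiating $\prod_{uv}\Unif(y_{uv}-z_v)$ in each $y_{uv}$. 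This yields $\prod_{uv}\unif(y_{uv}-z_v)$, which on the box is just the indicator $\prod_{uv}H(y_{uv}-z_v)$.
\item $\psi_V^+(\Vec{y})$ is obtained by differentiating $\prod H(y_{uv}-z_v)$. This produces deltas $\prod_{uv}\delta(y_{uv}-z_v)$, smeared by the convolution with $\unif_{|R|}(t)=t^{|R|-1}/(|R|-1)!$ on $[0,1]$.
\end{itemize}

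\textbf{Step 2: apply Proposition~\ref{proposition:convolution} coordinate by coordinate.} In the $\graphtwo_R$ integral, each coordinate $y_{uv}$ carries a uniform (edge $w_{uv}v$) composed with the sifted $\delta$ coming from $\Psi_U$. Commutativity lets me move the derivative from the $\Vec{y}$ side onto the $\Psi_U$ side, one coordinate at a time. After doing this for all $|R|$ coordinates, the $\graphtwo_R$ integral takes the form
\[
\int_{[0,x]^V}\int_{[0,x]^R} D_R\Psi_U(x,\Vec{y})\prod_{uv\in R}H(y_{uv}-z_v)\,d\Vec{y}\,d\Vec{z}.
\]
The $\graphone_R$ integral, in turn, becomes $\Psi_U$ evaluated at $y_{uv}=z_v+t$, integrated against $t^{|R|-1}/(|R|-1)!$.

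\textbf{Step 3: reconcile the two sides (expected main obstacle).} The crux is to show that these two expressions agree up to the factor $|R|!$. What I would try first is to exploit homogeneity rather than compute the integrals directly:
\begin{itemize}
\item By Proposition~\ref{proposition:constant}, for $x\le 1$ we have $\tilde\Psi_{\rm MAX}(x)=\tilde A x^{|P|+|R|}$, since $\graphtwo_R$ has $|P|+|R|$ random edges.
\item By Proposition~\ref{proposition:tail}, $\Psi^+_{\rm MAX}(x)=\frac{|P|!}{(|P|+|R|)!}Ax^{|P|+|R|}$.
\item Hence it suffices to show $\tilde A=|R|!\,|P|!\,A/(|P|+|R|)!$.
\end{itemize}
To prove this, I would compare the coefficients by scaling $\Vec{y}$ and $t$ jointly by $x$ in the integrals from Step 2. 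This should reduce both sides to the same integral over the order-polytope region of $\cat{\Vec{z}_U}{\Vec{z}_V}$ (the one appearing in Lemma~\ref{lemma:Uniform-hard}), so that the remaining factors are pure Beta-type integrals. One gives $|R|!$ from the $|R|$ independent uniform $y$-coordinates integrated by parts. The other gives the $1/|R|!$ already present in $\unif_{|R|}$, and the two should match.

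The delicate point is justifying the exchange of $D_R$ across the box boundary. The boundary terms in the repeated integration by parts vanish only because every variable is pinned to $[0,x]$ with $x\le1$, so the hypothesis $x\le 1$ is genuinely used. This step must be checked carefully, term by term, in the spirit of the proof of Proposition~\ref{proposition:tail}.
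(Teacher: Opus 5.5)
The statement is false whenever $|R|\ge 2$ and $0<x\le 1$, so Step~3 cannot be completed. Your own conditioning computation already shows this. Both graphs give the edges of $P$ the same i.i.d.\ law. With $a_{uv}=X_{su}+X_{vt}$ and $\max a:=\max_{uv\in R}a_{uv}$, you have, for $x\le 1$, exactly $\tilde\Psi_{\rm MAX}(x)=\mathbb{E}\bigl[\prod_{uv\in R}(x-a_{uv})^+\bigr]$ and $|R|!\,\Psi^+_{\rm MAX}(x)=\mathbb{E}\bigl[\bigl((x-\max a)^+\bigr)^{|R|}\bigr]$. Each factor satisfies $(x-a_{uv})^+\ge(x-\max a)^+$, so the first integrand dominates the second pointwise. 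The domination is strict on the event that $\max a<x$ and the $a_{uv}$ are not all equal. For $|R|\ge 2$ the $a_{uv}$ are a.s.\ pairwise distinct and $\Pr[\max a<x]>0$, hence $\tilde\Psi_{\rm MAX}(x)>|R|!\,\Psi^+_{\rm MAX}(x)$.

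Your inference ``these differ, so the equality must come from moving randomness between edges'' is where the argument goes wrong. Both quantities are exact expectations over the same distribution, and no rewriting via Proposition~\ref{proposition:convolution} can change their values. A concrete instance: take $U=\{u\}$, $V=\{v_1,v_2\}$, $R=\{uv_1,uv_2\}$. Then $\tilde\Psi_{\rm MAX}(x)=\int_0^x\bigl((x-a)^2/2\bigr)^2da=x^5/20$. Meanwhile $\Psi_{\rm MAX}(x)=x^3/3$ and Proposition~\ref{proposition:tail} give $\Psi^+_{\rm MAX}(x)=x^5/60$, so $|R|!\,\Psi^+_{\rm MAX}(x)=x^5/30$. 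The coefficient identity $\tilde A=|R|!\,|P|!\,A/(|P|+|R|)!$ you aim for would read $1/20=1/30$.

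The paper's own proof does not avoid this problem. It deduces the lemma from Proposition~\ref{proposition:psi_V}, i.e.\ $\tilde\Psi_V=|R|!\,\Psi_V^+$ for $\Vec{y}\le\Vec{1}$, which fails on the same example: $\tilde\Psi_V(y,y)=y^4/4$ but $2\Psi_V^+(y,y)=y^4/6$. The faulty step is the repeated ``commutativity of convolution''. The high-order derivatives of $\prod_{u\in U_v}(y_{uv}-z_v)H(y_{uv}-z_v)$ in $z_v$, and of $\prod_{v\in V}\unif_{|U_v|+1}(z_v-t)$ in $t$, contain Dirac-delta terms at the kinks of the Heaviside factors. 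The paper drops these terms, but they cancel only when the corresponding $y_{uv}$ (resp.\ $z_v$) all coincide. So the only correct statement in this direction is the strict inequality above. The relation $\Psi_{\rm MAX}(x)=x^{-|R|}\frac{(|P|+|R|)!}{|P|!\,|R|!}\tilde\Psi_{\rm MAX}(x)$ used for Theorem~\ref{th:directedhard} would give $x^3/2$ instead of $x^3/3$ in this example, so it needs a genuinely different argument.
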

\begin{proof}
By Proposition~\ref{proposition:psi_V} that is shown later, we have that 
\begin{align*}
\tilde \psi_V(\Vec{y})=|R|!\psi_V^+(\Vec{y}) 
\end{align*}
for $\Vec{y}\le \Vec{1}$.
Observe that, by Proposition~\ref{proposition:PsiMax},
both integrands of $\tilde \Psi_{\rm MAX}(x)$
and $\Psi_{\rm MAX}^+(x)$ are zero
when any one of the dummy variables is greater than $x$.
This is because the integrand of $\Psi_U(x,\Vec{y})$
has factors $\Unif(x-z_u)$ and $H(z_u-y_{uv})$ for $u\in U$ and
$uv\in R$.
Therefore, the coefficient $|R|!$ is multiplied to 
the entire integral throughout $\mathbb{R}^V$, the interval 
of integral.
\end{proof}
By the above arguments, we establish Theorem~\ref{th:directedhard} as follows.
\begin{proof}(of Theorem~\ref{th:directedhard})
By Lemma~\ref{lemma:equalPsi} and Proposition~\ref{proposition:tail},
we have 
\begin{align*}
\Psi_{\rm MAX}(x)=x^{-|R|}\frac{(|P|+|R|)!}{|P|! |R|!} \tilde \Psi_{\rm MAX}(x).
\end{align*}
If we could compute the longest path length distribution function
$\tilde\Psi_{\rm MAX}(x)$ of $\graphtwo_R$, 
we would have $\Psi_{\rm MAX}(x)$, which
is $\sharpp$-hard by Lemma~\ref{lemma:Uniform-hard}.
\end{proof}

To complete the proof of Lemma~\ref{lemma:equalPsi},
we obtain the following by direct calculation
\begin{restatable}{proposition}{propositionPsiU}
\label{proposition:psi_U}
Let $V_u\subseteq V$ be the set of children of $u\in U$.
We have
\begin{align*}
&\Psi_U(x,\Vec{y})
=\prod_{u\in U}\Unif\left(x-\max_{v\in V_u}\{y_{uv}\}\right). 
\end{align*}
\end{restatable}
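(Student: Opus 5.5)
The integral in (\ref{form:Psi_U0}) factorizes over $u\in U$, since each $z_u$ appears only in the factor $\Unif(x-z_u)\frac{\partial}{\partial z_u}\prod_{v\in V_u}H(z_u-y_{uv})$. The plan is therefore to write
\begin{align*}
\Psi_U(x,\Vec{y})=\prod_{u\in U}\int_{\mathbb{R}}\Unif(x-z_u)\,\frac{\partial}{\partial z_u}\prod_{v\in V_u}H(z_u-y_{uv})\,dz_u
\end{align*}
and then evaluate each one-dimensional integral separately. Fubini is unproblematic here: each factor is a bounded function times a generalized derivative of a step function with compact effective support in $z_u$.

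For a fixed $u$, I would first simplify the product of Heaviside functions. Since $H$ takes only the values $0$ and $1$, we have $\prod_{v\in V_u}H(z_u-y_{uv})=H\bigl(z_u-\max_{v\in V_u}y_{uv}\bigr)$. Writing $m_u=\max_{v\in V_u}y_{uv}$, the generalized derivative with respect to $z_u$ is the Dirac delta $\delta(z_u-m_u)$. The sifting property stated in the preliminaries then gives
\begin{align*}
\int_{\mathbb{R}}\Unif(x-z_u)\,\delta(z_u-m_u)\,dz_u=\Unif(x-m_u).
\end{align*}
Alternatively, one can use Proposition~\ref{proposition:convolution} to move the derivative onto $\Unif$. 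This corresponds to $\Pr[Y+X\le x]$ with $Y$ the constant $m_u$ and $X$ uniform, and again yields $\Unif(x-m_u)$. Taking the product over $u\in U$ gives the claimed formula.

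The only delicate step is justifying that $\frac{\partial}{\partial z_u}H(z_u-m_u)=\delta(z_u-m_u)$ in the generalized sense, and that the sifting property applies. The latter needs $\Unif$ to be continuous, which it is. Ties among the $y_{uv}$ are harmless because $H$ of the maximum already equals the product pointwise. Vertices $u$ with $V_u=\emptyset$ would give an empty product equal to $1$ with zero derivative; I would note that in the transportation graph every $u\in U$ has children, or otherwise interpret the maximum accordingly. So this step is bookkeeping rather than a genuine obstacle.
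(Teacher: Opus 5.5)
Your proposal is correct and follows the paper's proof essentially step for step. The paper likewise rewrites $\prod_{v\in V_u}H(z_u-y_{uv})$ as $H\bigl(z_u-\max_{v\in V_u}\{y_{uv}\}\bigr)$, differentiates it to $\delta\bigl(z_u-\max_{v\in V_u}\{y_{uv}\}\bigr)$, and applies the sifting property to get $\Unif\bigl(x-\max_{v\in V_u}\{y_{uv}\}\bigr)$ for each $u\in U$; your explicit factorization of the integral over $u\in U$ only spells out a step the paper leaves implicit.
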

\begin{proof}
Remember that 
$H(x)$ and 
$\frac{d}{dx}H(x)=\delta(x)$, the Dirac delta, 
is the distribution function and the density function of
static value $0$, respectively.
By Proposition~\ref{proposition:PsiMax}, we have
\begin{align*}
\Psi_U(\Vec{x},\Vec{y})
&=\int_{\mathbb{R}^U}\prod_{u\in U} \left(\Unif(x-z_u) \frac{\partial}{\partial z_u} \prod_{v\in V_u} H(z_u-y_{uv})\right) d\Vec{z}.
\end{align*}
Since $\prod_{v\in V_u}H(z_u-y_{uv})$ is nonzero if and only if 
$z_u-y_{uv}\ge 0$ for all $uv\in R$, we have $\prod_{v\in V_u}H(z_u-y_{uv})=H(z_u-\max_{v\in V_u}\{y_{uv}\})$. Therefore, the above is equal to 
\begin{align*}
&\int_{\mathbb{R}^U}\prod_{u\in U} \Unif(x-z_u) \frac{\partial}{\partial z_u} H\left(z_u-\max_{v\in V_u}\{y_{uv}\}\right) d\Vec{z}\\
&=\int_{\mathbb{R}^U}\prod_{u\in U} \Unif(x-z_u)\>\> \delta\left(z_u-\max_{v\in V_u}\{y_{uv}\}\right) d\Vec{z},
\end{align*}
which proves the claim since we have
$\int_{\mathbb{R}} \Unif(x-z)\delta(z-c)dz = \Unif(x-c)$ 
by the sifting property of Dirac delta.
\end{proof}

\begin{restatable}{proposition}{PropositionPsiV}
\label{proposition:psi_V}
For $\Vec{y}\le \Vec{1}$, we have
\begin{align*}
\tilde \Psi_V(\Vec{y})=|R|!\Psi_V^+(\Vec{y}).
\end{align*}
\end{restatable}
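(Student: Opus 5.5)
Throughout I take the integrand factor $\unif(z_u)$ in (\ref{form:tildePsi_V0}) and (\ref{form:Psiplus_V0}) to be $\unif(z_v)$ (the length of $vt$), which is what Proposition~\ref{proposition:PsiMax} intends. The plan is to write both sides as explicit iterated integrals over the box $[0,1]^V$ and compare them. Two preliminary reductions come first. If some $y_{uv}<0$, every integrand vanishes, since $z_v\ge 0$ is forced by $\unif(z_v)$ and $t\ge 0$ by $\unif_{|R|}(t)$. So I may assume $\Vec{0}\le\Vec{y}\le\Vec{1}$. In this range the arguments $y_{uv}-z_v$ stay at most $1$, so $\Unif(y_{uv}-z_v)=(y_{uv}-z_v)H(y_{uv}-z_v)$. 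Likewise every $t$ that contributes satisfies $t\le 1$, so $\unif_{|R|}(t)=t^{|R|-1}/(|R|-1)!$ on the relevant range. This is exactly where the hypothesis $\Vec{y}\le\Vec{1}$ is used: both sides become polynomial expressions without truncation, just as in Proposition~\ref{proposition:constant}.

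Next I will rewrite each factor of $\tilde\Psi_V$ through Proposition~\ref{proposition:convolution}. Writing $\Unif(y_{uv}-z_v)=\int_{\mathbb{R}}H(y_{uv}-z_v-w_{uv})\unif(w_{uv})\,dw_{uv}$ introduces one uniform variable $w_{uv}$ per $uv\in R$, one for each subdivided $R$-edge. This gives
\begin{align*}
\tilde\Psi_V(\Vec{y})=\int_{\mathbb{R}^V}\int_{\mathbb{R}^R}\prod_{uv\in R}H(y_{uv}-z_v-w_{uv})\prod_{uv\in R}\unif(w_{uv})\prod_{v\in V}\unif(z_v)\,d\Vec{w}\,d\Vec{z}.
\end{align*}
The target $\Psi_V^+$ has the same shape, except that the $|R|$ separate shifts $w_{uv}$ are replaced by a single common shift $t$ carrying the density $\unif_{|R|}(t)$. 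The core of the proof is therefore to ``shift the randomness'' of the $R$-edges into the tail, in the spirit of the commutativity argument announced before Fig.~\ref{fig:reduction}.

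For that step I plan to apply $D_R$, so that I compare $\tilde\psi_V$ with $\psi_V^+$ and then integrate back. Since $\Vec{y}\le\Vec{1}$, $\partial/\partial y_{uv}$ turns the factor $(y_{uv}-z_v)_+$ into $H(y_{uv}-z_v)$. The $|R|$ linear factors are thus traded for $|R|$ Heaviside constraints, which is the same constraint pattern as in $\Psi_V^+$ before its $t$-shift. I then expect to integrate these back in a fixed order of the $R$-edges, by repeated integration by parts exactly as in the proof of Proposition~\ref{proposition:tail}. Each integration should raise the degree by one, and the $|R|$ factorial denominators of $\unif_{|R|}$ should combine into $t^{|R|}/|R|!$. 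Summing over the $|R|!$ possible orders of the $R$-edges would then produce the factor $|R|!$.

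The main obstacle is this middle step. A family of independent constraints $w_{uv}\le y_{uv}-z_v$ is not literally the same event as a single constraint $t\le\min(y_{uv}-z_v)$ with $t$ a sum of $|R|$ uniforms. So the identification must come from the specific polynomial form on $[0,1]$, not from a coupling of random variables. I would first verify it directly in the cases $|R|=1$ and $|R|=2$ with a single $v$, where both sides are explicit polynomials in $\Vec{y}$. If the pointwise identity holds there, I would generalise by induction on $|R|$, peeling off one $R$-edge at a time. If instead it fails pointwise, I would restrict the comparison to the region where $\Psi_U$ is supported and aim for the identity needed in Lemma~\ref{lemma:equalPsi}.
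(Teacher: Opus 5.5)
\textbf{Gap: the key step is missing, and it cannot be filled.}

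The step you call the main obstacle carries all of the content, and you give no argument for it. Nothing in your plan converts the $|R|$ independent shifts $w_{uv}$ into one common shift $t$ with density $\unif_{|R|}$, and ``summing over the $|R|!$ orders of the $R$-edges'' is not tied to any computation. Your $D_R$ step also does not set up the comparison you want. On $\Vec{0}\le\Vec{y}\le\Vec{1}$ it gives $\tilde\psi_V(\Vec{y})=\prod_{v}\min_{u\in U_v}y_{uv}$. That is the tail-free analogue of $\Psi_V^+$ itself, not of $\psi_V^+$.

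More seriously, the step cannot be supplied, because the identity is false already for $|R|=2$. Your worry that independent constraints differ from one common constraint is exactly right. Put $m_v=\min_{u\in U_v}y_{uv}$.
\begin{itemize}
\item On $\Vec{0}\le\Vec{y}\le\Vec{1}$, $\Psi_V^+(\Vec{y})$ depends on $\Vec{y}$ only through $(m_v)_{v\in V}$.
\item $\tilde\Psi_V(\Vec{y})$ is the product over $v$ of $\int_0^{m_v}\prod_{u\in U_v}(y_{uv}-z)\,dz$. On $(0,1]^R$ it strictly increases when a non-minimal $y_{uv}$ is raised.
\end{itemize}
So once some $|U_v|\ge 2$, no constant multiple of $\Psi_V^+$ can equal $\tilde\Psi_V$. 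Two concrete checks:
\begin{itemize}
\item For $V=\{v\}$, $U_v=\{u_1,u_2\}$, $\Vec{y}=(1/2,1)$: $\tilde\Psi_V=\int_0^{1/2}(1/2-z)(1-z)\,dz=5/48$, but $2\Psi_V^+=2\cdot(1/2)^3/6=2/48$.
\item For the matching $R=\{u_1v_1,u_2v_2\}$ it fails even at $\Vec{y}=\Vec{1}$: $\tilde\Psi_V=1/4$, but $2\Psi_V^+=2\int_0^1 t(1-t)^2\,dt=1/6$.
\end{itemize}
The derivative form used in Lemma~\ref{lemma:equalPsi} fails as well. $\tilde\psi_V$ is an ordinary bounded function. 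When some $|U_v|\ge 2$, however, $\psi_V^+=D_R\Psi_V^+$ is a singular distribution carried by the set where the $y_{uv}$, $u\in U_v$, coincide.

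\textbf{Your fallback fails too.} Proving only the integrated identity of Lemma~\ref{lemma:equalPsi} does not help. For the same $U=\{u_1,u_2\}$, $V=\{v\}$ and $0\le x\le 1$, $\tilde\Psi_{\rm MAX}(x)=\int_0^x\bigl((x-z)^2/2\bigr)^2dz=x^5/20$. Proposition~\ref{proposition:tail} gives $\Psi_{\rm MAX}^+(x)=x^5/60$, so $|R|!\,\Psi_{\rm MAX}^+(x)=x^5/30$.

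\textbf{Where the paper's proof goes wrong.} The paper bridges the step you left open with two ``commutativity of convolution'' moves, and both silently drop boundary terms.
\begin{itemize}
\item The first move asserts $(-\partial_{z_v})^{|U_v|}Q_v=|U_v|!\prod_{u\in U_v}H(y_{uv}-z_v)$. But for $|U_v|=2$ and $y_1<y_2$,
\[
\partial_z^2\bigl[(y_1-z)_+(y_2-z)_+\bigr]=2H(y_1-z)+(y_2-y_1)\,\delta(z-y_1).
\]
The discarded delta, integrated against $\unif_3(z)=z^2/2$, contributes $(y_2-y_1)y_1^2/2$. That is exactly the gap $5/48-2/48=1/16$ in the example above.
\item The second move differentiates $\prod_{v}\unif_{|U_v|+1}(z_v-t)$ $|R|$ times in $t$. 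It likewise discards deltas at $t=z_v$ whenever $|V|\ge 2$, which accounts for the matching counterexample.
\end{itemize}
So neither your plan nor the paper's argument can establish the proposition as stated. The $|R|=2$ check you proposed to do first would have revealed this.
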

\begin{proof}
By Proposition~\ref{proposition:PsiMax}, we have 
\begin{align*}
\tilde \Psi_V(\Vec{y})
&=\int_{\mathbb{R}^V}\left(\prod_{uv\in R}\Unif(y_{uv}-z_v)\right)\prod_{v\in V}\unif(z_v) d\Vec{z}\\
&=\int_{\mathbb{R}^V}\left(\prod_{uv\in R}(y_{uv}-z_v)H(y_{uv}-z_v)H(1-(y_{uv}-z_v))\right)\prod_{v\in V}\unif(z_v) d\Vec{z}.
\end{align*}
Since there are factors $H(y_{uv}-z_v)$ and $\unif(z_v)$ for every $uv\in R$ and $\Vec{y}\le \Vec{1}$ by assumption, the integrand is nonzero
only if $z_v\in [0,1]$ for each $v\in V$.
The above is equal to 
\begin{align*}
&\int_{[0,1]^V}\left(\prod_{uv\in R}(y_{uv}-z_v)H(y_{uv}-z_v)H(1-(y_{uv}-z_v))\right)\prod_{v\in V}\unif(z_v) d\Vec{z}.
\end{align*}
Since $H(1-(y_{uv}-z_v))=1$ for $0\le z_v\le y_{uv}\le 1$, this is
equal to
\begin{align*}
&\int_{[0,1]^V}\left(\prod_{uv\in R}(y_{uv}-z_v)H(y_{uv}-z_v)\right)\prod_{v\in V}\unif(z_v) d\Vec{z}.
\end{align*}
By setting $U_v$ as the set of parents of $v\in V$, 
we can transform the above as 
\begin{align*}
&\int_{[0,1]^V}\left(\prod_{v\in V}\prod_{u\in U_v}(y_{uv}-z_v)H(y_{uv}-z_v)\right)\prod_{v\in V}\unif(z_v) d\Vec{z}\\
&=\int_{[0,1]^V}\left(\prod_{v\in V}Q_v(\Vec{y}_{U_v},z_v)\right)\prod_{v\in V}\unif(z_v) d\Vec{z},
\end{align*}
where $\Vec{y}_{U_v}\subseteq [0,1]^{U_v}$ is a subvector of $\Vec{y}$ and $Q_v(\Vec{y}_{U_v},z_v)=\prod_{u\in U_v}(y_{uv}-z_v)H(y_{uv}-z_v)$.
Then, by the commutativity of convolution with respect to $z_v$'s for $v\in V$, 
we consider the $|U_v|$-th derivative of $Q_v(\Vec{y}_{U_v},z_v)$ and
$|U_v|$-th antiderivative of $\unif(z_v)$ with respect to $z_v$.
Remember that $\unif_m(x)$ is the density of $m$ i.i.d. uniformly random variables 
distributed over $[0,1]$.
Also, the $m$-th antiderivative of $\unif(x)$ with respect to $x$ is 
$\unif_{m+1}(x)$ for $0\le x\le 1$. Now, we have 
\begin{align*}
\tilde \Psi_V(\Vec{y})=\int_{[0,1]^V}\left(\prod_{v\in V}\left(-\frac{\partial}{\partial z_v}\right)^{|U_v|}Q_v(\Vec{y}_{U_v},z_v)\right)\prod_{v\in V}\unif_{|U_v|+1}(z_v) d\Vec{z}.
\end{align*}
Since, by definition, we have 
\begin{align*}
\left(-\frac{\partial}{\partial z_v}\right)^{\!\!|U_v|}Q_v(\Vec{y}_{U_v},z_v)=\left(-\frac{\partial}{\partial z_v}\right)^{\!\!|U_v|}\!\prod_{u\in U_v}\!(y_{uv}\!-\!z_v)H(y_{uv}\!-\!z_v)=|U_v|!\!\prod_{u\in U_v}\!\!H(y_{uv}\!-\!z_v),
\end{align*}
the above arguments results in 
\begin{align*}
\tilde \Psi_V(\Vec{y})&=\left(\prod_{v\in V}|U_v|!\right)\int_{[0,1]^V}\left(\prod_{u\in U_v}H(y_{uv}-z_v)\right)\prod_{v\in V}\unif_{|U_v|+1}(z_v) d\Vec{z}.
\end{align*}
Notice that this form is equal to 
\begin{align*}
\left(\prod_{v\in V}|U_v|!\right)\Pr\left[\bigwedge_{uv\in R}\left(\sum_{uv\in R}X_{uv} \le y_{uv}\right)\right].
\end{align*}
This is as though we consider yet another graph $G_R$ by replacing
each edge $vt\in P_V$  of $G_R$ by a path $\pi_{vt}$ of $|U_v|+1$ 
edges (illustrated in Fig.~\ref{fig:subdivision} (c)).
In other words, we removed the random length from each of the incoming 
edges of $v$ and gave the lengths to $v$'s outgoing edges.

Consider connecting length zero edge at the sink, where the density function 
of the length is the Dirac delta.
We have
\begin{align*}
\tilde \Psi_V(\Vec{y})&=\left(\prod_{v\in V}|U_v|!\right)\int_{[0,1]^V}\int_{\mathbb{R}}\left(\prod_{uv\in R}H(y_{uv}-z_v)\right)\left(\prod_{v\in V}\unif_{|U_v|+1}(z_v-t)\right)\delta(t) dt d\Vec{z}.
\end{align*}
Then, we once again shoehorn the length of edges between $V$ and $t$ 
to the ``tail'' edge by the commutativity of convolution 
with respect to $t$. We consider the $\sum_{v\in V}|U_v|$-th derivative of 
$\prod_{v\in V}\unif_{|U_v|+1}(z_v-t)$ and 
the $\sum_{v\in V}|U_v|$-th antiderivative
of $\delta(t)$ with respect to $t$.
Notice that $\sum_{v\in V}|U_v|=|R|$ and $\delta(t)=\unif_0(t)$ for
$0\le t \le 1$.
We have
\begin{align*}
&\tilde \Psi_V(\Vec{y})=\left(\prod_{v\in V}|U_v|!\right)\int_{[0,1]^V}\int_{\mathbb{R}}\left(\prod_{uv\in R}H(y_{uv}-z_v)\right)\left(\prod_{v\in V}\unif_{|U_v|+1}(z_v-t)\right)\delta(t) dt d\Vec{z}.
\end{align*}
Here, we apply the commutativity of convolution with respect to $t$,
meaning that the above is equal to 
\begin{align*}
&\left(\prod_{v\in V}|U_v|!\right)\int_{[0,1]^V}\int_{\mathbb{R}}\left(\prod_{uv\in R}H(y_{uv}-z_v)\right)\frac{|R|!}{\prod_{v\in V}|U_v|!}\left(\prod_{v\in V}\unif(z_v-t)\right)\unif_{|R|}(t) dt d\Vec{z}\\
&=|R|!\int_{[0,1]^V}\int_{\mathbb{R}}\left(\prod_{uv\in R}H(y_{uv}-z_v)\right)\left(\prod_{v\in V}\unif(z_v-t)\right)\unif_{|R|}(t) dt d\Vec{z}
=|R|!\Psi_V^+(\Vec{y}),
\end{align*}
where $X_t$ is the sum of 
$|R|$ i.i.d. uniformly over $[0,1]$ random variables,
proving the claim.
\end{proof}

\subsection{$\sharpp$-hardness for undirected graphs}
We extend the result to undirected graphs.
In the following, we consider the underlying undirected graph 
of the transportation graph.
The difference between the undirected graph and the directed graph is
that there may be a path from $s$ to $t$ that goes from a vertex in $V$ to 
a vertex in $U$ on the way. 
We circumvent the obstacle
by considering uniform random lengths i.i.d. over $[1,2]$.
In what follows, we denote the uniform distribution over $[1,2]$ by $\Unif^+$;
$\Unif^+_{m}(x)$ and $\unif^+_{m}(x)$ are the sum distribution and density, 
respectively, of $m$ i.i.d. random variables drawn from $\Unif^+$.

\begin{restatable}{corollary}{corollaryUndirectedPositiveShortestNegativeLongest}
\label{corollary:undirected_positive_shortest_negative_longest}
Given an undirected graph $G=(V,E)$, SPPDF-IID (resp. LPPDF-IID) is
$\sharpp$-hard if the edge length are uniformly distributed over $[1,2]$ (resp. $[-2,-1]$).
\end{restatable}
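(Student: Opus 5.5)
The plan is to reduce from the directed, all-random case of Theorem~\ref{th:directedhard}, carried out on the subdivided transportation graph $\graphtwo_R$ (or $\graphone_R$), by taking its underlying undirected graph. Two facts drive the reduction.

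\textbf{Step 1 (parity).} $\graphtwo_R$ is layered as $s$, $U$, the subdivision vertices of $R$, $V$, $t$. Consecutive layers alternate, so the underlying undirected graph is bipartite with $s$ and $t$ on the same side. Hence every simple undirected $s$--$t$ path has an even number $4+2j$ of edges ($j\ge 0$), and the paths with exactly $4$ edges are precisely the directed $s$--$t$ paths of $\graphtwo_R$. The same argument applies to $\graphone_R$, where the number becomes $4+|R|+2j$.

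\textbf{Step 2 (separating paths by edge count).} Take edge lengths $1+Y_e$ with $Y_e\iidsim\Unif$ on $[0,1]$. A $4$-edge path then has length in $[4,8]$, while any path with at least $6$ edges has length at least $6$. Therefore, for $x\in[4,6)$, the event $X_{\rm MIN}\le x$ coincides with the event that some directed path $\pi$ of $\graphtwo_R$ satisfies $\sum_{e\in\pi}Y_e\le x-4$. This gives
\begin{align*}
F_{\rm MIN}^{\rm undirected}(x)=F_{\rm MIN}^{\Vec{Y}}(x-4)\quad\text{for } x\in[4,6),
\end{align*}
where the right-hand side is the directed SPPDF of $\graphtwo_R$. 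The $[-2,-1]$ case for LPPDF-IID is symmetric. With lengths $-1-Y_e$, every path with at least $6$ edges has length at most $-6$, so for $x\in[-6,-4]$ its constraint in $F_{\rm MAX}$ holds automatically. By Proposition~\ref{proposition:minmaxsymmetry}, $F_{\rm MAX}$ at $x$ then becomes $\bar F_{\rm MIN}^{\Vec{Y}}(-4-x)$ on the directed graph. So both claims reduce to one statement: the directed SPPDF of $\graphtwo_R$ with i.i.d.\ $[0,1]$ lengths is $\sharpp$-hard to evaluate at arguments $y\in[0,1]$ (indeed at $y<2$).

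\textbf{Step 3 (hardness near the left end of the support).} This is the main obstacle. Proposition~\ref{proposition:minmaxsymmetry} converts $F_{\rm MIN}$ at $y$ into $F_{\rm MAX}$ of $\Vec{1}-\Vec{Y}$ at $4-y$, which lies at the right end of the support rather than in the monomial window $[0,1]$ where Theorem~\ref{th:directedhard} lives. So I would argue directly.
\begin{itemize}
\item For $y\in[0,1]$, the event $\sum_{e\in\pi}Y_e\le y$ confines the variables on $\pi$ to a scaled simplex. By inclusion--exclusion over families of paths, each intersection is a polytope scaling linearly in $y$, so $F_{\rm MIN}(y)$ is a polynomial in $y$ of degree at most the number of edges. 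This is the analogue of Proposition~\ref{proposition:constant}, presumably Proposition~\ref{proposition:constant_min}.
\item Since the degree is polynomial, evaluations at polynomially many rational points in $(0,1)$ recover all coefficients by interpolation.
\item I would then show that one specific coefficient is a known multiple of the order-polytope volume. Mirroring Lemma~\ref{lemma:Uniform-hard}, write $\bar F_{\rm MIN}(y)=\Pr[\bigwedge_\pi \sum_{e\in\pi} Y_e>y]$.
\item Using the tail-shifting convolution identities of Propositions~\ref{proposition:PsiMax}--\ref{proposition:psi_V}, move the randomness of the subdivided $R$-edges onto a tail, so that the constraint system becomes that of $\graphzero_R$ with zero-length $R$-edges convolved with $\unif_{|R|}$.
\item Then compute $\bar F_{\rm MIN}^{\cat{\Vec{0}}{\Vec{X}}}$ of $\graphzero_R$ from it, as in Proposition~\ref{proposition:tail}, and finish by Lemma~\ref{lemma:Uniform-hard}.
\end{itemize}
If the tail construction breaks the $4$-edge parity count, I would instead use $\graphone_R$ throughout, whose directed paths all have $4+|R|$ edges, and shift the window in Step 2 to $[4+|R|,\,6+|R|)$ accordingly.

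I expect Step 3 to be the delicate part, specifically checking that the convolution manipulations, which were derived for the $\max$-type integrand of Theorem~\ref{th:DAGLP}, transfer to the $\min$ (complementary) formulation restricted to the window $y\le 1$, where all relevant variables stay in $[0,1]$. If this transfer fails, the fallback is the cycle gadget around the source mentioned in the technical overview, which forces all candidate paths to have equal edge count.
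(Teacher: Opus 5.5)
Your Steps 1--2 are the paper's argument in substance. The paper takes the underlying undirected graph of the \emph{unsubdivided} transportation graph with every edge random, and observes that only the $3$-edge (directed) paths matter. It writes $\acute F_{\rm MIN}^{\Vec X+\Vec 1}(x+3)=F_{\rm MIN}^{\Vec X}(x)$ for $0\le x\le 1$, calls the $[-2,-1]$ case symmetric, and cites Theorem~\ref{th:directedhard}. Your Step~3 objection is correct, and the paper's proof passes over it. The proof of Theorem~\ref{th:directedhard} gives hardness of $F_{\rm MAX}$ only in its monomial window $x\le 1$, and Proposition~\ref{proposition:minmaxsymmetry} moves that window to the \emph{top} of the support of $F_{\rm MIN}$. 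The undirected window, by contrast, sits at the \emph{bottom}. (Proposition~\ref{proposition:constant_min}, which you cite, is about $\bar F_{\rm MIN}(k-x)$, i.e.\ the top end, so it is not the analogue you want.)

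Your Step~3 does not close this gap, for three reasons.

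First, interpolation gives nothing new. Conditioning on $\{e: Y_e\le y\}$ shows that, for $0<y\le 1$,
\begin{align*}
F_{\rm MIN}(y)=\sum_{S\subseteq E}y^{|S|}(1-y)^{|E|-|S|}p(S),\qquad p(S)=\Pr\left[\bigvee_{\pi\in\Pi_G,\,\pi\subseteq S}\left(\sum_{e\in\pi}Y'_e\le 1\right)\right],
\end{align*}
with $Y'_e$ i.i.d.\ uniform. So the coefficients amount to the sums $\sum_{|S|=j}p(S)$. The top one is $F_{\rm MIN}(1)=1-F_{\rm MAX}^{\Vec 1-\Vec Y}(k-1)$, which again lies outside the monomial window, and no coefficient is ever tied to ${\rm Vol}(\mathcal{O_P})$.

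Second, the tail-shifting identities do not transfer. They need $\Unif(y-z)$ to reduce to the ramp $(y-z)H(y-z)$, which holds for $\Vec y\le\Vec 1$ in the max formulation but not for the factors $1-\Unif(\cdot)$ of the complementary one. Worse, Proposition~\ref{proposition:psi_V} fails even in its own window. For $y_1<y_2$, $(-\partial_{z_v})^2$ of a product of two ramps contains a term $(y_2-y_1)\delta(z_v-y_1)$ that the computation drops. Concretely, take $U=\{u_1,u_2\}$, $V=\{v\}$, $R=\{u_1v,u_2v\}$ and $x\le 1$: direct integration gives $\tilde\Psi_{\rm MAX}(x)=x^5/20$, but $|R|!\,\Psi^+_{\rm MAX}(x)=x^5/30$. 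This identity underlies Lemma~\ref{lemma:equalPsi}, so it also bears on the proof of Theorem~\ref{th:directedhard} itself.

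Third, the cycle-gadget fallback serves the bottom of $F_{\rm MAX}$ and is irrelevant here.

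The missing idea is much simpler: do not subdivide, and use the closed endpoint of the window. Write $\acute F$ for the undirected distribution functions. With $3$-edge directed paths, every path with at least $5$ edges has length $>5$ almost surely (resp.\ $<-5$). Hence
\begin{align*}
\acute F_{\rm MIN}^{\Vec X+\Vec 1}(5)=F_{\rm MIN}^{\Vec X}(2)=1-F_{\rm MAX}^{\Vec 1-\Vec X}(1),\qquad \acute F_{\rm MAX}^{-\Vec X-\Vec 1}(-5)=F_{\rm MAX}^{\Vec 1-\Vec X}(1).
\end{align*}
Here $F_{\rm MAX}(1)$ of the all-random transportation graph is exactly the constant $A$ of Proposition~\ref{proposition:constant}, whose hardness Theorem~\ref{th:directedhard} asserts. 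With your $4$-edge subdivided paths, the usable range $y\le 2$ can never reach the top window $[3,4]$. So the subdivision is precisely what blocks this route; it also makes each $R$-edge carry a sum of two uniforms, giving an instance the paper never treats.
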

\begin{proof}
Let us focus on SPPDF-IID. 
Given a height-2 poset $\mathcal{P}=(U\times V,R)$ and 
its related (directed) transportation graph $\graphzero_R=(U\cup V\cup \{s,t\},R\cup P)~~P=P_U\cup P_V$,
where $P_U=\{su|u\in U\}$, $P_V=\{vt| v\in V\}$, and each edge $e\in P$ has its length $X_e+1\iidsim\Unif^+$.
Let $\graphfour_R$ be the underlying undirected graph of $\graphzero_R$. 
Let $F_{\rm MIN}^{\Vec{X}}(x)$ and $\acute{F}_{\rm MIN}^{\Vec{X}}(x)$ be the distribution functions of
the shortest path length in $\graphzero_R$ and $\graphfour_R$
with edge lengths given by $\Vec{X}$, respectively.

Clearly, the shortest path has exactly three edges if 
the edge lengths are i.i.d. uniformly over $[1,2]$, 
implying that
$\acute{F}_{\rm MIN}^{\Vec{X}+\Vec{1}}(x+3)=F_{\rm MIN}^{\Vec{X}}(x)$
for $0\le x \le 1$.
Therefore, $\sharpp$-hardness of SPPDF-IID 
for the uniform distribution over $[1,2]$ has been proved.

The argument for LPPDF-IID is symmetry by assuming the edge lengths
i.i.d. uniformly over $[-2,-1]$.
\end{proof}

We note that the above argument is not applicable for SPPDF-IID with $\Vec{X}\le \Vec{0}$
and LPPDF-IID with $\Vec{X}\ge \Vec{0}$. This is because, 
the number of edges may not be the same between the shortest/longest
path candidates in $\graphfour_R$ 
if we consider the random edge lengths given by 
$\Vec{X}\in\mathbb{R}^{R\cup P}$ 
with components $X_e\in [1,2]^{R\cup P}$.
In that case, we need to consider many cases,
which is not efficient.

To prove Theorem~\ref{th:undirectedhard},
we avoid the problem by considering yet another graph.
\begin{definition}
Given a height-2 poset ${\cal P}=(U\cup V,R)$, 
we consider another graph $\graphfive_R=(\hat U\cup V\cup \{s,t\},\hat E)$,
where 
$\hat U=U\cup U^+\cup U^-$, and $U^\pm=\{u^\pm| u\in V\}$ 
taking the signs respectively.
We define the edge set by
$\hat E=P_{U^+}\cup P_{U^-}\cup P_V \cup \hat P\cup R$,
where
\begin{align*}
P_{U^\pm}&=\{\{s,u^\pm\}|u\in U\}\\
\hat P&=\{\{u_i^+,u_i\},\{u_i,u_i^-\},\{u_i^+,u_j^-\}| u_i,u_j\in U, j-1\equiv i\!\!\!\!\mod |U|\},
\end{align*}
taking the signs respectively.
\end{definition}
\label{definition:graphfive}
Fig.~\ref{fig:Undirected_Reduction} (left) shows an example of $\graphfive_R$.
We then focus on LPPDF-IID in $\graphfive_R$, where
all edge lengths are i.i.d. uniformly over $[1,2]$.
There, the longest path visits $u^+\in U^+$ immediately after $s$,
then, the path must visit all $\hat U$ vertices ending at $u$ after $u^-$;
or, the longest path visits $u^-\in U^-$ immediately after $s$ ending at $u$ after $u^+$.

\begin{figure}[ht]
\begin{center}
\includegraphics[clip,scale=0.7, bb=40 660 410 820]{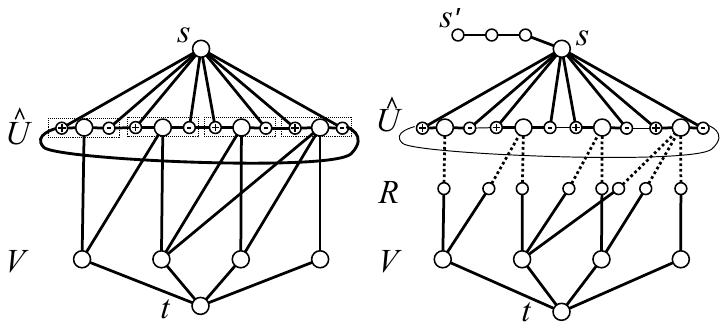}
\caption{Undirected graph $\graphfive_R$ for longest path problem reduction (left) and its transformation $\graphsix_R$(right). The dashed line edges have static length $0$. Thin line edges have static length $1$. Thick line edges have uniform random length i.i.d. over $[1,2]$.}
\label{fig:Undirected_Reduction}
\end{center}
\end{figure}

To deal with the cycle of $\graphfive_R$, we show the following
propositions about the longest path length in a cycle.
\begin{restatable}{proposition}{propositionRing}
\label{proposition:ring}
Let $U$ be the vertex set of $C$, whose edge set is $E$.
Consider a path that starts at $u\in U$ then go around the 
cycle $C=(U,E)$ in a ``clockwise'' directon ending at vertex
the ``anticlockwise''-neighbor $u'\in U$ of $u$.
We add dummy variable $z_u\in \mathbb{R}$ for
the length of the paths before $u$ and after $u'$.
Then, the longest path length distribution function is 
\begin{align*}
\Phi_C(x,\Vec{z})=\Pr\left[\bigwedge_{\{u,u'\}\in E}\left(\sum_{e\in E\setminus\{\{u,u'\}\}}X_e +z_u\le x\right)\right].
\end{align*}
In case the edge lengths are i.i.d. uniformly over $[0,1]$,
we have, for $0\le x \le 1$, 
\begin{align*}
\Phi_C(x,\Vec{z})=\mathrm{Vol}(K_C)\prod_{u\in U}(x-z_u)H(x-z_u),
\end{align*}
where
\begin{align*}
K_C=\left\{\Vec{x}\in [0,1]^{E}\middle|\bigwedge_{\{u,u'\}\in E}\left(\hspace*{-7mm}\sum_{\hspace*{7mm}e\in E \setminus\{\{u,u'\}\}}\hspace*{-7mm}x_e\le x\right)\right\}.
\end{align*}
\end{restatable}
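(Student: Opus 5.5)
The plan is to treat $\Phi_C(x,\Vec{z})$ the same way as Proposition~\ref{proposition:PsiMax} and Proposition~\ref{proposition:psi_V}. First write it as an integral with one dummy variable per constraint. Then use the commutativity of convolution (Proposition~\ref{proposition:convolution}) to move all dependence on $\Vec{z}$ into explicit one-variable factors. What remains should be an integral independent of $\Vec{z}$, which I then identify with $\mathrm{Vol}(K_C)$.

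\textbf{Step 1 (reduction to the linear regime).} Fix $0\le x\le 1$. The event is empty as soon as some $z_u>x$: the left side of the constraint for $u$ is at least $z_u$ because all $X_e\ge 0$. This gives the factor $H(x-z_u)$. For $z_u\le x$, every realization in the event has $X_e\le x\le 1$ for all $e$. Hence the truncation $H(1-\cdot)$ of each uniform density is inactive, and every $X_e$ may be treated as a variable ranging over $[0,\infty)$ with unit density. This is the same simplification used in Proposition~\ref{proposition:psi_V}, where $H(1-(y_{uv}-z_v))=1$.

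\textbf{Step 2 (separating the constraints).} For each $u$, introduce a variable $w_u=x-z_u\ge 0$ for the slack of constraint $u$. Write the probability as the volume of $\{\Vec{x}\ge\Vec{0}: \sum_{e\ne\{u,u'\}}x_e\le w_u~\forall u\}$. Each constraint omits exactly one edge of the cycle, and each edge is omitted by exactly one constraint. I would exploit this by analogy with Theorem~\ref{th:DAGLP}:
\begin{itemize}
\item attach to constraint $u$ a dummy variable $y_u$ for the partial sum;
\item differentiate the indicator $H(w_u-y_u)$ in $y_u$ to obtain a Dirac delta;
\item move the derivative onto the other factor by commutativity of convolution, as was done with $(-\partial/\partial z_v)^{|U_v|}$ in Proposition~\ref{proposition:psi_V}.
\end{itemize}
The aim is that each $w_u$ ends up inside a factor $\Unif(w_u)$ restricted to $[0,1]$. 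That factor equals $(x-z_u)H(x-z_u)$, and the leftover integral involves only the cycle structure.

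\textbf{Step 3 (identifying the constant).} The leftover integral does not depend on $\Vec{z}$. I would evaluate it at a convenient specialization, e.g. equal slacks, so that it coincides with the polytope $K_C$ from the statement at the normalized value of $x$. This identifies the constant with $\mathrm{Vol}(K_C)$, in the same way Proposition~\ref{proposition:constant} identifies the constant $A$ with a polytope volume.

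The main obstacle is Step 2, i.e. showing that the dependence on $\Vec{z}$ actually factorizes into a product of linear terms. A priori the volume is only a piecewise polynomial in the slacks $w_u$, homogeneous of degree $|E|$ by the scaling argument of Proposition~\ref{proposition:constant}. I would first check the factorization on a triangle by direct computation, to confirm which normalization of $K_C$ and of the exponents is intended. Then I would attempt the convolution-shifting argument, moving randomness between the omitted edge of each constraint and the tail variable $z_u$, mirroring the ``tail'' trick of Proposition~\ref{proposition:tail} and Lemma~\ref{lemma:equalPsi}.
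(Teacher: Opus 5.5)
\textbf{There is a genuine gap, and it cannot be repaired.} Your Step 2 is not merely the hard step. The factorization it is meant to deliver is false for every cycle with $|U|\ge 3$, even on the natural range $0\le z_u\le x\le 1$. The constraint indexed by $u$ alone confines the $|U|-1$ lengths it contains to a simplex, so
\[
\Phi_C(x,\Vec{z})\le \Pr\Bigl[\sum_{e\in E\setminus\{\{u,u'\}\}}X_e\le x-z_u\Bigr]\le \frac{(x-z_u)^{|U|-1}}{(|U|-1)!}.
\]
This is $o(x-z_u)$ as $z_u\uparrow x$. The claimed right-hand side, however, is linear in $x-z_u$ with slope $\mathrm{Vol}(K_C)\prod_{v\neq u}(x-z_v)$, which is positive whenever $z_v<x$ for $v\neq u$.

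The triangle check you planned makes this explicit. Take clockwise vertices $a,b,c$, $x=1$, $z_a=z_b=0$, $z_c=1-w$, and write $p=X_{ca}$, $q=X_{ab}$, $r=X_{bc}$. The event is $p+q\le w$ and $r\le 1-\max(p,q)$, so
\[
\Phi_C=\int_{p,q\ge 0,\ p+q\le w}\bigl(1-\max(p,q)\bigr)\,dp\,dq=\frac{w^2}{2}-\frac{w^3}{4}.
\]
The claim instead gives $\mathrm{Vol}(K_C)\,w=w/4$, where $\mathrm{Vol}(K_C)=\Phi_C(1,\Vec{0})=1/6+1/12=1/4$. The two agree only at $w\in\{0,1\}$, so no convolution-shifting can turn $\Phi_C$ into $\prod_u(x-z_u)$.

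Your Step 1 also silently assumes $z_u\ge 0$. For negative $z_u$ the box constraint $X_e\le 1$ becomes active. Then $\Phi_C\le 1$ stays bounded while the claimed product is unbounded.

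\textbf{The paper offers no correct route to mirror.} Its argument runs as follows: $\Phi_C$ vanishes at $x=z_u$, hence has the factor $(x-z_u)$; $z_u$ occurs in exactly one constraint, hence $\Phi_C$ is linear in $z_u$; then it counts degrees in $x$ and normalizes at $x=1$, $\Vec{z}=\Vec{0}$. The linearity inference is exactly the factorization you flagged, and it is invalid. A variable entering only one right-hand side does not make a polytope volume affine in it, as the quadratic and cubic terms in $w$ above show.

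Only two weaker facts are true. First, $\Phi_C$ is homogeneous of degree $|U|$ in $(x,\Vec{z})$ on the region $0\le z_u\le x\le 1$. Second, on the diagonal, $\Phi_C(x,c\Vec{1})=\Phi_C(x-c,\Vec{0})=\mathrm{Vol}(K_C)(x-c)^{|U|}$ for $0\le c\le x$. Your Step 3 specialization to equal slacks lies precisely on that diagonal, which is why it looks consistent while detecting nothing.

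The downstream use in Lemma~\ref{lemma:transportation_graph_with_tails} takes the $z_{\hat u}$ to be different. Salvaging it requires changing the statement itself, e.g.\ by computing the genuine piecewise-polynomial $\Phi_C$.

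Note also that the value $1/4$ above differs from the $1/12$ that Proposition~\ref{proposition:KC} gives for $|U|=3$. That formula omits the simplex containing $\Vec{0}$; the correct value is $1/\bigl((|U|-1)!\,(|U|-1)\bigr)$.
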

\begin{proof}
Since the direct computation of $\Phi_C(x,\Vec{z})$ gets tedious, 
we consider the folllowing instead.
Firstly, $\Phi_C(x,\Vec{z})$ is the volume of an $|E|=|U|$-dimensional 
polytope that is totally contained by a hypercube $[0,x]^{E}$ 
for $0\le x \le 1$.
We characterize the polytope's volume $\Phi_C(x,\Vec{z})$ as follows.
For any $u\in U$, $\Phi_C(x,\Vec{z})$ has $(x-z_u)$ as its factor 
since $\Phi_C(x,\Vec{z})=0$ if there exists a $u$ such that $x-z_u=0$.
Also, $z_u$ for $u\in U$ appears in exactly one constraint, 
$\Phi_C(x,\Vec{z})$ is linear with respect to $z_u$.
Then, the constraint defining the polytope's size to each axis has $x$, 
$\Phi_C(x,\Vec{z})$ is a polynomial of $x$ with degree $|E|=|U|$.
In case $x=1$ and $\Vec{z}=\Vec{0}$, the volume is the constant 
coefficent, which is equal to the volume of $K_C$.
\end{proof}

Then, we coan compute the volume of $K_C$ as follows.
\begin{restatable}{proposition}{propositionKC}
\label{proposition:KC}
Polytope $K_C$ for a cycle $C=(U,E)$ has $|U|+2$ extreme points:
$\Vec{0}$, unit orthogonal base $\Vec{\mathrm{e}}_{e}~(e\in E)$,
and $\frac{1}{|U|-1}\Vec{1}$.
We have $\mathrm{Vol}(K_C)=\frac{1}{|U|!(|U|-1)}$,
so that $\Phi_C(x,\Vec{z})=\frac{1}{|U|!(|U|-1)}\prod_{u\in U}(x-z_u)H(x-z_u)$.
\end{restatable}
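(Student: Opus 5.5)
Throughout, write $m=|U|=|E|$. By Proposition~\ref{proposition:ring}, $K_C=\{\Vec{x}\in[0,1]^E \mid S(\Vec{x})-x_f\le 1 \text{ for all } f\in E\}$, where $S(\Vec{x})=\sum_{e\in E}x_e$. The plan has three parts: identify the extreme points, decompose the resulting convex hull into simple pieces, and then substitute the constant into Proposition~\ref{proposition:ring}. I assume $m\ge 3$, so that $\frac{1}{m-1}\le 1$ and the cycle is genuine.

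\emph{Extreme points.} First I check that the three listed families lie in $K_C$.
\begin{itemize}
\item At $\Vec{0}$ every constraint is slack.
\item At $\Vec{\mathrm{e}}_e$ we have $S-x_f\in\{0,1\}$, so every constraint holds.
\item At $c\Vec{1}$ with $c=\frac{1}{m-1}$ we have $S-x_f=(m-1)c=1$, so all $m$ cycle constraints are tight.
\end{itemize}
Next I show the converse, that any $\Vec{x}\in K_C$ is a convex combination of these $m+2$ points. I split into two regions.
\begin{itemize}
\item If $S(\Vec{x})\le 1$, then $\Vec{x}$ lies in the standard simplex $\mathrm{conv}(\Vec{0},\Vec{\mathrm{e}}_e)$.
\item If $S(\Vec{x})>1$, summing the $m$ cycle constraints gives $(m-1)S\le m$. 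Moreover each $x_f\ge S-1\ge 0$. I then write $\Vec{x}=\lambda c\Vec{1}+(1-\lambda)\Vec{w}$, choosing $\lambda$ so that $\Vec{w}$ lands on the face $\{S=1,\Vec{w}\ge\Vec{0}\}$. The cycle constraints are exactly what is needed to keep $\Vec{w}$ nonnegative.
\end{itemize}
Finally, none of the $m+2$ points is a convex combination of the others, which can be checked coordinatewise. This settles the extreme-point claim.

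\emph{Volume.} The natural route is to take cones from the apex $c\Vec{1}$. The relevant piece is the pyramid $\mathrm{conv}(c\Vec{1},\Vec{\mathrm{e}}_e : e\in E)$ over the facet $\{S=1\}\cap\mathbb{R}^E_{\ge 0}$.
\begin{itemize}
\item The facet has $(m-1)$-volume $\sqrt{m}/(m-1)!$.
\item The apex is at distance $(mc-1)/\sqrt{m}=\frac{1}{(m-1)\sqrt m}$ from the hyperplane $S=1$.
\end{itemize}
So this pyramid has volume $\frac1m\cdot\frac{\sqrt m}{(m-1)!}\cdot\frac{1}{(m-1)\sqrt m}=\frac{1}{m!(m-1)}$, which is the constant stated in Proposition~\ref{proposition:KC}. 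Substituting it into Proposition~\ref{proposition:ring} then gives $\Phi_C(x,\Vec{z})=\frac{1}{|U|!(|U|-1)}\prod_{u}(x-z_u)H(x-z_u)$.

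\emph{Main obstacle.} The hardest step is reconciling this constant with the whole polytope. By the decomposition above, $K_C$ is the union of the standard simplex (volume $\frac{1}{m!}$) and this pyramid. The full volume would therefore be $\frac{1}{m!}+\frac{1}{m!(m-1)}=\frac{1}{(m-1)(m-1)!}$, not $\frac{1}{m!(m-1)}$. I expect this to be the real difficulty. My first attempt would be to verify, directly from the definition of $\Phi_C$, whether the normalisation intended in Proposition~\ref{proposition:ring} measures only the region where the cycle constraints actually bind, namely $S>1$, i.e.\ the pyramid over $\{S=1\}$. If it does, that identification justifies the stated value. If it does not, then the constant in Proposition~\ref{proposition:KC} must be corrected to $\frac{1}{(m-1)(m-1)!}$. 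In that case the only downstream effect is on the multiplicative constant in $\Phi_C$, which does not alter its monomial form.
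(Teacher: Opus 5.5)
\textbf{Approach.} Your argument is the paper's argument. The paper also writes $K_C$ as the union of the simplex $\mathrm{conv}(\Vec{0},\Vec{\mathrm{e}}_e)$ and the pyramid over $\{S=1\}\cap\mathbb{R}^E_{\ge 0}$ with apex $\frac{1}{m-1}\Vec{1}$. It uses the same base area $\sqrt{m}/(m-1)!$ and height $1/((m-1)\sqrt{m})$, and it states explicitly that $\mathrm{Vol}(K_C)$ is the sum of the two pieces. Your extreme-point step is more rigorous than the paper's. Take $\lambda=(m-1)(S(\Vec{x})-1)$: the summed cycle constraints give $\lambda\le 1$, and $w_f\ge 0$ is exactly the cycle constraint $x_f\ge S(\Vec{x})-1$.

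\textbf{The volume constant.} The obstacle you found is an error in the statement, not in your computation. We have $\frac{1}{m!}+\frac{1}{m!(m-1)}=\frac{1}{(m-1)(m-1)!}$, whereas the stated $\frac{1}{|U|!(|U|-1)}$ is the pyramid alone. The paper's proof lists the simplex of volume $1/|U|!$ and then drops it in the final line. So the volume claim is false as written and cannot be proved.

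You should also close off the alternative reading in which $\Phi_C$ only measures the region $S>1$. By definition $\Phi_C(1,\Vec{0})=\Pr[\Vec{X}\in K_C]$, and every point of $[0,1]^E$ with $S\le 1$ satisfies all cycle constraints, so the simplex is counted. A direct check for $m=3$ confirms this. Slicing at one coordinate $x_e=a$ gives $\mathrm{Vol}(K_C)=\int_0^1(1-a)^2\,da-\int_0^{1/2}\frac{(1-2a)^2}{2}\,da=\frac{1}{3}-\frac{1}{12}=\frac{1}{4}=\frac{1}{2\cdot 2!}$, not $\frac{1}{12}$. Commit to $\mathrm{Vol}(K_C)=\frac{1}{(|U|-1)\,(|U|-1)!}$; the extreme-point claim is fine as stated.

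\textbf{The $\Phi_C$ clause.} For you as for the paper, the last clause is only a substitution into Proposition~\ref{proposition:ring}. Your remark that merely the constant changes downstream is therefore only as sound as that proposition. Its product form is right at $\Vec{z}=\Vec{0}$: for $m\ge 2$, nonnegativity forces every $X_f\le x$, so the feasible set is exactly $xK_C$. It fails in general, however. For $m=3$, $x=1$ and a single nonzero $z_u=z$, one gets $\Phi_C=\frac{b^2}{2}-\frac{b^3}{4}$ with $b=1-z$, which is not linear in $z$.
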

\begin{proof}
We first check $K_C$'s extreme points.
For contradiction, suppose that there exists an extreme point
$\Vec{p}\in K_C$ that is not enumerated in the above.
First, we clearly have $\Vec{p}\not\in\mathrm{Conv}(\{\Vec{0}\}\cup\{\Vec{\mathrm{e}}_e|e\in E\})$,
implying $\sum_{e\in E}p_e\ge 1$.
In intersection of the halfspace and $K_C$,
by considering 
the convex hull $\mathrm{Conv}(\{\Vec{\mathrm{e}}_e| e\in E\}\cup\{\frac{1}{|U|-1}\Vec{1}\})$
has exactly $|U|+1$ extreme points which consists of a simplex.
Then, for any $e'\in E$,
$\sum_{e\in E\setminus\{e'\}}x_e\le 1$ gives  
the facet-defining hyperplane of simplex 
$\mathrm{Conv}(\{\Vec{\mathrm{e}}_e|e\in E\setminus \{e'\}\}\cup\{\frac{1}{|U|-1}\Vec{1}\})$.
The above arguments lead to that any point $\Vec{p}$ satisfying all constraints 
are inside of the simplex, 
where $\Vec{p}$ can be given by the convex-combination of
the other extreme points, a contradiction. 
Therefore, the first claim holds.

Since $K_C$ is a polytope that is given by two simplices, 
the volume of $K_C$ is the sum of the volume of these two.
One simplex that includes $\Vec{0}$ has volume $1/|U|!$, 
the other simplex has bottom ($\mathrm{Conv}(\{\Vec{\mathrm{e}}_e|e\in E\})$) 
with area $\frac{\sqrt{|U|}}{(|U|-1)!}$;
the latter simplex has height
$\sqrt{\frac{1}{|U|(|U|-1)^2}}$, proving the claim.
\end{proof}

To apply Proposition~\ref{proposition:ring} and \ref{proposition:KC}, 
notice that some edges are always present in 
any $s-t$ path of $\graphfive_R$.
We transform $\graphfive_R$
so that these edges are connected even before $s$.
The graph $\graphsix_R$ that we obtain after the transformation is 
described as follows.
\begin{definition}
Let $\graphsix_R$ be a undirected graph with
vertex set $\{s,s',t\}\cup S\cup \hat U\cup R\cup V$.
The set $S$ is the vertices of the path,
which is connected before $s$ with $|U|$ edges with
random length i.i.d. uniformly over $[1,2]$; 
we define $s'$ as the new source.
Consider edges in cycle $\hat C$.
We set a random edge length drawn from i.i.d. $\Unif^+$,
for each edge except for 
$\{u_i,u_{i+1}\}~~(i=1,\dots,|U|-1)$ and $ \{u_{|U|},u_1\}$,
that have static length $1$. 
Also, we subdivde the edges in $R$ of $\graphfive_R$.
For each $r=uv\in R$, 
we have static length $0$ edge from $u\in U$ to $r$,
and also length $X_{uv}\iidsim \Unif^+$ edge from $r$ to $v\in V$.
\end{definition}
Fig.~\ref{fig:Undirected_Reduction} (right) shows an example of  
$\graphsix_R$ whose longest path length distribution
is that of $\graphfive_R$ translated by exactly $|U|$ to the right.
The point of the transformation is that we move the randomness in
these edges to the $s'-s$ path since
the longest path in $\graphfive_R$ 
always includes the edges
$\{u_i^+,u_{i+1}^-\}~(i=1,\dots,|\hat U|-1)$ and
$\{u_{|\hat U|}^+,u_1^-\}$.
Setting static length $1$ for these $|U|$ edges 
makes the longest path longer by exactly $|U|$,
which is necessary for preserving
the structure of the longest path through the transformation.

Now, we transform the longest path length distribution function 
of $\graphsix_R$ into a multivariate convolution of
two functions as in the previous subsection.
Here, dummy variables $z_u$ or $y_r$ represent
the longest path length from $u\in U$ or $r\in R$ to $t$,
respectively.
\begin{proposition}
\label{proposition:Phi_Psi_convolution}
Let $\Pi_s(\graphsix_R)$ the set of simple paths 
in $\graphsix_R$ from $s'$ to any vertex of $R$.
Similarly, let $\Pi_t(\graphsix_R)$ be the set of simple paths
from any vertex of $R$ to $t$.
We denote by $s(\pi)$ (resp. by $t(\pi)$) the source 
(resp. the sink) of a path $\pi$.
Let the longest path length distribution function in the $s$-side 
and the $t$-side be
$\check \Phi(x,\Vec{y})$ and $\tilde \Psi_(\Vec{y})$, respectively, for 
$\Vec{y}\in \mathbb{R}^R$ and $x$,
that are given by 
\begin{align*}
\check \Phi(x,\Vec{y})&= \Pr\left[\bigwedge_{\pi\in \Pi_s(\graphsix_R)}\sum_{e\in \pi}X_e\le x-y_{t(\pi)}\right],\\
\tilde \Psi_V(\Vec{y})&=\Pr\left[\bigwedge_{\pi \in \Pi_t(\graphsix_R)}\sum_{e\in \pi}X_e\le y_{s(\pi)}\right],
\end{align*}
where $\tilde \Psi_V(\Vec{y})$ is in Proposition~\ref{proposition:PsiMax}
Let $F_{\rm MAX}(x)$ be the longest path length distribution function
of $\graphfive_R$, 
We have
\begin{align*}
F_{\rm MAX}(x-|U|)=\int_{\Vec{y}\in\mathbb{R}^R} \check \Phi(x,\Vec{y})\tilde \psi_V(\Vec{y})d\Vec{y},
\end{align*}
where $\tilde \psi_V(\Vec{y})=\left(\prod_{uv\in R}\frac{\partial}{\partial y_{uv}}\right)\tilde \Psi_V(\Vec{y})$.
\end{proposition}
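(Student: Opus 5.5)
We split the longest $s'$--$t$ path length in $\graphsix_R$ at the layer of subdivision vertices $r\in R$. The plan is to introduce one dummy variable $y_r$ per $r=uv\in R$, representing the longest path length from $r$ to $t$, and to apply the cut-set form of Theorem~\ref{th:DAGLP}. That form is also the content of Proposition~\ref{proposition:convolution}, applied once per coordinate.

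\textbf{Step 1: the split.} Every $s'$--$t$ path of $\graphsix_R$ that can be the longest one runs in three parts: from $s'$ through $s$ and the cycle gadget on $\hat U$ to some $u\in U$, then along the zero-length edge $u\to r$, then from $r$ into $V$ and on to $t$. The edge sets of $\Pi_s(\graphsix_R)$ and $\Pi_t(\graphsix_R)$ are disjoint, so the random vectors on the two sides are independent. Write $Y_r$ for the longest $r$--$t$ length. The event $\{\max \le x\}$ is then
\begin{align*}
\bigwedge_{\pi\in\Pi_s(\graphsix_R)} \left(\sum_{e\in\pi} X_e \le x - Y_{t(\pi)}\right).
\end{align*}
Conditioning on $\Vec{Y}=\Vec{y}$ gives $\Pr[\max\le x]=\mathbb{E}[\check\Phi(x,\Vec{Y})]$. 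The joint distribution function of $\Vec{Y}$ is exactly $\tilde\Psi_V(\Vec{y})$, because $\Pr[\bigwedge_r Y_r\le y_r]$ is the conjunction over all $\pi\in\Pi_t$. Its joint density is therefore $D_R\tilde\Psi_V=\tilde\psi_V$, possibly as a generalized function, as in Remark~\ref{remark:generalizedfunction}. This yields the stated integral for the longest path length distribution of $\graphsix_R$.

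\textbf{Step 2: validity of the restriction.} I will check that paths in $\graphsix_R$ which wander from $V$ back into $U$, or traverse the cycle differently, never achieve the maximum in the relevant range. With all lengths in $[1,2]$, an alternative path has a different edge count. I then argue, as in Corollary~\ref{corollary:undirected_positive_shortest_negative_longest}, that the maximum over simple paths in the window of interest is attained only by the three-part paths described above. Concretely, these paths traverse all of $\hat U$ via $u^\pm$ and end at $u$, and a path that returns from $V$ to $U$ is strictly dominated.

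\textbf{Step 3: the shift by $|U|$.} I will show that the longest path length in $\graphsix_R$ equals that of $\graphfive_R$ plus $|U|$ in distribution. The $|U|$ cycle edges that every longest path of $\graphfive_R$ uses had their random $\Unif^+$ lengths moved onto the $|U|$ edges of the $s'$--$s$ path, and were replaced by static length $1$. Since the multiset of random variables summed along every candidate path is unchanged up to relabeling, this is a measure-preserving relabeling, and the added path adds exactly $|U|$ in constants. The subdivision of $R$ into a length-$0$ edge and a $\Unif^+$ edge leaves sums unchanged. Combining this with Step 1 gives $F_{\rm MAX}(x-|U|)$ on the left-hand side.

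The main obstacle is Step 2 together with the bookkeeping in Step 3. One must verify that every candidate longest path in $\graphfive_R$ indeed contains all the relocated cycle edges $\{u_i^+,u_{i+1}^-\}$, which is what makes the relabeling legitimate. One must also verify that the exchange of the $u^+$/$u^-$ orientations is handled by $\check\Phi$ through the conjunction over $\Pi_s$. I would first try to establish this by counting edges: a path skipping part of the ring has at least one fewer edge. Lengths in $[1,2]$ then make the skipping path dominated in the leftmost support window, but not pointwise, so the precise claim needs care.
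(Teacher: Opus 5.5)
Your Step 1 is exactly the paper's proof: condition on the $t$-side lengths, whose joint distribution function is $\tilde\Psi_V$ with density $\tilde\psi_V$, and use independence of the two sides. The paper covers your Steps 2--3 in one sentence: it asserts ``by definition'' that the argument is shifted by $|U|$ ``due to the $|U|$ edges with static length $1$''. So you have found where the real content is, but your proposal leaves it open, and what you plan to prove there is false as stated.

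\emph{Step 3.} The relabeling preserves only the joint law of the lengths of \emph{full-cycle} paths, i.e.\ those containing every $\{u_i^+,u_{i+1}^-\}$. A path that skips some of these edges is charged all $|U|$ tail variables in $\graphsix_R$, plus a constant $1$ for each skeleton edge it does use. So the longest path length of $\graphsix_R$ is not that of $\graphfive_R$ plus $|U|$ in distribution.

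\emph{Step 1.} It only works if $\Pi_s(\graphsix_R)$ and $\Pi_t(\graphsix_R)$ are read as restricted families: full-cycle $s$-sides, and $r\to v\to t$. Literal simple $r$--$t$ paths in the undirected graph can run $r\to v\to r'\to u'\to\cdots$ and share edges with the $s$-side.

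With that reading, the right-hand side equals $\Pr[\text{all full-cycle paths of }\graphfive_R\text{ have length}\le x-|U|]$. The proposition is then exactly your Step 2 domination claim for $\graphfive_R$. That claim can only hold for $x$ in a window, while the statement puts no restriction on $x$.

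\emph{Step 2 fails even on the leftmost unit window.} Take $U=\{u_1,u_2\}$ and $R=\{u_1v,u_2v,u_1v'\}$.
Full-cycle paths have $3|U|+2=8$ edges and contain exactly one edge of $R$. By contrast, $s,u_1^+,u_2^-,u_2,v,u_1,v',t$ is a simple path with $7$ edges that contains all three.
Give the three $R$-edges length $1+\delta$ and every other edge length $1$, then perturb slightly. Every full-cycle path has length at most $8+\delta$, but this path has length $7+3\delta$, which is larger once $\delta>\tfrac12$.
Hence $F_{\rm MAX}(x')$ is strictly below the full-cycle probability for every $x'\in(8.5,9]$. Counting edges cannot fix this: a path returning $d$ times from $V$ to $U$ can be only $d$ edges shorter than a full-cycle path. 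Yet it uses $2d+1$ edges of $R$, and each can carry excess $\delta$ without violating any full-cycle constraint.

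\emph{What is true.} Assume every element of $U$ has a child, and set $\delta:=x-|U|-(3|U|+2)\in[0,\tfrac12]$. This window suffices for the reduction, since one value of $x$ is enough.
Each full-cycle path has total excess $\sum_e(X_e-1)\le\delta$. So a path covered by $c$ full-cycle paths has excess at most $c\delta$.
A path with $d\ge1$ returns falls into one of two cases:
it has at most $3|U|+2-d$ edges and is covered by $2d+1$ full-cycle paths, or
it has at most $3|U|+1-d$ edges and is covered by $2d+2$.
A non-full-cycle path with $d=0$ has at most $3|U|+1$ edges and is covered by two.
In every case its length is at most $3|U|+2+\delta$ when $\delta\le\tfrac12$.

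This quantitative step, together with the restriction on $x$, is what your proposal is missing, and the paper's proof is missing it too.
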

\begin{proof}
Remember that $\tilde \Psi_V(\Vec{y})$ represents the probability that 
the lengths of all of the $v-t$ paths for $v\in V$ are all
at most given by the corresponding value of $y_{uv}$ for $u\in U_v$,
where $U_v$ is the set of parents of $v$ in the transportation graph.
Then, $D_R \tilde \Psi_V(\Vec{y})=\tilde \psi_V(\Vec{y})$ is the density
at these path lengths are specified by $\Vec{y}$.
By the definition, we have
\begin{align*}
&F_{\rm MAX}(x-|U|)=\Pr\left[\bigwedge_{\pi\in \Pi_{\graphsix_R}}\left(\sum_{e\in \pi} X_e\le x\right)\right],
\end{align*}
where the argument of $F_{\rm MAX}(x-|U|)$ is subtracted by $|U|$
due to the $|U|$ edges with static length $1$ in $\graphsix_R$.
This is given by aggregating all 
probability density $\tilde \psi_V(\Vec{y})$ 
of the path length from $r\in R$ to $t$ with respect to $\Vec{y}\in \mathbb{R}^R$, which leads to
\begin{align*}
&\int_{\mathbb{R}^R}\Pr\left[\bigwedge_{\pi\in\Pi_s(\graphsix_R)}\left(\sum_{e\in \pi}X_e\le x-y_{t(\pi)}\right) \middle|\bigwedge_{\pi\in\Pi_t(\graphsix_R)} \left(\sum_{e\in \pi}X_e= y_{s(\pi)}\right)\right]\tilde \psi_V(\Vec{y})d\Vec{y}
\end{align*}
Now, since all edge lengths are mutually independent, this is equal to
\begin{align*}
&\int_{\mathbb{R}^R}\Pr\left[\bigwedge_{\pi\in\Pi_s(\graphsix_R)}\left(\sum_{e\in \pi}X_e\le x-y_{t(\pi)}\right)\right]\tilde \psi_V(\Vec{y})d\Vec{y} 
=\int_{\mathbb{R}^R}\check\Phi(x,\Vec{y})\tilde \psi_V(\Vec{y})d\Vec{y},
\end{align*}
by definition.
\end{proof}

By evaluating $\check \Phi(x,\Vec{y})$, the following 
lemma enables us to prove Theorem~\ref{th:undirectedhard}.
\begin{restatable}{lemma}{lemmaTransportationGraphWithTails}
\label{lemma:transportation_graph_with_tails}
For
$4|U|\le x\le 4|U|+1$, we have 
\begin{align*}
&\check\Phi(x,\Vec{y})=\frac{(4|U|)!4^{-|U|}}{|U|!(2|U|)!(2|U|-1)!} \int_{\mathbb{R}}\Unif_{4|U|}(x-4|U|-\xi)\frac{\partial}{\partial x} \Psi_U(\xi,\Vec{y})d\xi.
\end{align*}
\end{restatable}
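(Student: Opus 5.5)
The plan is to mirror the directed argument of Propositions~\ref{proposition:PsiMax}--\ref{proposition:psi_V}: write $\check\Phi(x,\Vec{y})$ as a convolution over the $s$-side structure, isolate the cycle contribution via Propositions~\ref{proposition:ring} and~\ref{proposition:KC}, and then push all randomness of the $s$-side into a single ``tail'' of $4|U|$ uniform edges by the commutativity of convolution (Proposition~\ref{proposition:convolution}).

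\textbf{Enumerating the $s$-side paths.} Every path in $\Pi_s(\graphsix_R)$ first traverses the $|U|$ random edges of the $s'$--$s$ path. It then enters the cycle $\hat C$ at some $u^\pm$ and goes around $\hat C$ in one direction, visiting all of $\hat U$. It ends at some $u\in U$ and finally takes the static-$0$ edge $u\to r$. Because edge lengths lie in $[1,2]$ and $x\in[4|U|,4|U|+1]$, only paths with the maximal number of edges (the Hamiltonian traversals of $\hat C$) can be binding. Shorter paths have lengths at most $2(\#\text{edges})$, which stays below $x-y_r$ in the relevant range. I would verify this by an edge-count argument identical to the one in Corollary~\ref{corollary:undirected_positive_shortest_negative_longest}.

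\textbf{Structure of the binding constraint.} Subtracting $1$ from every random edge (there are $4|U|$ random edges on each binding path, counting the $|U|$ tail edges), the constraint for the path ending at $u$ and then at $r=uv$ reads: (tail sum) $+$ (cycle path omitting the edge at $u$) $\le x-4|U|-y_{uv}$. Conditioning on $z_u=\max_{v\in V_u}y_{uv}$ gives the same $H$/$\delta$ collapse used in Proposition~\ref{proposition:psi_U}. So $\check\Phi$ becomes the convolution of the $|U|$-fold tail density with $\Phi_C(x',\Vec{z})$ from Proposition~\ref{proposition:ring}, where $x'=x-4|U|-t\in[0,1]$. The two entry sides $u^+$ and $u^-$ give the doubled cycle of length $2|U|$ with the $u$-pendant edges. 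By Proposition~\ref{proposition:KC}, this yields $\frac{1}{(2|U|)!(2|U|-1)}\prod_{u}(x'-z_u)H(x'-z_u)$, up to the factor accounting for the pendant edges.

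\textbf{Shifting the randomness into the tail.} Each linear factor $(x'-z_u)H(x'-z_u)$ equals $\Unif(x'-z_u)$ on $[0,1]$, and $\prod_u\Unif(\xi-z_u)$ is exactly $\Psi_U(\xi,\Vec{y})$ after the delta collapse. I would move one differentiation in $x$ onto $\Psi_U$ (producing $\frac{\partial}{\partial x}\Psi_U$) and integrate the remaining powers into the tail. Proceeding as in Proposition~\ref{proposition:tail} with repeated integration by parts turns $\unif_{|U|}*\text{(monomials)}$ into $\Unif_{4|U|}(x-4|U|-\xi)$. The factorials $(4|U|)!$, $(2|U|)!$, $(2|U|-1)!$, $|U|!$ and the factor $4^{-|U|}$ collect from the simplex volume and the Beta-integral normalisations.

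\textbf{Main obstacle.} The hardest step is the exact constant, in particular the origin of $4^{-|U|}$ and the verification that no nonmaximal path contributes on $[4|U|,4|U|+1]$. I would fix the constant by checking the case $|U|=1$ directly and then matching the degrees of the monomials in $x-4|U|$.
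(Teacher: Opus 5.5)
Your outline follows the paper's own route: Propositions~\ref{proposition:ring} and~\ref{proposition:KC} for the cycle, then commutativity of convolution to trade derivatives between the tail and the $u$-factors. But two load-bearing steps of that route are false, and so is the identity itself.

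\textbf{The cycle step.} The factorization $\Phi_C(x,\Vec{z})=\mathrm{Vol}(K_C)\prod_u(x-z_u)H(x-z_u)$ fails once the cycle has three or more random edges. A polytope's volume is not, in general, linear in the right-hand side of one facet. For three edges with $a_i=x-z_i$ and $a_2=a_3=1$, the volume is $a_1^2/2-a_1^3/4$, not a multiple of $a_1$. Moreover, for $m$ edges $\mathrm{Vol}(K_C)=\frac{1}{(m-1)!\,(m-1)}$, because the proof of Proposition~\ref{proposition:KC} drops the simplex containing $\Vec{0}$. So already for $|U|=1$ ($m=2$, where the factorization does hold) the cycle factor is $1$, not $\frac{1}{2}$.

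\textbf{The tail step.} Your ``integrate the remaining powers into the tail'' needs $\partial_\xi^{3|U|}\prod_{u}(\xi-w_u)^4H(\xi-w_u)$ to be a multiple of $\prod_u(\xi-w_u)H(\xi-w_u)$, where $w_u=\max_{v\in V_u}y_{uv}$. The paper's version is the claim $\partial_\xi^{4|U|}\prod_u(\xi-w_u)^4H(\xi-w_u)=(4|U|)!\prod_u H(\xi-w_u)$. Both hold only when all $w_u$ coincide; for distinct $w_u$ extra polynomial and Dirac terms appear. The repeated integration by parts of Proposition~\ref{proposition:tail} handles a power of one linear form, not a product of powers of differently shifted ones.

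\textbf{The statement itself.} The direct check you propose refutes the statement rather than calibrating it. Write $a_u=x-4|U|-w_u\in[0,1]$, and read $\frac{\partial}{\partial x}$ as $\frac{\partial}{\partial \xi}$ (taken literally, the right-hand side is $0$).

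For $|U|=1$, only the two Hamiltonian paths bind. Conditioning on the tail gives $\check\Phi=\int_0^a\frac{(a-t)^4}{4}\,dt=\frac{a^5}{20}$, while the right-hand side equals $3\cdot\frac{a^5}{120}=\frac{a^5}{40}$.

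For $|U|=2$, let $a_1=\epsilon\to 0$ with $a_2\in(0,1)$ fixed. The two constraints of the paths ending at $u_1$ force the following into $[0,\epsilon]$: the tail's excess $T'$ over $|U|$ (which has density $t$), and the excesses over $1$ of the edges $\{s,u_1^{\pm}\}$ and of all four random cycle edges. Hence $\check\Phi=\Theta(\epsilon^8)$. The right-hand side, by contrast, is a constant times $\int_{w_1}^{x-8}\frac{(x-8-\xi)^8}{8!}(2\xi-w_1-w_2)\,d\xi=\Theta(\epsilon^9)$.

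So no constant works, and ``matching degrees in $x-4|U|$'' has nothing to match. $\check\Phi$ is a multiple of $a^{5|U|}$ only when all $w_u$ are equal. That special case is useless for Lemma~\ref{lemma:positiveLPPDFIID_negativeSPPDFIID}, which integrates $\check\Phi$ against $\tilde\psi_V$ over all $\Vec{y}$. What is missing is a correct (non-product) description of $\check\Phi$, i.e.\ a different statement.

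\textbf{A smaller point.} The bound ``length at most $2\cdot\#$edges'' does not dispose of the non-Hamiltonian paths. For $|U|=1$, the path through $s,u^+,u$ can have length $6>x-w_u$. These paths are harmless only because the Hamiltonian constraints bound their excess, in that case by $2a\le a+1$.
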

\begin{proof} 
Consider the conribution 
of the cycle $\hat C$ in $\graphsix_R$, 
which we count the length of edges 
$\{s,\hat u\}~~(\hat u\in U^+\cup U^-)$
before $s$; the length of $|U|$ edges with 
static length $1$; and $2|U|$ i.i.d. uniformly random 
length edges in the cycle. 
The distribution function $\Phi^+(x,\Vec{y})$ 
for the longest path lengths
starting at $s$ is equal to
\begin{align*}
&\Phi^+(x,\Vec{y})=\int_{\mathbb{R}^{U^+\cup U^-}}\Phi_{\hat C}(x-3|U|+1,\Vec{z})\prod_{\hat u\in U^+\cup U^-}\unif^+(z_{\hat u}-w_u) d\Vec{z}, 
\end{align*}
where, for $u\in U$, we set $w_u=\max_{uv\in R}\{y_{uv}\}$ 
as the longest path length from $u\in U$ to $t$. 
Here $x$ is translated by $-3|U|+1$ since the 
longest path, from $\hat C$, takes $|U|$ edges 
with static length $1$ and $|2U|-1$ edges 
with i.i.d. uniformly over $[1,2]$ random lengths
excluding exactly one random length edge of $\hat C$.
By Proposition~\ref{proposition:ring}, the above is equal to
\begin{align*}
&\int_{\mathbb{R}^{U^+\cup U^-}}\mathrm{Vol}(K_{\hat C})\prod_{\hat u\in U^+\cup U^-}(x-3|U|+1-z_{\hat u})H(x-3|U|+1-z_{\hat u})\unif(z_{\hat u}-1-w_u) d\Vec{z}. 
\end{align*}
By setting $z'_{\hat u}=z_{\hat u}-1-w_u$,
the above is equal to
\begin{align*}
&\int_{\mathbb{R}^{U^+\cup U^-}}\mathrm{Vol}(K_{\hat C})\prod_{\hat u\in U^+\cup U^-}(x-3|U|-w_u-z'_{\hat u})H(x-3|U|-w_u-z'_{\hat u}) \unif(z'_{\hat u})d\Vec{z}'.
\end{align*}
We restrict the interval of integration into $[0,1]^{U^+\cup U^-}$ 
since the integrand is zero due to the factor $\unif(z'_{\hat u})$ for
$\hat u\in U^+\cup U^-$.
By renaming $z'_{\hat u}$ as $z_{\hat u}$ for simplicity, 
the above is equal to
\begin{align*}
&\int_{[0,1]^{U^+\cup U^-}}\mathrm{Vol}(K_{\hat C})\prod_{\hat u\in U^+\cup U^-}(x-3|U|-w_u-z_{\hat u})H(x-3|U|-w_u-z_{\hat u})d\Vec{z}.
\end{align*}
For $3|U|\le x \le 3|U|+1$, the above is equal to
\begin{align}
&\mathrm{Vol}(K_{\hat C})\prod_{\hat u\in U^+\cup U^-}\frac{1}{2}(x-|3U|-w_u)^2 H(x-3|U|-w_u),\nonumber \\
&=2^{-2|U|}\mathrm{Vol}(K_{\hat C})\prod_{u\in U}(x-3|U|-w_u)^4 H(x-3|U|-w_u), \label{form:Phiplus}
\end{align}
since there are exactly two $\hat u$'s in $U^+\cup U^-$. 
This is because there
is one to one correspondence between $u\in U$ and $u^+\in U^+$, and
also one to one correspondence between $u\in U$ and $u^-\in U^-$.
We compute a convolution between $\Phi^+(x,\Vec{y})$ and 
$\unif_{|U|}^+(x)$ for $s'-s$ path length, which gives
\begin{align*}
&\check\Phi(x,\Vec{y})=\int_{\mathbb{R}}\Unif_{|U|}^+(x-\xi)\frac{\partial}{\partial \xi}\Phi^+(\xi,\Vec{y})d\xi
\end{align*}
Since $\Phi^+(x,\Vec{y})$ is equal to (\ref{form:Phiplus}),
the above is transformed into
\begin{align*}
&\int_{\mathbb{R}}\Unif_{|U|}^+(x-\xi)2^{-2|U|}\mathrm{Vol}(K_{\hat C})\frac{\partial}{\partial \xi}\prod_{u\in U}(\xi-3|U|-w_u)^4 H(\xi-3|U|-w_u)d\xi.
\end{align*}
By the definition of $\Unif^+_{|U|}$, this is equal to
\begin{align*}
&\int_{\mathbb{R}}\Unif_{|U|}(x-|U|-\xi)2^{-2|U|}\mathrm{Vol}(K_{\hat C})\frac{\partial}{\partial \xi}\prod_{u\in U}(\xi-3|U|-w_u)^4 H(\xi-3|U|-w_u)d\xi.
\end{align*}
By setting $\xi'=\xi-3|U|$, this is transformed into
\begin{align*}
&\int_{\mathbb{R}}\Unif_{|U|}(x-4|U|-\xi')2^{-2|U|}\mathrm{Vol}(K_{\hat C})\frac{\partial}{\partial \xi}\prod_{u\in U}(\xi'-w_u)^4 H(\xi'-w_u)d\xi'.
\end{align*}
We rename $\xi'$ as $\xi$ in the following.
Notice that we must have $\xi-w_u\le 1$ and $4|U|\le x$ to have $\Unif_{|U|}(x-4|U|-\xi)$ nonzero.
Since $\Unif_{m}(x)=\frac{1}{m!}x^m$, the above is equal to, for $4|U|\le x\le 4|U|+1$,
\begin{align*}
&(4|U|)!2^{-2|U|}\mathrm{Vol}(K_{\check C}) \int_{\mathbb{R}}\Unif_{5|U|}(x-4|U|-\xi)\frac{\partial}{\partial \xi}\prod_{u\in U}H(\xi-w_u)d\xi,
\end{align*}
which is guaranteed by the commutativity of convolution
by taking the $4|U|$-th antiderivative of $\Unif_{|U|}(x-4|U|-\xi)$ and $4|U|$-th derivative of 
$\prod_{u\in U}(\xi-w_u)^4 H(\xi-w_u)$.
By the sifting property of the Dirac delta, the above is equal to
\begin{align*}
&(4|U|)!2^{-2|U|}\mathrm{Vol}(K_{\check C}) \int_{\mathbb{R}}\int_{\mathbb{R}^{U}}\Unif_{5|U|}(x-4|U|-\xi)\\
&\hspace*{5cm}\frac{\partial}{\partial \xi}\prod_{u\in U}H(\xi-z_u)\>\>\delta\left(z_u-\max_{uv\in R}\{y_{uv}\}\right)d\Vec{z}d\xi.
\end{align*}
Since $\frac{d}{dx}H(x)=\delta(x)$ and, for $u\in U$,
$\prod_{uv\in R}H(z_u-y_{uv})=H(z_u-\max_{uv\in R}\{y_{uv}\})$
we can transform the above into
\begin{align*}
&(4|U|)!2^{-2|U|}\mathrm{Vol}(K_{\check C}) \int_{\mathbb{R}}\int_{\mathbb{R}^U}\Unif_{5|U|}(x-4|U|-\xi)\frac{\partial}{\partial \xi}\\
&\hspace*{5cm}\prod_{u\in U}H(\xi-z_u)\frac{\partial}{\partial z_u}\prod_{uv\in R}H(z_u-y_{uv})d\Vec{z}d\xi.
\end{align*}
Then, we apply the commutativity of convolution again
so that we consider the $|U|$-th derivative of $\Unif_{5|U|}(x-4|U|-\xi)$ 
and $|U|$-th antiderivative of 
$\frac{\partial}{\partial \xi}\prod_{u\in U}H(\xi-z_u)$, which results in
\begin{align*}
&\frac{(4|U|)!}{|U|!}2^{-2|U|}\mathrm{Vol}(K_{\check C}) \int_{\mathbb{R}}\Unif_{4|U|}(x-4|U|-\xi)\\
&\hspace*{5cm}\frac{\partial}{\partial \xi}\prod_{u\in U}(\xi-z_u)H(\xi-z_u)\frac{\partial}{\partial z_u}\prod_{uv\in R}H(z_u-y_{uv})d\xi.
\end{align*}
Since $(\xi-z_u)H(\xi-z_u)=\Unif(\xi-z_u)$ for $0\le z_u\le \xi \le 1$,
this is equal to
\begin{align*}
&\frac{(4|U|)!}{|U|!}2^{-2|U|}\mathrm{Vol}(K_{\check C}) \int_{\mathbb{R}}\Unif_{4|U|}(x-4|U|-\xi)\\
&\hspace*{5cm}\frac{\partial}{\partial \xi}\prod_{u\in U}\Unif(\xi-z_u)\frac{\partial}{\partial z_u}\prod_{uv\in R}H(z_u-y_{uv})d\xi.
\end{align*}
By definition of $\Psi_U(x,\Vec{y})$ we can transform this into
\begin{align*}
&\frac{(4|U|)!}{|U|!}2^{-2|U|}\mathrm{Vol}(K_{\check C}) \int_{\mathbb{R}}\Unif_{4|U|}(x-4|U|-\xi)\frac{\partial}{\partial \xi}\Psi_U(\xi,\Vec{y})d\xi,
\end{align*}
where $K(\check C)=\frac{1}{(2|U|)!(2|U|-1)!}$,
proving the claim.
\end{proof}

For LPPDF-IID of positive edge lengths and SPPDF-IID of
negative edge lenghts, we state the following lemma.
By Corollary~\ref{corollary:undirected_positive_shortest_negative_longest} and 
Lemma~\ref{lemma:positiveLPPDFIID_negativeSPPDFIID},
we have Theorem~\ref{th:undirectedhard}. 
\begin{restatable}{lemma}{positiveLPPDFIIDnegativeSPPDFIID}
\label{lemma:positiveLPPDFIID_negativeSPPDFIID}
In case we assume that the edge lengths are all i.i.d. uniformly  
over $[1,2]$, LPPDF-IID is $\sharpp$-hard.
Similarly, if we assume that the edge lengths are all i.i.d. uniformly 
over $[-2,-1]$, SPPDF-IID is $\sharpp$-hard.
\end{restatable}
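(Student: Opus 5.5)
The plan is to reduce from $\#$LE2H through the graph $\graphfive_R$, using its transformation $\graphsix_R$ together with Proposition~\ref{proposition:Phi_Psi_convolution} and Lemma~\ref{lemma:transportation_graph_with_tails}.

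We first treat LPPDF-IID with edge lengths i.i.d. $\Unif^+$ on the undirected graph $\graphfive_R$. The SPPDF-IID statement for $[-2,-1]$ then follows by Proposition~\ref{proposition:minmaxsymmetry}, since negating all lengths swaps $\min$ and $\max$ over simple paths.

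\textbf{Step 1: path structure.} Argue that in the leftmost nonzero window of $F_{\rm MAX}$, every longest $s$--$t$ path of $\graphfive_R$ has a fixed shape. It leaves $s$ to some $u^{\pm}$, traverses the whole cycle $\hat C$ ending at $u$, uses one $R$-edge $uv$, and then takes $vt$. Such paths have exactly the maximal number of edges, and because every length lies in $[1,2]$, any simple path with fewer edges is dominated by a path of this shape. For example, a path that wanders back from $V$ into $U$ cannot revisit $\hat C$ and so loses at least one unit of length. This justifies replacing $\graphfive_R$ by $\graphsix_R$, whose distribution is that of $\graphfive_R$ shifted by $|U|$.

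\textbf{Step 2: assemble the convolution.} Combine Proposition~\ref{proposition:Phi_Psi_convolution} with Lemma~\ref{lemma:transportation_graph_with_tails}. For $x$ in the window $[4|U|,4|U|+1]$ this gives
\begin{align*}
F_{\rm MAX}(x-|U|)=C\int_{\mathbb{R}}\Unif_{4|U|}(x-4|U|-\xi)\int_{\mathbb{R}^R}\frac{\partial}{\partial \xi}\Psi_U(\xi,\Vec{y})\,\tilde\psi_V(\Vec{y})\,d\Vec{y}\,d\xi,
\end{align*}
where $C$ is the explicit constant from that lemma. The $[1,2]$ shifts must be handled by a translation so that $\tilde\psi_V$ matches the $[0,1]$ version. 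By Proposition~\ref{proposition:PsiMax}, the inner $\Vec{y}$-integral equals $\frac{d}{d\xi}\tilde\Psi_{\rm MAX}(\xi)$, so
\begin{align*}
F_{\rm MAX}(x-|U|)=C\int\Unif_{4|U|}(x-4|U|-\xi)\,\frac{d}{d\xi}\tilde\Psi_{\rm MAX}(\xi)\,d\xi.
\end{align*}

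\textbf{Step 3: extract the hard constant.} By Lemma~\ref{lemma:equalPsi} and Proposition~\ref{proposition:tail}, for $0\le\xi\le1$ we have $\tilde\Psi_{\rm MAX}(\xi)=A'\xi^{|P|+|R|}$, with $A'$ a known multiple of the constant $A$ in $\Psi_{\rm MAX}$. The integral is then the convolution of two monomials, which evaluates by a Beta integral to
\begin{align*}
C A'(|P|+|R|)!\,(x-4|U|)^{4|U|+|P|+|R|}/(4|U|+|P|+|R|)!\,.
\end{align*}
So evaluating $F_{\rm MAX}$ at a single point, e.g. $x=4|U|+1$, which corresponds to the original graph evaluated at $3|U|+1$, recovers $A$ after multiplying by a polynomial-time computable rational number. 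By Lemma~\ref{lemma:Uniform-hard} and Theorem~\ref{th:orderpolytope}, $A$ encodes $\#$LE2H, so the problem is $\sharpp$-hard.

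\textbf{Main obstacle.} The hardest part is Step 1, together with its use in Lemma~\ref{lemma:transportation_graph_with_tails}. We must show rigorously that, for $x$ in the chosen window, the event "all simple paths have length $\le x$" coincides with the event restricted to the full-cycle paths. This amounts to proving that every other simple path is almost surely no longer than some full-cycle path, and the argument should be an edge-counting comparison using lengths in $[1,2]$. One should also check that the window keeps every $\Unif(\xi-z_u)$ in its linear regime, so that the antiderivative manipulations are valid.
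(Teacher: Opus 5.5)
Your route is the paper's. Proposition~\ref{proposition:Phi_Psi_convolution}, Lemma~\ref{lemma:transportation_graph_with_tails} and Proposition~\ref{proposition:PsiMax} reduce $F_{\rm MAX}$ of $\graphfive_R$ to a convolution of $\Unif_{4|U|}$ with the derivative of $\tilde\Psi_{\rm MAX}$. The monomial form of $\tilde\Psi_{\rm MAX}$ on $[0,1]$ plus a Beta integral then isolates the hard constant, and Proposition~\ref{proposition:minmaxsymmetry} handles SPPDF. Getting the monomial from Lemma~\ref{lemma:equalPsi} and Proposition~\ref{proposition:tail} instead of Proposition~\ref{proposition:constant} makes no difference, and your Beta-integral constant is right.

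Two steps fail as written, however. First, the translation you postpone is not optional. In $\graphsix_R$ the path $r\to v\to t$ consists of two $[1,2]$ edges, which is why the paper convolves with $\tilde\psi_V(\Vec{y}-2\cdot\Vec{1})$ and arrives at $F_{\rm MAX}(3|U|+2+s)=c\,A\,s^{5|U|+|V|+|R|}$ for small $s\ge 0$. Your sample point $x=4|U|+1$, i.e.\ $F_{\rm MAX}(3|U|+1)$ for $\graphfive_R$, lies below the support: every full-cycle path has $3|U|+2$ edges of length at least $1$. So it returns $0$ and recovers nothing.

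Second, the domination claim in Step~1 is false. Take $|U|\ge 3$ with $u_2v,u_3v,u_1v'\in R$, and the simple path $s,u_1^+,u_2^-,u_2,v,u_3,u_3^+,u_4^-,\dots,u_{|U|}^+,u_1^-,u_1,v',t$. It skips $u_2^+,u_3^-$ and has only $3|U|+1$ edges, but it contains three $R$-edges. If these have length near $2$ and all other edges length near $1$, it has length near $3|U|+4$, while every full-cycle path has length at most about $3|U|+3$; this occurs with positive probability. One edge fewer does not mean one unit shorter, so the event identity breaks near the right end of the unit window. The paper only asserts the path structure and the window $[3|U|+2,3|U|+3]$, so it does not supply this step either.

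What is true, and enough, is the following.
\begin{itemize}
\item Consider a simple $s$--$t$ path that misses $m\ge 1$ vertices of $\hat U$ and visits $p$ vertices of $V$. It needs $2p-1$ edges between $U$ and $V$.
\item Free slots at $U$-vertices arise only from missed $U^\pm$-vertices, plus one at the first vertex. Hence $2p-1\le m+1$, and the path has at most $3|U|+2-m/2$, hence at most $3|U|+1$, edges.
\item Assume w.l.o.g.\ that $\mathcal P$ has no isolated elements; removing one divides $\#$LE by $|\mathcal P|$. Then every edge of $\graphfive_R$ lies on a full-cycle path.
\item So on the event that all full-cycle paths have length at most $3|U|+2+s$, every edge has length at most $1+s$. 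Every other path then has length at most $(3|U|+1)(1+s)\le 3|U|+2+s$ once $s\le 1/(3|U|)$.
\end{itemize}
Restrict to this narrow window and evaluate at a single $s\in(0,1/(3|U|)]$.
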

\begin{proof}
We assume that we could compute the value of 
the longest path length distribution function $F_{\rm MAX}(x)$ of
$\graphfive_R$, which gives the value of $\tilde \Psi_{\rm MAX}(x)$
which is already proved to be $\sharpp$-hard in the previous subsection.
Then, by Proposition~\ref{proposition:Phi_Psi_convolution} and Lemma~\ref{lemma:transportation_graph_with_tails},
we have that 
\begin{align*}
&F_{\rm MAX}(x-|U|)=\int_{\mathbb{R}^R} \check\Phi(x,\Vec{y})\tilde\psi_V(\Vec{y}-2\cdot\Vec{1}) d\Vec{y} \\
&=\int_{\mathbb{R}^R}\frac{(4|U|)!4^{-|U|}}{|U|!(2|U|)!(2|U|-1)!} \int_{\mathbb{R}}\Unif_{4|U|}(x-4|U|-\xi)\frac{\partial}{\partial x} \Psi_U(\xi,\Vec{y})d\xi \tilde\psi_V(\Vec{y}-2\cdot\Vec{1})d\Vec{y}\\
&=\frac{(4|U|)!4^{-|U|}}{|U|!(2|U|)!(2|U|-1)!}\int_{\mathbb{R}}\Unif_{4|U|}(x-4|U|-\xi)\frac{\partial}{\partial x}\tilde\Psi_{\rm MAX}(\xi-2)d\xi,
\end{align*}
which we applied Proposition~\ref{proposition:PsiMax} for the last
equation.
By replacing $x'=x-|U|$ and $\xi'=\xi-2$, we have
\begin{align*}
&F_{\rm MAX}(x')=\frac{(4|U|)!4^{-|U|}}{|U|!(2|U|)!(2|U|-1)!}\int_{\mathbb{R}}\Unif_{4|U|}(x'-(3|U|+2)-\xi')\frac{\partial}{\partial x}\tilde\Psi_{\rm MAX}(\xi')d\xi',
\end{align*}
Since we have $\tilde \Psi_{\rm MAX}(\xi)=A \xi^{|U|+|V|+|R|}$ 
for $0\le \xi\le 1$ 
and a constant $A$ by Proposition~\ref{proposition:constant}, by 
renaming $x$ and $\xi$ as $x$ and $\xi$, respectively, for simplicity,
we have
\begin{align*}
&F_{\rm MAX}(x)=\frac{(4|U|)!4^{-|U|}}{|U|!(2|U|)!(2|U|-1)!}\int_{0}^1\Unif_{4|U|}(x-(3|U|+2)-\xi)\frac{\partial}{\partial x}\tilde\Psi_{\rm MAX}(\xi)d\xi\\
&=\frac{(4|U|)!4^{-|U|}}{|U|!(2|U|)!(2|U|-1)!}\int_0^1\Unif_{4|U|}(x-(3|U|+2)-\xi)\frac{1}{|U|+|V|+|R|}A\xi^{|U|+|V|+|R|-1}d\xi.
\end{align*}
We then set $s=x-(3|U|+2)$ and consider the case $0\le s \le 1$. 
We have
\begin{align*}
&F_{\rm MAX}(s+3|U|+2)=\frac{(4|U|)!4^{-|U|}}{|U|!(2|U|)!(2|U|-1)!}\int_0^s\frac{(s-\xi)^{4|U|}}{(4|U|)!}\frac{A\xi^{|U|+|V|+|R|-1}}{|U|+|V|+|R|}d\xi\\
&=\frac{4^{-|U|}A}{|U|!(2|U|)!(2|U|-1)!(|U|+|V|+|R|)}\int_0^s(s-\xi)^{4|U|}\xi^{|U|+|V|+|R|-1}d\xi \\
&=\frac{(4|U|)!(|U|+|V|+|R|)!4^{-|U|}A}{|U|!(2|U|)!(2|U|-1)!}s^{5|U|+|V|+|R|}.
\end{align*}
Thus, if we could compute $F_{\rm MAX}(x)$ for $3|U|+2\le x \le 3|U|+3$,
we would have the value of $A$ by 
\begin{align*}
A=\frac{|U|!(2|U|)!(2|U|-1)!}{(4|U|)!(|U|+|V|+|R|)!4^{-|U|}}(x-(3|U|+2))^{-(5|U|+|V|+|R|)},
\end{align*}
which gives the value of $\tilde \Psi_{\rm MAX}(x)$, proving
the lemma for LPPDF-IID.

The claim for SPPDF-IID is clear by Proposition~\ref{proposition:minmaxsymmetry}.
\end{proof}

\subsection{Other Probability Distributions}
We generalize the above results to a class of probability distributions
that approximates the uniform distribution in a small interval.
That is, the other edge length distributions make the problem 
no easier because uniform distribution is ubiquitous regardless of 
the graph is directed or undirected, as long as the candidates of the optimal
path share the same edge number $k$. 
Lemma~\ref{lemma:Breakpoint-hard} establishes that $\sharpp$-hardness 
holds when the edge length density function has a jump from $0$.
\begin{restatable}{lemma}{lemmaBreakpointHard}
\label{lemma:Breakpoint-hard}
Given a graph $G=(V,E)$ with $n$ vertices,
where the optimal path always has exactly $k$ 
edges regardless of the realization of the random edge lengths. 
That is, $\pi\in \Pi_G$ always satisfies $|\pi|=k$,
where $\Pi_G$ is the set of $s-t$ paths in $G$.
For $\epsilon>0$, $\ell=\Omega(2^{-\poly(n)})$ and $a\in\mathbb{R}$, 
let $F\in \distfamily_{\ell}(I;I_\leftarrow)\cup\distfamily_{\ell}(I;I_\rightarrow)$,
where $I=[a,a+\epsilon]$.
Let $\Vec{X}\in\mathbb{R}^E$ be a random vector with components $X_e\iidsim F(x)$ for $e\in E$. 
For $\epsilon=O(\ell/(n+1)!)$,  
it is $\sharpp$-hard to compute a $(1\pm \frac{1}{2n!})$-approximation of
\begin{enumerate}[{\normalfont (\arabic*)}]
\item $F_{\rm MAX}^{\Vec{X}}(k(a+\epsilon))=\bar F_{\rm MIN}^{-\Vec{X}}(-k(a+\epsilon))$ 
if $F\in \distfamily_{\ell}(I;I_\leftarrow)$; or
\item 
$F_{\rm MIN}^{\Vec{X}}(ka)=\bar F_{\rm MAX}^{-\Vec{X}}(-ka)$ 
if $F\in \distfamily_{\ell}(I;I_\rightarrow)$.
\end{enumerate}
\end{restatable}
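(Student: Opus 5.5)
The plan is to reduce from the uniform instances already shown hard. In each hard instance the relevant quantity has the form $A x^{m}$, where $m$ is the number of random edges. This holds by Proposition~\ref{proposition:constant}, and in the undirected setting it also uses the normalisation of Lemma~\ref{lemma:positiveLPPDFIID_negativeSPPDFIID}. The constant $A$ is a rational number, equal to $\#$LE2H up to explicitly known factorials (Theorem~\ref{th:orderpolytope}, Lemma~\ref{lemma:Uniform-hard}, Lemma~\ref{lemma:equalPsi}). So it suffices to show that a $(1\pm\frac{1}{2n!})$-approximation of the target probability determines $A$ to within a relative error so small that rounding recovers $A$ exactly. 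Case (2) will follow from case (1) by Proposition~\ref{proposition:minmaxsymmetry} applied to $-\Vec{X}$: the family $\distfamily_\ell(I;I_\rightarrow)$ for $\Vec{X}$ becomes $\distfamily_\ell(-I;(-I)_\leftarrow)$ for $-\Vec{X}$, and $F_{\rm MIN}^{\Vec X}(ka)=\bar F_{\rm MAX}^{-\Vec X}(-ka)$. I therefore concentrate on case (1).

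\textbf{Step 1 (change of variables).} Since $f=0$ on $I_\leftarrow$, every $X_e\ge a$. Every candidate path has exactly $k$ edges, so the event $\bigwedge_{\pi}\sum_{e\in\pi}X_e\le k(a+\epsilon)$ only constrains the excesses $X_e-a\ge 0$. I substitute $X_e=a+k\epsilon\,Y_e$. The event then becomes $\bigwedge_\pi \sum_{e\in\pi}Y_e\le 1$ with $\Vec Y\ge\Vec 0$, which is exactly the polytope $K(1)$ from the proof of Proposition~\ref{proposition:constant}. On this polytope each $Y_e\le 1$ automatically, so the excess of every edge lies in $[0,k\epsilon]$. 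Hence
\begin{align*}
F_{\rm MAX}^{\Vec X}(k(a+\epsilon))=(k\epsilon)^{m}\int_{K(1)}\prod_{e}f(a+k\epsilon y_e)\,d\Vec y .
\end{align*}

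\textbf{Step 2 (near-uniformity of the integrand).} I would write $f(a+k\epsilon y)=f(a)\bigl(1+\theta(y)\bigr)$ with $|\theta|\le Ck\epsilon/\ell$, using the bound $C$ on $|f'|$ from condition (ii) together with $f(a)>\ell$. This gives
\begin{align*}
F_{\rm MAX}^{\Vec X}(k(a+\epsilon))=\bigl(k\epsilon f(a)\bigr)^{m}\,\mathrm{Vol}(K(1))\,(1\pm Ck\epsilon/\ell)^{m},\qquad \mathrm{Vol}(K(1))=A .
\end{align*}
With $\epsilon=O(\ell/(n+1)!)$ and $m,k\le n^2$, the error factor is $1\pm O(1/n!)$, and it can be made at most $1\pm\frac{1}{4n!}$ by choosing the hidden constant. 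The value $f(a)$ is obtained from the polynomial-time computable $F$ via a difference quotient $(F(a+\delta)-F(a))/\delta$. Because $f'$ is bounded and $\ell=\Omega(2^{-\poly(n)})$, this also costs only a $1\pm 2^{-\poly(n)}$ factor. Dividing the given approximation by $(k\epsilon f(a))^m$ then yields $A$ within relative error $O(1/n!)$. Since $A\cdot(\text{known factorials})$ is the integer $\#$LE2H, and the denominators are bounded by $(|P|+|R|)!$, I expect rounding to recover $\#$LE2H. The exact approximation constants must be tuned so that the error is below one half of the integer spacing.

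\textbf{Main obstacle.} The evaluation point $k(a+\epsilon)$ lets a single edge carry excess up to $k\epsilon$, which reaches beyond $I=[a,a+\epsilon]$, and conditions (i)--(ii) are only stated on $I$. My first attempt is to note that any $F\in\distfamily_\ell([a,a+\epsilon];I_\leftarrow)$ constrains the left end only through $J=I_\leftarrow$. I would therefore run the reduction with the instance's threshold normalised so that only the subinterval $[a,a+\epsilon/k]$ matters for the uniform approximation, and bound the contribution of realizations having an edge excess in $(\epsilon,k\epsilon]$. On $K(1)$, for fixed $e$, the set $\{y_e>1/k\}$ forces the other edges on its paths to be small. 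I would try to show that this part is already accounted for exactly by the polytope volume, because only the ratio $f(a+k\epsilon y)/f(a)$ enters. If that fails, I would extend the bound $|f'|\le C$ to $[a,a+k\epsilon]$ by applying the definition with $I$ enlarged by a factor $k$; this is harmless because $\epsilon$ is only required up to a factor $O(1/(n+1)!)$, which absorbs $k\le n^2$. This is the step I expect to need the most care.
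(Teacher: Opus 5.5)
Your Steps 1--2 are the paper's argument. You compare the density on the rescaled window pointwise with a constant, so that the target probability is a computable normalising factor times $A\,(1\pm O(k\epsilon/\ell))^{|E|}$, and then recover $\#$LE2H by rounding. The paper instead truncates to $[a,a+k\epsilon]$ and divides by $F(a+k\epsilon)$, which is directly computable; your difference-quotient estimate of $f(a)$ does the same job, so the difference is cosmetic. One small correction: with $|E|\le n^2$ and $k<n$ the error $|E|k\epsilon/\ell$ is $O(n^2/n!)$, not $O(1/n!)$. The hidden constant alone therefore does not give $\frac{1}{4n!}$; you need to take $\epsilon$ smaller by a polynomial factor.

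\textbf{Item (1) is false as written.} The obstacle you flag is real and cannot be patched inside the proof. Read (i) one-sidedly at $a$; this is forced, since otherwise $\distfamily_\ell(I;I_\leftarrow)$ is empty. Take $F$ uniform on $[a,a+\epsilon+\eta]$ with $\eta\le\epsilon/(4|E|\,n!)$. It lies in $\distfamily_\ell(I;I_\leftarrow)$, yet $F_{\rm MAX}^{\Vec X}(k(a+\epsilon))\ge(\epsilon/(\epsilon+\eta))^{|E|}\ge 1-\frac{1}{4n!}$. So the constant $1$ is already a valid $(1\pm\frac{1}{2n!})$-approximation.

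\textbf{Your remedies.} Your first idea, that the part of $K(1)$ with some $y_e>1/k$ is ``accounted for by the polytope volume'', fails: on that part the ratio $f(a+k\epsilon y_e)/f(a)$ is completely unconstrained. Your fallback is the right fix, but its justification is wrong. The slack in $\epsilon=O(\ell/(n+1)!)$ lets you \emph{shrink} the window, since every $F\in\distfamily_\ell([a,a+\epsilon];(-\infty,a))$ also lies in $\distfamily_\ell([a,a+\epsilon/k];(-\infty,a))$. Shrinking gives hardness at $ka+\epsilon$, never at $k(a+\epsilon)$, and it never lets you enlarge $I$. You must commit to a corrected statement: either local uniformity on $[a,a+k\epsilon]$, or evaluation point $ka+\epsilon$. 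With that change your Steps 1--2 are complete.

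The paper's proof has the same hole. Its remark that ``$f(x)$ for $x>\epsilon$ does not matter \dots since $\Vec X\ge 0$'' is false for $k\ge 2$. Its bounds on $\tilde u$ silently use $f>\ell$ and $|f'|\le 1$ on all of $[0,k\epsilon]$.

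\textbf{Your derivation of (2) has a separate gap.} Applying (1) to $-\Vec X$, with window $[-a-\epsilon,-a]$, gives hardness of approximating $F_{\rm MAX}^{-\Vec X}(-ka)=\bar F_{\rm MIN}^{\Vec X}(ka)$. Item (2) asks about the complement, $F_{\rm MIN}^{\Vec X}(ka)=1-F_{\rm MAX}^{-\Vec X}(-ka)$. A multiplicative $(1\pm\frac{1}{2n!})$-approximation does not pass to the complement of an exponentially small quantity, and that is exactly the regime of the reduction.

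Concretely, take $F$ uniform on $[a-1,a+\epsilon]$ with $\ell=\frac12$ and the same one-sided reading at $a+\epsilon$. It lies in $\distfamily_\ell(I;I_\rightarrow)$, even with the enlarged window. For any single $s$--$t$ path $\pi_0$,
$\bar F_{\rm MIN}^{\Vec X}(ka)\le\Pr\bigl[\sum_{e\in\pi_0}(a+\epsilon-X_e)<k\epsilon\bigr]\le(k\epsilon)^k/k!$, which is negligible next to $\frac{1}{2n!}$. So outputting $1$ again solves (2).

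The genuine mirror image of (1) is the statement about $\bar F_{\rm MIN}^{\Vec X}(ka)=F_{\rm MAX}^{-\Vec X}(-ka)$, and that is what your symmetry argument actually proves. The paper's one-word ``symmetry'' has the same mismatch.
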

\begin{proof}
In the following, the proof is for $F\in\distfamily_{\ell}(I;I_\leftarrow)$. 
The proof for $F\in\distfamily_{\ell}(I;I_\rightarrow)$ is symmetry.
We focus on $F_{\rm MAX}^{\Vec{X}}(x)$ in a small interval $[a,a+\epsilon]$
so that $f(x)=\frac{d}{dx} F(x)$ is almost a uniform density there.
Without loss of generality, we concentrate on the case $a=0$;
for the other $a$, we obtain the case by considering $F_{\rm MAX}(x-ka)$.
Note that $f(x)$ for $x>\epsilon$ does not matter in the computation of 
$F_{\rm MAX}(k\epsilon)$ since $\Vec{X}\ge 0$.
By the assumption that $f(x)$ is Lipschitz by Definition~\ref{definition:localuniformity}, 
we further assume that $|\frac{d}{dx}f(x)|\le 1$,
since otherwese we use $F(Mx)$ instead of $F(x)$, where $M\ge |\frac{d}{dx}f(x)|$ for $x\in I$.
We truncate and scale $F(x)$ so that $\tilde U(x)=\int_0^x \tilde u(x) dx$ where
$\tilde u(x)=H(k\epsilon-x) \frac{1}{F(k\epsilon)} f(x)$.

Let $\Vec{Y}\in[0,\epsilon]^E$ (resp. $\tilde{\Vec{X}}\in[0,k\epsilon]$) 
be a random vector with components $Y_e\iidsim U_{k\epsilon}$ 
(resp. $\tilde X_e\iidsim \tilde U$) for all $e\in E$, 
where $U_{k\epsilon}(x)$ is the uniform distribution function over $[0,k\epsilon]$.
By taking $\epsilon$ sufficiently small, 
we claim that $\tilde u(x)$ is approximately equal to 
$u_{k\epsilon}(x)=\frac{d}{dx}U_{k\epsilon}(x)$, 
a constant function, since we assume $\frac{d}{dx}f(x)=F(k\epsilon)\frac{d}{dx}\tilde u(x)\le1$.
If the claim is correct, we have 
that $F_{\rm MAX}^{\Vec{X}}(k\epsilon)$ 
is a $(1\pm\frac{1}{2n!})$-approximation of  
$F_{\rm MAX}^{\Vec{Y}}(k\epsilon)$, 
which gives the answer of $\#$LE2H by Lemma~\ref{lemma:Uniform-hard}.

To prove the claim,
we evaluate 
\begin{align}
D(k\epsilon)\defeq \left|F_{\rm MAX}^{\tilde{\Vec{X}}}(k\epsilon)/F_{\rm MAX}^{\Vec{Y}}(k\epsilon)\right|
\end{align}
for $\epsilon=\frac{1}{k}\frac{\ell}{4n}\frac{1}{n!}$,
where $\ell=\Omega(2^{-\mathrm{poly}(n)})$ and $f(x)>\ell$.
Since $f(x)>\ell$ and 
$\left|\frac{d}{dx}f(x)\right|=\left|F(\epsilon)\frac{d}{dx}\tilde u(x)\right|<1$ 
for $0<x<\epsilon$
by assumption $F\in \distfamily_{\ell}(I;I_\rightarrow)$ 
we have that 
\begin{align}
\frac{1}{\epsilon}-\frac{\epsilon}{\epsilon\ell}=\frac{1}{\epsilon}-\frac{1}{\ell}\le \tilde u(x) \le \frac{1}{\epsilon}+\frac{\epsilon}{\epsilon\ell}=\frac{1}{\epsilon}+\frac{1}{\ell}.
\end{align} 
Therefore, $1-\frac{k\epsilon}{\ell} \le \frac{\tilde u(x)}{u_{k\epsilon}(x)} \le 1+\frac{k\epsilon}{\ell}$ since $u_{k\epsilon}(x)=\frac{1}{k\epsilon}$ for $0<x<k\epsilon$,
implying 
\begin{align}
\left(1-\frac{k\epsilon}{\ell}\right)^{n}F_{\rm MAX}^{\Vec{Y}}(x)\le F_{\rm MAX}^{\tilde{\Vec{X}}}(x)\le \left(1+\frac{k\epsilon}{\ell}\right)^{n}F_{\rm MAX}^{\Vec{Y}}(x).
\end{align}
By the linear approximation, we have $(1-\frac{k\epsilon}{\ell})^{n} \ge 1-n\frac{\epsilon}{k\ell}=1-\frac{1}{4n}\frac{1}{n!}$.
We also have 
\begin{align*}
(1+\frac{k\epsilon}{\ell})^{n}\le 1+2n\frac{k\epsilon}{\ell}=1+\frac{1}{2n}\frac{1}{n!},  
\end{align*}
since $(1+x)^m=1+mx+\sum_{i=2}^m \binom{m}{i} x^i$ and 
$\sum_{i=2}^m \binom{m}{i}x^i \le mx$ for $0\le x\le 1/m^2$.

We then make sure that the above is a polynomial-time reduction.
We claim that obtaining $O(\log \frac{1}{\ell}+n\log n)$ bits
of $F_{\rm MAX}^{\Vec{X}}(\epsilon)$ from the oracle
in the fixed-point binary number system is sufficient to  
compute $F_{\rm MAX}^{\Vec{Y}}(\epsilon)$.
We have that 
$\epsilon^{-|E|}F_{\rm MAX}^{\Vec{X}}(\epsilon)=F_{\rm MAX}^{\tilde{\Vec{X}}}(\epsilon)$, 
where $\epsilon$ 
needs $\log \frac{1}{\ell}+ n\log n$ bits before the most significant bit plus 
$\lceil \log n!\rceil =O(n\log n)$
significant bits for computing $n!$.
Since the multiplication finishes in polynomial time with respect to $n$,
the entire reduction finish in polynomial time.
Thus, computing $F_{\rm MAX}^{\Vec{X}}(k\epsilon)$ is $\sharpp$-hard. 
So is $\bar F_{\rm MIN}^{-k\Vec{X}}(-\epsilon)=F_{\rm MAX}^{\Vec{X}}(k\epsilon)$ 
by Proposition~\ref{proposition:minmaxsymmetry}.
\end{proof}

The following lemma extends the hard-to-compute interval into almost all support of 
the edge length distribution function.
\begin{restatable}{lemma}{lemmaBreakpointHardInterval}
\label{lemma:Breakpoint-hard-interval}
Let $\Pi_G$ be the set of $s-t$ paths in graph $G=(V,E)$,
where the optimal path always has exactly $k$ edges regardless of the realization 
of the random edge lengths. 
For $\epsilon>0$, $\ell=\Omega(2^{-\poly(n)})$ and $a,x_0\in\mathbb{R}$, 
let $J=\{a\}\cup[x_0,x_0+\epsilon]$
and $F\in\distfamily_{\ell}(J;J_\leftarrow)\cup\distfamily_{\ell}(J;J_\rightarrow)$. 
Let $\Vec{X}\in\mathbb{R}^E$ be a random vector with components $X_e\iidsim F$ for $e\in E$. 
For $\epsilon=O(\ell/(n+1)!)$,
it is $\sharpp$-hard to 
compute a $(1\pm \frac{1}{2n!})$-approximation of 
\begin{enumerate}[{\normalfont (\arabic*)}]
\item $F_{\rm MIN}^{\Vec{X}}(x_0)=\bar F_{\rm MAX}^{-\Vec{X}}(-x_0)$
if $F\in\distfamily_{\ell}(J;J_\leftarrow)$; or
\item $F_{\rm MAX}^{\Vec{X}}(x_0)=\bar F_{\rm MIN}^{-\Vec{X}}(-x_0)$ 
if $F\in\distfamily_{\ell}(J;J_\rightarrow)$.
\end{enumerate}
\end{restatable}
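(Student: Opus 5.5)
We reduce Lemma~\ref{lemma:Breakpoint-hard-interval} to Lemma~\ref{lemma:Breakpoint-hard} by an \emph{artificial truncation} at the upper end $x_0+\epsilon$ (case (2)) or at the lower end $x_0$ (case (1)), followed by a reflection $x\mapsto -x$. We treat case (2), $F\in\distfamily_{\ell}(J;J_\rightarrow)$; case (1) is symmetric via Proposition~\ref{proposition:minmaxsymmetry}.

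The driving observation is this. Since $f=0$ on $J_\rightarrow$, we have $X_e\le a$ almost surely, where $a$ is the point of $J$ and $[x_0,x_0+\epsilon]$ lies to its left. We evaluate the longest path distribution at a level where every edge is forced into the near-uniform window. Concretely, for the $k$-edge optimal paths we look at the event that every edge is at least $x_0+\epsilon$ except for a deficit of total size at most $\epsilon$ from the top.

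For this we work with the reflected variables $W_e=(x_0+\epsilon)-X_e$. The target probability then becomes a shortest or longest path probability for the $W_e$ evaluated at $k\epsilon$, in the manner of Proposition~\ref{proposition:minmaxsymmetry}.

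The main issue is that $W_e$ may be negative, namely when $X_e\in(x_0+\epsilon,a]$. This is exactly the region where $f$ is not controlled. We handle it by conditioning.

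\begin{itemize}
\item First, split each edge into the event $X_e\le x_0+\epsilon$ and its complement. On the complement, the relevant event $\bigwedge_\pi(\sum_{e\in\pi}X_e\ge \dots)$ is monotone in $X_e$.
\item To avoid this entirely, we would instead impose a threshold: the quantity we extract is $\Pr[\text{event}\wedge \forall e: X_e\le x_0+\epsilon]$. We argue this is itself an i.i.d.\ path probability for the truncated distribution $\tilde F(x)=\min\{F(x)/F(x_0+\epsilon),1\}$, up to the known factor $F(x_0+\epsilon)^{|E|}$.
\item The truncated density is then supported in $(-\infty,x_0+\epsilon]$ and has a jump of height $>\ell$ at $x_0+\epsilon$. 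Hence the reflected distribution lies in $\distfamily_{\ell}(I;I_\leftarrow)$ with $I=[0,\epsilon]$.
\item Apply Lemma~\ref{lemma:Breakpoint-hard}(1) to the reflected, truncated instance with $a=0$. This gives $\sharpp$-hardness of a $(1\pm\frac1{2n!})$-approximation of its value at $k\epsilon$.
\end{itemize}

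The key step, and the expected main obstacle, is the \emph{exact correspondence} between $F_{\rm MAX}^{\Vec X}(x_0)$ (after the translation by $k$-fold offsets that the statement suppresses) and the truncated, reflected quantity. We must show that on the event relevant to the level $x_0$, every edge on every candidate path automatically lies in $[x_0,x_0+\epsilon]$. Edges with $X_e$ outside this window must contribute either probability zero or a factor that is computable in polynomial time.

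We would argue this as follows. The window lies below $J_\rightarrow$, so the upper tail is empty. The lower tail below $x_0$ only makes paths shorter, so it satisfies a max-constraint automatically. Its contribution is then a product of explicit factors $F(x_0)$ that we can divide out, using the polynomial-time computability of $F$ required in Definition~\ref{definition:localuniformity}.

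If this separation fails, the fallback is inclusion--exclusion over the set of edges lying below the window. The difficulty is that inclusion--exclusion has exponentially many terms. So we would first try to show that the constraint structure on the transportation-type graphs makes such edges irrelevant.

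Finally, we check bit complexity exactly as at the end of the proof of Lemma~\ref{lemma:Breakpoint-hard}. The rescaling factors $F(x_0+\epsilon)^{|E|}$ and $\epsilon^{|E|}$ need only $O(\log\frac1\ell+n\log n)$ bits. The choice $\epsilon=O(\ell/(n+1)!)$ keeps the approximation ratio within $(1\pm\frac1{2n!})$, which is enough to recover $\#$LE2H exactly via Theorem~\ref{th:orderpolytope}.
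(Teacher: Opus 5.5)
Your proposal has a genuine gap, located exactly at the step you flag as the main obstacle. You never establish the correspondence between $F^{\Vec X}_{\rm MAX}(x_0)$ and the truncated instance, and the arguments you sketch for it are false. The upper tail is not empty: $f$ vanishes only on $J_\rightarrow=(\max\{a,x_0+\epsilon\},\infty)$, so edges do take values in $(x_0+\epsilon,a]$, the very region where you observed $W_e<0$. An edge below $x_0$ does not make a max-constraint hold automatically, because the other edges of the path can still be long. Nor does the lower tail factor out as a power of $F(x_0)$. The constraints $\sum_{e\in\pi}X_e\le\cdot$ couple the edges along each path, so conditioning on which edges leave the window gives a sum over edge subsets with non-product terms. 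That is exactly the exponential inclusion--exclusion you want to avoid. The identity $\Pr[\text{event}\wedge\forall e\colon X_e\le x_0+\epsilon]=F(x_0+\epsilon)^{|E|}\cdot(\text{truncated value})$ is fine. However, it recovers the untruncated value only if the event itself forces every $X_e\le x_0+\epsilon$, and a max-event for a distribution bounded only from above does not do that.

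Even granting the truncation, your chain lands on the wrong functional. For $k$-edge paths and $W_e=(x_0+\epsilon)-\tilde X_e$, one has $\sum_{e\in\pi}W_e\le k\epsilon$ iff $\sum_{e\in\pi}\tilde X_e\ge kx_0$. So Lemma~\ref{lemma:Breakpoint-hard}(1) applied to $\Vec W$ certifies hardness of $F^{\Vec W}_{\rm MAX}(k\epsilon)=\bar F^{\tilde{\Vec X}}_{\rm MIN}(kx_0)$ (up to a null set), which is a shortest-path quantity; a reflection always swaps MIN and MAX (Proposition~\ref{proposition:minmaxsymmetry}). A jump at the upper end of the support pairs with MIN (Lemma~\ref{lemma:Breakpoint-hard}(2)), and a jump at the lower end pairs with MAX (Lemma~\ref{lemma:Breakpoint-hard}(1)).

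The paper proceeds accordingly in case (1). After normalizing $a=0$ so that $\Vec X\ge\Vec 0$, it truncates on the side \emph{opposite} to the existing support bound, i.e.\ from above (at $(x_0+\epsilon)/k$). It then only translates, so the result lies in $\distfamily_{\ell}(I^-;I^-_\rightarrow)$ and Lemma~\ref{lemma:Breakpoint-hard}(2), itself a MIN statement, applies. The discarded upper tail is dismissed via $\Vec X\ge\Vec 0$ and the factor $F((x_0+\epsilon)/k)^{-|E|}$. Case (2) is the mirror image: truncate from below at $x_0$ and invoke Lemma~\ref{lemma:Breakpoint-hard}(1) with no reflection. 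Your assignment of truncation ends to the two cases is therefore reversed.

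Note also that the paper's factor identity is exact only when the event forces every edge below the cut. For a union over several paths it fails, already for two parallel $s$--$t$ edges with $k=1$. So the tail step you isolated needs a real argument on either route.
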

We include the point $a\in\mathbb{R}$ to $J$ to represent that 
$\inf J$ and $\sup J$ for $\distfamily_{\ell}(J;J_\leftarrow)$ and 
$\distfamily_{\ell}(J;J_\rightarrow)$ may be 
far from $x_0$. The proof is in the following.
\begin{proof}
We consider the case (1) $F\in\distfamily_{\ell}(J;J_\leftarrow)$ and
show that computing $F_{\rm MIN}^{\Vec{X}}(x_0)$ is $\sharpp$-hard,
assuming $x_0>a$ since the case $x_0\le a$ is clear by Lemma~\ref{lemma:Breakpoint-hard}. 
The proof for the case (2) $F\in\distfamily_{\ell}(J;J_\rightarrow)$ is symmetry.
As in before, 
we focus on the case $a=0$ without loss of generality.
In case (1), if there exists an oracle $M$ that gives 
a $(1\pm\frac{1}{2n!})$-approximation of 
$F_{\rm MIN}^{\Vec{X}}(x_0)$ for $x_0$,
we prove that we can compute $F_{\rm MAX}^{\Vec{X}^-}(-\epsilon)$ for
some i.i.d. random vector $\Vec{X}^-$, which is $\sharpp$-hard to compute
by Lemma~\ref{lemma:Breakpoint-hard}.

Since $\Vec{X}\ge \Vec{0}$, 
we remark that $f(x)$ for $x>x_0+\epsilon$ does not matter 
in the computation of $F_{\rm MIN}^{\Vec{X}}(x_0)$.
We truncate $F(x)$ so that $\tilde F(x)=\int_0^x \tilde f(x)dx$ where
$\tilde f(x)=H((x_0+\epsilon)/k-x)\frac{1}{F((x_0+\epsilon)/k)} f(x)$.
Let $\tilde{\Vec{X}}\in [0,(x_0+\epsilon)/k]^E$ be a random vector with components 
$\tilde X_e\iidsim \tilde F$ for $e\in E$.
It is clear that we can compute $F_{\rm MIN}^{\Vec{X}}(x_0)$ if we have
the value of $F_{\rm MIN}^{\tilde{\Vec{X}}}(x_0)$ since, by definition, 
\begin{align*}
F_{\rm MIN}^{\tilde{\Vec{X}}}(x_0)=(F((x_0+\epsilon)/k))^{-|E|}F_{\rm MIN}^{\Vec{X}}(x_0).
\end{align*}
Now, consider the translation $F^-(x)=\tilde F(x+(x_0+\epsilon)/k)$
corresponding to random vector $\Vec{X}^-=\tilde{\Vec{X}}-\frac{x_0-\epsilon}{k}\Vec{1}$. 
Then, we have $F^-\in \distfamily_{\ell}(I^-;I^-_\rightarrow)~~(I^-=[-(x_0+\epsilon)/k,0])$,
so that $F_{\rm MIN}^{\tilde{\Vec{X}}}(x_0)=\bar F_{\rm MAX}^{\Vec{X}^-}(-\epsilon)$, 
meaning that $F_{\rm MIN}^{\tilde{\Vec{X}}}(x_0)$ is $\sharpp$-hard 
to compute by Lemma~\ref{lemma:Breakpoint-hard}.
\end{proof}

We further extend the above so that we do not need
both of the upper and the lower bounds of the edge lengths.
\begin{restatable}{lemma}{lemmaInftyHard}
\label{lemma:infty_hard}
Let $\Pi_G$ be the set of $s-t$ paths in graph $G=(V,E)$,
where the optimal path always has exactly $k$ edges regardless of the realization 
of the random edge lengths. 
For $\ell=\Omega(2^{-\mathrm{poly}(n)})$, $\epsilon=O(\ell/(n+1)!)$ and $x_0\in \mathbb{R}$, 
let $F\in \distfamily_{\ell}(I;\emptyset)$, where $I=[x_0,x_0+\epsilon]$.
Let $\Vec{X}\in\mathbb{R}^E$ be a random vector with components $X_e\iidsim F$ for $e\in E$.
For a constant $c$,
it is $\sharpp$-hard to compute $(1\pm \frac{1}{cn!})$-approximation of
both $F_{\rm MAX}^{\Vec{X}}(x_0)$ and $ F_{\rm MIN}^{\Vec{X}}(x_0)$.
\end{restatable}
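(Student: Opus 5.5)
We reduce to Lemma~\ref{lemma:Breakpoint-hard-interval}, in the spirit of step (iii) of the technical overview. The distribution $F$ has no support restriction, so we manufacture one: we truncate at a point $x_\ell$ below which the mass of $F$ is negligible, and then rescale. For $F_{\rm MIN}^{\Vec{X}}(x_0)$ the relevant tail is the left one. Since $F(x)\to 0$ as $x\to-\infty$, we choose $x_\ell< x_0/k$ with $F(x_\ell)\le \delta$, where $\delta=2^{-\poly(n)}$ is small enough that $|E|\delta\le \frac{1}{4cn!}\cdot\ell^{|E|}\epsilon^{|E|}$. This makes $\delta$ negligible compared with the target probability, which is itself at least inverse-exponential because $f>\ell$ on $I$. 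Because $F$ is polynomial-time computable, a bisection finds such an $x_\ell$ in polynomial time.

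We then define the conditioned distribution $F_\ell(x)=(F(x)-F(x_\ell))H(x-x_\ell)/(1-F(x_\ell))$ and let $\Vec{X}_\ell$ have i.i.d.\ components drawn from $F_\ell$. The coupling $\Vec{X}_\ell=\Vec{X}$ on the event that all $X_e>x_\ell$ has failure probability at most $|E|\delta$. Hence
\[
\bigl|F_{\rm MIN}^{\Vec{X}}(x_0)-F_{\rm MIN}^{\Vec{X}_\ell}(x_0)\bigr|\le 2|E|\delta .
\]
The rescaling factor $(1-F(x_\ell))^{-1}$ is $1\pm O(\delta)$, so it changes $f$ on $I$ only multiplicatively. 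Therefore $F_\ell\in\distfamily_{\ell'}(J;J_\leftarrow)$ with $J=\{x_\ell\}\cup[x_0,x_0+\epsilon]$ and $\ell'=\ell(1-O(\delta))$, and the derivative bound on $f$ is preserved. Lemma~\ref{lemma:Breakpoint-hard-interval}(1) then says that a $(1\pm\frac{1}{2n!})$-approximation of $F_{\rm MIN}^{\Vec{X}_\ell}(x_0)$ is $\sharpp$-hard.

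A $(1\pm\frac{1}{cn!})$-approximation of $F_{\rm MIN}^{\Vec{X}}(x_0)$, together with the additive error above, yields such an approximation once $c$ is a large enough constant (say $c\ge 4$). For this we need the additive error to be at most a $\frac{1}{4n!}$ fraction of $F_{\rm MIN}^{\Vec{X}_\ell}(x_0)$. We lower bound this probability by the event that a fixed $k$-edge path has all its lengths in a small box inside $I$, shifted appropriately. That event has probability at least $(\ell\epsilon/k)^{k}$, which is inverse-exponential, so the choice of $\delta$ suffices. The case $F_{\rm MAX}^{\Vec{X}}(x_0)$ is handled symmetrically: we truncate the right tail at $x_r$ with $1-F(x_r)\le\delta$ to obtain a member of $\distfamily_{\ell'}(J;J_\rightarrow)$, and apply part (2) of the lemma. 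Alternatively, we invoke Proposition~\ref{proposition:minmaxsymmetry} on $-\Vec{X}$.

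The main obstacle is the bookkeeping that keeps the reduction honest. First, Lemma~\ref{lemma:Breakpoint-hard-interval} is stated with a single point $a$ in $J$, and the interval must be placed so that the $k$-edge path length threshold $x_0$ corresponds to the per-edge window. If needed we apply the statement to $F(x+\text{shift})$, which amounts to translating $x_0$ by a multiple of $k$ as the earlier lemmas do. Second, we must check that $x_\ell$ has polynomially many bits. If the tail of $F$ is heavy, as for the Cauchy distribution, $|x_\ell|$ may be as large as $2^{\poly(n)}$, but it still has polynomial bit length, and the comparison uses only relative error, so this is acceptable.
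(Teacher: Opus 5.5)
Your skeleton is the paper's. You cut off a tail of $F$ with negligible mass, renormalize, invoke Lemma~\ref{lemma:Breakpoint-hard-interval} with $J=\{x_\ell\}\cup[x_0,x_0+\epsilon]$, and transfer the approximation guarantee back to $\Vec{X}$. The difference is the transfer step.

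The paper treats $F_{\rm MAX}$ with the \emph{right} tail removed and gets a purely multiplicative sandwich, $F(x_\ell)^{|E|}F_{\rm MAX}^{\tilde{\Vec{X}}}(x_0)\le F_{\rm MAX}^{\Vec{X}}(x_0)\le F_{\rm MAX}^{\tilde{\Vec{X}}}(x_0)$. The left inequality is the pointwise bound $\tilde f\le f/F(x_\ell)$. The right one is stochastic dominance ($\tilde F\ge F$) applied to the decreasing event $\{X_{\rm MAX}\le x_0\}$. This needs only $|E|(1-F(x_\ell))\le\frac{1}{cn!}$, and nothing about the size of the target probability.

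You use an additive coupling bound and convert it through a lower bound on the target. That costs a much smaller cutoff $\delta$ (still $2^{-\poly(n)}$, so harmless) and an extra estimate. In return, your route is the one that works for $F_{\rm MIN}$ with the left tail removed. There $B=\{X_e>x_\ell\ \forall e\}$ is increasing while $A=\{X_{\rm MIN}\le x_0\}$ is decreasing. So one only has $\Pr[A\mid B]\le\Pr[A]\le\Pr[A\mid B]+|E|\delta$, and the paper's remark that this case is ``symmetric'' silently needs a lower bound like yours. For the $F_{\rm MAX}$ case you can simply use the sandwich and skip the lower bound.

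Since that lower bound is the one new load-bearing ingredient, state its hypothesis precisely. Any box inside $I$ for the $k$ edges of a fixed path has path length at least $kx_0$. So it can certify $\{X_{\rm MIN}\le T\}$ only if $T>kx_0$; the box $[x_0,x_0+w]^k$ needs $T\ge k(x_0+w)$, and e.g.\ $T=k(x_0+\epsilon)$ gives $(\ell\epsilon)^k$. At the literal threshold $x_0$ with $x_0\ge 0$ and $k\ge 2$, no such box exists. The probability may then come entirely from mass below $I$, about which nothing is assumed.

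Your proposed fix of translating $F$ does not help. A shift by $\sigma$ moves the window by $\sigma$ and every path length by $k\sigma$, so the offset $x_0-kx_0$ between the threshold and $k$ times the window's left end is invariant. You have to read the threshold, in this lemma and in Lemma~\ref{lemma:Breakpoint-hard-interval} alike, as the $k$-fold one, as Lemma~\ref{lemma:Breakpoint-hard} does with $k(a+\epsilon)$. The paper's multiplicative transfer never meets this issue.

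Finally, drop the bisection. The oracle is queried on $\Vec{X}$ itself, so $x_\ell$ only needs to exist, as in the paper. Your bit-length claim for $x_\ell$ is also unjustified for tails that decay arbitrarily slowly.
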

\begin{proof}
We prove the $\sharpp$-hardness of computing $F_{\rm MAX}^{\Vec{X}}(x_0)$.
The proof for $F_{\rm MIN}^{\Vec{X}}(x_0)$ is symmetry.
Let $F\in \distfamily_{\ell}(I;\emptyset)$.
By definition, 
there exists some $x_{\ell}$ such that $F(x)\ge 1-\ell$ for $x>x_{\ell}$.
Then, we truncate $F(x)$ such that 
$\tilde F(x)=\int_{-\infty}^{x}\tilde f(x)dx$ where
$\tilde f(x)=H(x_{\ell}-x)\frac{1}{F(x_{\ell})}f(x)$.

To prove that $\ell=\Omega(2^{-\poly(n)})$ is sufficiently small, consider the following.
Let $\tilde{\Vec{X}}\in(-\infty,x_\ell]^E$ 
be a random vector with components $\tilde X_e\iidsim \tilde F$.
Now, by Lemma~\ref{lemma:Breakpoint-hard-interval}, 
computing $F_{\rm MAX}^{\tilde{\Vec{X}}}(x)$ 
for $x<x_{\ell}$ is $\sharpp$-hard 
since $\tilde F\in \distfamily_{\ell}(J;J_\leftarrow)$, where
$J=\{x_\ell\}\cup[x_0,x_0+\epsilon]$.
We use 
$F_{\rm MAX}^{\Vec{X}}(x)$ as our approximation of 
$F_{\rm MAX}^{\tilde{\Vec{X}}}(x)$, which is
a polynomial-time reduction. 

Assume an oracle $M$ that outputs a value $A$, satisfying 
$1-\frac{1}{cn!}\le \frac{A}{F_{\rm MAX}^{\Vec{X}}(x_0)}\le 1+\frac{1}{cn!}$ for a constant $c$.
Then, we claim that 
\begin{align}
1-\frac{1}{cn!}\le \frac{F_{\rm MAX}^{\Vec{X}}(x_0)}{F_{\rm MAX}^{\tilde{\Vec{X}}}(x_0)}\le 1. \label{form:approximating_tilde}
\end{align}
If so, we have that 
$1-\frac{2}{cn!}\le (1-\frac{1}{cn!})^2\le \frac{A}{F_{\rm MAX}^{\tilde{\Vec{X}}}(x_0)}\le 1+\frac{1}{cn!}$,
implying that, for $c\ge 4$, $M$ computes $F_{\rm MAX}^{\tilde{\Vec{X}}}(x)$ sufficiently
accurately so that we have the exact answer of a $\sharpp$-hard problem.

It remains to prove (\ref{form:approximating_tilde}).  
Clearly, we have 
$F_{\rm MAX}^{\Vec{X}}(x_0)\le F_{\rm MAX}^{\tilde{\Vec{X}}}(x_0)$
for $x_0\le x_{\ell}$, since $f(x)\le \tilde f(x)$ by definition for $x\le x_{\ell}$.
Then, we focus on the earlier inequality of (\ref{form:approximating_tilde}).
We here set $\ell=\frac{1}{c|E|(n!)}$.
By definition, we have that 
\begin{align}
&F_{\rm MAX}^{\tilde{\Vec{X}}}(x_0)=\int_{\mathbb{R}^E}\Pr\left[\bigwedge_{\pi\in\Pi_G}\sum_{e\in \pi} x_e\le x_0 \right]\prod_{e\in E}\tilde f(x_e)d\Vec{x}, \\
&\le F(x_\ell)^{-|E|}\int_{\mathbb{R}^E}\Pr\left[\bigwedge_{\pi\in\Pi_G}\sum_{e\in \pi} x_e\le x_0 \right]\prod_{e\in E}f(x_e)d\Vec{x}= F(x_\ell)^{-|E|} F_{\rm MAX}^{\Vec{X}}(x_0).
\end{align}
The inequality is due to $f(x)\ge \tilde f(x)=0$ for $x> x_\ell$.
The above arguments amount to
\begin{align}
\frac{F_{\rm MAX}^{\Vec{X}}(x_0) }{F_{\rm MAX}^{\tilde{\Vec{X}}}(x_0)}\ge F(x_\ell)^{|E|} \ge (1-\ell)^{|E|}\ge 1-|E|\ell =1-\frac{1}{cn!},
\end{align}
by linear approximation, proving (\ref{form:approximating_tilde}) and hence the claim of the lemma. 
\end{proof}

Now, by Lemma~\ref{lemma:infty_hard}, we establish Theorems~\ref{th:directedhard_manydistributions} as follows.
\begin{proof}(of Theorem~\ref{th:directedhard_manydistributions})
The claim is clear by Lemma~\ref{lemma:infty_hard},
since we can restrict the number of edges as exactly three in the optimal path candidates
regardless of the edge length distribution in $\distfamily_{\ell}(I;\emptyset)$
for $\ell=\Omega(2^{-\poly(n)})$, $\epsilon=O(\ell/(n+1)!)$ and $x_0\in \mathbb{R}$.
\end{proof}

As for an undirected graph, we state the following.
\begin{lemma}
\label{lemma:undirectedhard_negativepositive}
For an undirected graph, SPPDF-IID (resp. LPPDF-IID) is $\sharpp$-hard if
the distribution of the edge lengths is in 
$\distfamily_{\ell}(I^+;I^+_\leftarrow)$ (resp. $\distfamily_{\ell}(I^-,;I^-_\rightarrow)$), 
where $I^+=[x_0,x_0+\epsilon]$ and $I^-=[-x_0-\epsilon,-x_0]$
for $x_0\ge\epsilon$, $\ell=\Omega(2^{-\poly(n)})$ and $\epsilon=O(\ell/(n+1)!)$.
\end{lemma}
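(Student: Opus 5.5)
The plan is to reduce to Lemma~\ref{lemma:Breakpoint-hard} (more precisely, to its ``scaled'' form), applied to the undirected graphs from the uniform case. For SPPDF-IID with $F\in\distfamily_{\ell}(I^+;I^+_\leftarrow)$, the graph is the underlying undirected graph $\graphfour_R$ of the transportation graph, as in Corollary~\ref{corollary:undirected_positive_shortest_negative_longest}. For LPPDF-IID with $F\in\distfamily_{\ell}(I^-;I^-_\rightarrow)$, the symmetric statement follows via Proposition~\ref{proposition:minmaxsymmetry}, since $-\Vec{X}$ then has a distribution in $\distfamily_{\ell}(I^+;I^+_\leftarrow)$. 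So it suffices to treat SPPDF-IID with support of $f$ contained in $[x_0,\infty)$, $x_0\ge\epsilon$, and $f>\ell$ on $[x_0,x_0+\epsilon]$.

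\textbf{Step 1 (fixed edge count).} Since every edge length is at least $x_0\ge\epsilon>0$, any $s$--$t$ path in $\graphfour_R$ other than the three-edge paths $s\,u\,v\,t$ has at least five edges, so its length is at least $5x_0$. We evaluate $F_{\rm MIN}$ at a point $x^*\le 3x_0+3\epsilon$. Then $5x_0>3x_0+3\epsilon$ holds whenever $2x_0>3\epsilon$.

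This is the delicate point. If only $x_0\ge\epsilon$ is available, I would instead evaluate at $x^*=3x_0+\epsilon'$ with a smaller window $\epsilon'=\epsilon/3$. The density still satisfies the local-uniformity conditions on the subinterval $[x_0,x_0+\epsilon']$, and $\epsilon'$ is still $O(\ell/(n+1)!)$. Since $5x_0\ge 3x_0+2\epsilon>3x_0+\epsilon'$, only three-edge paths can have length $\le x^*$. Hence $F_{\rm MIN}$ of $\graphfour_R$ at $x^*$ coincides with $F_{\rm MIN}$ of the directed transportation graph $G^\ast_R$ at $x^*$, whose $s$--$t$ paths all have exactly $k=3$ edges.

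\textbf{Step 2 (apply the breakpoint lemma).} On the directed graph with all edges i.i.d.\ $F$, I invoke case (2) of Lemma~\ref{lemma:Breakpoint-hard} with $a=x_0$ and window $\epsilon'$. The statement there is phrased for $\distfamily_{\ell}(I;I_\rightarrow)$; our distribution vanishes to the \emph{left} of $I^+$ instead. So I use the symmetric version of the lemma, case (1) read through Proposition~\ref{proposition:minmaxsymmetry}: negating the lengths turns $\distfamily_\ell(I^+;I^+_\leftarrow)$ into $\distfamily_\ell(-I^+;(-I^+)_\rightarrow)$ and swaps MIN and MAX. Its proof is the same truncation $\tilde u(x)=H(3\epsilon'-x)f(x+x_0)/F(x_0+3\epsilon')$ near the jump, comparing with the uniform law on the window. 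This yields $\sharpp$-hardness of a $(1\pm\frac1{2n!})$-approximation of $F_{\rm MIN}(3x_0+\epsilon')$ for $G^\ast_R$, where the all-i.i.d.\ directed case is already $\sharpp$-hard by Theorem~\ref{th:directedhard}. Combined with Step 1, the same holds for the undirected graph.

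\textbf{Step 3 (bookkeeping).} I would check that $\ell$, $\epsilon'$ and $x^*$ have polynomial bit length, so that a polynomial number of oracle bits determines the $\#$LE2H count, exactly as in the final paragraph of the proof of Lemma~\ref{lemma:Breakpoint-hard}.

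The main obstacle is Step 1: justifying that longer undirected paths contribute nothing at the evaluation point, using only the margin $x_0\ge\epsilon$. Shrinking the window to $\epsilon/3$ is my first attempt. If the lemma's parameters resist that, the fallback is the cycle construction $\graphfive_R$ of Lemma~\ref{lemma:positiveLPPDFIID_negativeSPPDFIID}, which enforces equal edge counts structurally.
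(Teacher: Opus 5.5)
Your Step~1 is sound, and more explicit than the paper's one-line appeal to the undirected reductions. $\graphfour_R$ is bipartite with $s$ and $t$ on opposite sides, so every other $s$--$t$ path has at least five edges. Since the SPPDF event is a union over paths, paths of length $\ge 5x_0>x^*$ can simply be dropped.

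The gap is in Step~2. Case~(1) of Lemma~\ref{lemma:Breakpoint-hard}, read through Proposition~\ref{proposition:minmaxsymmetry}, gives hardness of $F_{\rm MAX}^{\Vec{X}}(3(x_0+\epsilon'))=\bar F_{\rm MIN}^{-\Vec{X}}(-3(x_0+\epsilon'))$. That is LPPDF for $\Vec{X}$, or equivalently SPPDF for $-\Vec{X}$, whose law has its jump at the \emph{top} of its support. It says nothing about $F_{\rm MIN}^{\Vec{X}}$. After your negation, $F_{\rm MIN}^{\Vec{X}}$ becomes LPPDF for a law in $\distfamily_{\ell}(-I^+;(-I^+)_\rightarrow)$. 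Lemma~\ref{lemma:Breakpoint-hard} covers LPPDF only for laws of type $\distfamily_{\ell}(I;I_\leftarrow)$ and SPPDF only for type $\distfamily_{\ell}(I;I_\rightarrow)$, i.e.\ only the end of the support where the event is a small-box event. Your problem sits at the other end in both formulations.

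This is not a labelling issue. The truncation $\tilde u$ is exact only when the event forces every edge into the window. ``All three-edge paths have length at most $3(x_0+\epsilon')$'' does force $X_e\le x_0+3\epsilon'$ for every $e$. ``Some path has length at most $x^*$'' constrains only the three edges of that path, so conditioning all edges into the window is not a rescaling by $F(\cdot)^{|E|}$.

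In fact the intermediate claim of Step~2 is false. Take $X_e=x_0+\mathrm{Exp}(1)$, which lies in $\distfamily_{\ell}(I^+;I^+_\leftarrow)$ with, say, $\ell=1/2$. Two distinct paths $s\,u\,v\,t$ share at most one edge, so their joint probability is $O(\epsilon'^5)$, against $\Theta(\epsilon'^3)$ for a single path. By Bonferroni, $F_{\rm MIN}(3x_0+\epsilon')$ therefore equals the closed form $|R|\Pr[\Gamma(3,1)\le\epsilon']$ up to a factor $1\pm O(|R|\epsilon'^2)$. That factor lies far inside $1\pm\frac{1}{2n!}$, so this approximation is easy, and no bookkeeping in Step~3 can make it $\sharpp$-hard.

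The step looks plausible only because, for the uniform law, the reflection $X_e\mapsto 1-X_e$ swaps the two ends of the support. That reflection is how the paper passes between LPPDF and SPPDF in the uniform case, and a general law in the class has no such symmetry. Your fallback $\graphfive_R$ fixes edge counts, which is not where the problem lies.

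The paper does not apply Lemma~\ref{lemma:Breakpoint-hard} at the bottom jump. It invokes Lemma~\ref{lemma:infty_hard}, whose SPPDF part rests on Lemma~\ref{lemma:Breakpoint-hard-interval}. There the law is truncated from above to manufacture a jump at the top of the truncated support, where the SPPDF event is again a small-box event. That top-jump device is the ingredient your argument lacks. If you adopt it, you must also justify returning from the truncated law to the original one. For the same union-event reason as above, that step is not a rescaling by a power of $F$ either.
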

\begin{proof}
Notice that we can scale $x$ by $1/\epsilon$ so that
we use $F(x/\epsilon)$ instead of $F(x)$,
which implies considering $F\in\distfamily_{\ell}([\epsilon,2\epsilon];(-\infty,\epsilon))$ is 
equivalent to considering $F\in\distfamily_{\epsilon^2\ell}([1,2];(-\infty,1))$.
Since our reductions used for Theorem~\ref{th:undirectedhard} 
can restrict the number of edges
in case edge lengths are all negative or all positive 
with their absolute values at least $\epsilon$
for undirected graphs, 
we apply Lemma~\ref{lemma:infty_hard} for 
the edge length distributions in $\distfamily_{\ell}(I^+;I^+_\leftarrow)$ 
or $\distfamily_{\ell}(I^-;I^-_\rightarrow)$
for $x_0\ge\epsilon$, $\ell=\Omega(2^{-\poly(n)})$ and $\epsilon=O(\ell/(n+1)!)$, 
which proves the lemma.
\end{proof}

For the proof of Theorem~\ref{th:undirectedhard_manydistributions},
we consider polynomially many parallel edges with i.i.d. random lengths, 
utilizing the following
\begin{align*}
\Pr\left[\max_{i=1,\dots,N}\{X_i\}\le x\right]&=
\Pr\left[\bigwedge_{i=1,\dots,N}X_i\le x\right]=F^N(x), ~~\text{and}\\
\Pr\left[\min_{i=1,\dots,N}\{X_i\}> x\right]&=
\Pr\left[\bigwedge_{i=1,\dots,N}X_i> x\right]=\bar F^N(x)= (1-F(x))^N,
\end{align*}
for $X_i~\iidsim F ~~(i=1,\dots,N)$.
Then, if the edge lengths are drawn from $F$, 
the problems SPPDF-IID and LPPDF-IID are reduced to 
the problems dealt in Lemma~\ref{lemma:undirectedhard_negativepositive},
which presented a special case in the following.
\begin{proof} (of Theorem~\ref{th:undirectedhard_manydistributions})
Consider LPPDF-IID with the edge length distribution function $F$ 
such that 
$F^N\in \distfamily_{\ell}(I^+;\emptyset)\cup \distfamily_{\ell}(I^-,(-\epsilon,\infty))$,
for $N=O(\poly(n))$.
We further focus on $F$ such that 
$F^N\in\distfamily_{\ell}(I^+;\emptyset)$
and make sure the existence of such $N$.
We assume that $F(x_\ell)\le 1-\frac{1}{\phi(n)}$ for $\phi(n)=\poly(n)$ and $\epsilon\le x_\ell \le x_0$.


Then, we replace each edge by $N$ parallel edges,
which is equivalent to consider the edge length distribution $F^N(x)$.
We consider truncating $F(x)$ at $x=x_\ell$
such that $\tilde F(x)=\int_{x_\ell}^x\tilde f(x) dx$ for
$x>x_\ell$ and $\tilde f(x)=H(x-x_\ell)\frac{1}{1-F(x_\ell)}f(x)$, where 
$f(x)=\frac{d}{dx}F(x)$.
Given any height-2 poset $\mathcal{P}=(U\cup U', R)$, 
consider related $\graphfive_R$ defined in Definition~\ref{definition:graphfive}.
Let $\graphfive_R^N$ be the graph given by 
replacing each edge of $\graphfive_R$ by 
$N=N_1N_2$ parallel edges, where $N_1=\poly(n)$ and 
$N_2=\lceil \phi(n)\ln2 \rceil$,
We claim that $F^{N_2}(x_\ell)\le \frac{1}{2}$,
which will be shown later.
If so, in $\graphfive_R^N$, LPPDF-IID in case edge lengths obey
the distribution given by $F(x)$ is equivalent to
the LPPDF-IID in case edge lengths obey $F^N(x)$ in $\graphfive_R$.
Then, notice that 
the longest path length distribution function 
$F_{uv}(x)$ in $\graphfive_R^N$ 
between any two adjacent vertices is bounded from above by 
\begin{align*}
F^{N_1N_2}(x_\ell)=\Pr\left[\bigwedge_{i=1,\dots,N} (X_i \le \epsilon)\right]\le 2^{-N_1},
\end{align*}
where $X_i\iidsim F$.
Thus, by truncating $F^N(x)$ at $x=\mu$ so that we obtain
$\tilde F^N(x)$ instead of $F^N(x)$, we can repeat the arguments
the LPPDF-IID in $\graphfive_R^N$ with the edge length distribution
function $F(x)$ sufficiently accurately approximates  
the LPPDF-IID with $\tilde F(x)$ at $x=x_0$. 
Since $\tilde F\in \distfamily_{\ell}(I^+;I^+_\leftarrow)~~(I^+=[x_0,x_0+\epsilon])$,
for some $x_0\ge \epsilon$,
the $\sharpp$-hardness of LPPDF-IID in $\graphfive_R^N$ is proved
by Lemma~\ref{lemma:undirectedhard_negativepositive}
which establishes the $\sharpp$-hardness of the case with edge length distribution
function $F(x)$.

It remains to prove that 
$N_2=\lceil \phi(n)\ln2 \rceil$ is sufficient
for $F^{N_2}(x)\le \frac{1}{2}$.
Remember that $F(x_\ell)< 1-2^{-\phi(n)}$.
By observing that $1+x\le e^x$ for $x=-\frac{1}{\phi(n)}$, we have
\begin{align*}
F^{N_2}(x_\ell)\le \left(1-\frac{1}{\phi(n)}\right)^{N_2}\le e^{-N_2/\phi(n)}\le e^{-\ln2} \le \frac{1}{2},
\end{align*}
proving the claim for LPPDF-IID.
The claim for $\sharpp$-hardness of SPPDF-IID is clear 
by Proposition~\ref{proposition:minmaxsymmetry}.
\end{proof}

\section{$\xp$ Algorithm for Directed Graphs with Edge Lengths i.i.d. Uniformly over $[0,1]$}
In this section, we focus on the case where edge lengths are 
i.i.d. uniformly distributed random variables.
Specifically, we show that SPPDF-IID can be 
computed in polynomial time if the underlying treewidth
is bounded by a constant, and so is LPPDF-IID by symmetry.
To apply the tree decomposition to our directed graph,
we consider the tree decomposition of the underlying undirected graph of $G$.
Our algorithm is based on a bottom-up dynamic programming on the tree decomposition,
maintaining the piecewise polynomial function as the state.
The basic idea is that the shortest and longest
length distribution functions can be expressed as a sequence of convolutions.
In the following, we focus on $F_{\rm MIN}(x)$ for the conciseness,
assuming that the edge lengths are given by 
$\Vec{X}\in[0,1]^E$ where $X_e\iidsim \Unif$ for $e\in E$.

\subsection{Polynomial Time Algorithm for Underlying Treewidth $k$ Graphs}
Given a directed graph $G=(V, E)$ and its tree decomposition with width $k$,
we consider how to compute $F_{\rm MAX}(x)$ and $F_{\rm MIN}(x)$ 
in the following.
Let ${\cal B}=\{B_1,\dots,B_b\}$ be
the set of bags of the tree decomposition.
We assume that $B_1$ is the root of the tree decomposition, and bags
are ordered in breadth-first search order from the root $B_1$.
We show the outline of our algorithm as follows 
\algorithmmainalg*
We will define the undefined words 'sentinel' and symbols $F_{\partial G_i}$ and $\partial_E[n]$ in the above later.
Since our algorithm heavily depends on how the recurrence
of $F_{\partial G_i}(\Vec{y}_{\partial_E[i]})$ is defined,
let us see the detail of the recurrence in the next.

\subsection{Constructing the Recurrence}
\label{section:recurrence}
We construct a recurrence to compute the shortest path length 
distribution function 
$F_{\rm MIN}(x)$ of the directed graph $G=(V,E)$.
Our algorithm is applicable to a graph with multiple sources and sinks.
Let $S(G)$ and $T(G)$ $(S(G)\cap T(G)=\emptyset)$ 
be the set of the sources and the sinks of $G$, respectively.
Let $\Pi_G$ be the set of all paths between $S(G)$ and $T(G)$.
We consider the following special vertex and its edges.
\definitionsentinel*
In the following, we assume that the sentinel vertex and edges are already 
added to $G$.
The edge $\bot v$ (resp. $v\bot$) represents the case 
$v$ has no incoming (resp. outgoing) edge in a certain computation, 
which will be explained later.

We will extend the idea of Theorem~\ref{th:DAGLP}~\cite{AOSY2009} 
to handle cycles.
The first idea for our recurrence is to associate, with each edge $uv$, 
two dummy variables $z_{uv,u}$ and $z_{uv,v}$, 
which are the components of $\Vec{z}_{\rm src}$ and $\Vec{z}_{\rm dst}$, respectively. 
These variables represents the shortest path length from the edge source, 
or from the edge destination, to the overall sink of the graph.
 In the following, we consider the following probability: 
\begin{align}
F_{G}(\Vec{z}_{E_\bot})=\Pr\left[\bigvee_{\pi\in \Pi_G}\left(\sum_{uv\in \pi}X_{uv} \le z_{e_s(\pi),s_\pi}-z_{e_t(\pi),t_\pi}\right)\right], \label{form:y}
\end{align}
where $e_s(\pi)$ and $e_t(\pi)$ are the first and the last edges of $\pi$;
$s_\pi\in S(G)$ and $t_\pi\in T(G)$ are the source and the sink of
path $\pi\in\Pi_G$. Here, the components 
$z_{e_s(\pi),s_\pi}$ and $z_{e_t(\pi),t_\pi}$ for $\pi\in \Pi_G$ of 
$\Vec{z}_{E_\bot}\in\mathbb{R}^{S(G)\cup T(G)}$ 
are defined for the path length from $s_\pi$ to $t_\pi$. 
Notice that we have the distribution function $F_{\rm MIN}(x)$ 
of the simple shortest path length if 
we set $z_{\bot s,\bot}=x$ for $s\in S(G)$,
and $z_{t\bot,\bot}=0$ for $t\in T(G)$.
By holding the parameters, we can
execute the convolution and extend the graph structure.

The following notations are necessary to argue 
the set of vertices that have already been processed and then
adding another vertex into the computation.
\begin{definition}
Let $G=(V,E)$ be a directed graph with vertex set $V=[i]$.
Let $[i]=\{1,\dots,i\}$ for $i\ge 1$.
Let $G_i=([i],E_i)$ be a subgraph of $G$ induced by $[i]$. 
Let $N(i)$ be the set of adjacent 
vertices of $i$.
Then, we set $E(i)=E^+(i)\cup E^-(i)$, where
$E^+(i)=\{iv \in E | v\in N_i\}$ and 
$E^-(i)=\{vi \in E | v\in N_i\}$.
\end{definition}
We use $\partial$ for the concepts that incorporates the frontier. 
Specifically, we define the following.
\begin{definition}
We write $\partial_V[i]$ to mean the frontier vertex set, 
the set of adjacent vertices of $[i]$ in $G$, excluding the vertices in $[i]$.
That is, $\partial_V[i]=\bigcup_{j\in [i]}N(j) \setminus [i]$.
We also define the frontier edge sets, 
$\partial_E^-[i]=\{uv \in E | u\in \partial_V[i], v\in [i]\}$ and
$\partial_E^+[i]=\{uv \in E | u\in [i], v\in \partial_V[i]\}$.
We set $\partial_E[i]=\partial_E^-[i]\cup \partial_E^+[i]$.
\end{definition}

We define the following subgraphs of $G$ to describe
the intermediate states of the ongoing computation.
\definitiongreatergraph*

We construct the shortest path length distribution function as follows.
\intermediateF*
For achieving such $F_{\partial G_i}(\Vec{z}_{\partial_E[i]})$, 
we encode the graph structure into 
a probability of boolean combinations 
of inequalities, where we use convolutions for 
combining the probabilities. 

We consider two forms as the gadgets for the vertices and edges.
\definitiongadgets*
For the vertex gadget, 
we take the partial derivative of the probability
$\Pr\left[\bigvee_{ui,iv\in E(i)}\mathcal{U}_{ui,iv}(\Vec{z}^+_{\rm src},\Vec{z}^-_{\rm dst})\right]$ 
with respect to each component of $\Vec{z}^-_{\rm dst}$.
To understand the behavior of the vertex gadget,
we remark the following.
\begin{remark}
\label{remark:generalizedfunction}
The vertex gadget $\mathcal{V}_i$ is an $|E(i)|$-variables generalized function, 
which is nonzero (the value is undefined like $\delta(0)$)
when $\bigwedge_{ui\in E^-(i)}\bigvee_{iv\in E^+(i)} \mathcal{U}'_{ui,iv}(\Vec{z}^+_{\rm src},\Vec{z}^-_{\rm dst})$ is true, where 
\begin{align*}
&\mathcal{U}'_{ui,iv}(\Vec{z}^+_{\rm src},\Vec{z}^-_{\rm dst})=\\
&\left(\vphantom{\frac{1}{1}} z_{ui,i}=z_{iv,i}\right)\wedge 
\left(\bigwedge_{u'i\in E^-(i)\setminus \{ui\}}(z_{u'i}=z_{i\bot})\right) \wedge
\left(\bigwedge_{iv'\in E^+(i)\setminus\{iv\}} (0\le z_{\bot i,i}-z_{iv',i})\right).
\end{align*}
\end{remark}
\begin{proof}
We can understand $\mathcal{V}_i$ by approximating 
$\Pr\left[\bigvee_{ui,iv\in E(i)}\mathcal{U}_{ui,iv}(\Vec{z}^+_{\rm src},\Vec{z}^-_{\rm dst})\right]$ with an $|E(i)|$-variables function
by the analogy that the Dirac delta is a narrowed uniform density. 
Let $\Vec{X}\in [0,1]^{E(i)}$ be a random vector with its components $X_{ui,iv}\iidsim\Unif$
for $ui,iv\in E(i)$.
Then, we have $\mathcal{V}_i=\lim_{\epsilon\rightarrow+0} \mathcal{V}_i^{(\epsilon)}$ where
we definie $\mathcal{V}_i^{(\epsilon)}$ by the following.
\begin{align*}
&\mathcal{V}_i^{(\epsilon)}(\Vec{z}^+_{\rm src},\Vec{z}_{\rm dst}^-)=D_{\Vec{z}^-}\Pr\left[\bigvee_{ui,iv\in E(i)}\mathcal{U}^{(\epsilon)}_{ui,iv}(\Vec{z}^+_{\rm src},\Vec{z}^-_{\rm dst})\right],\\
&\text{where}~~\mathcal{U}^{(\epsilon)}_{ui,iv}(\Vec{z}^+_{\rm src},\Vec{z}^-_{\rm dst})=(\epsilon X_{ui,iv}\le z_{ui,i} - z_{iv,i} )\wedge\mathrm{Garbage}^{(\epsilon)}(ui,iv,\Vec{z}),\\
&\text{and}~~\mathrm{Garbage}^{(\epsilon)}(ui,iv,\Vec{z})=\bigwedge_{\begin{subarray}{c}
u'i,iv'\in E(i)\\
u'\neq u, v'\neq v
\end{subarray}}(\epsilon X_{u'i,i\bot}\le z_{u'i,u'}-z_{i\bot,i}\wedge \epsilon X_{\bot i, iv'}\le  z_{\bot i,i}-z_{iv',i}). 
\end{align*}
Clearly, $\mathcal{V}_i^{(\epsilon)}$ exists for $\epsilon>0$.
So is $\mathcal{V}_i$ in the sense of generalized function\footnote{
Pragmatically, any $\epsilon>0$ that is sufficiently close to $0$ will work approximately.
}. 
\end{proof}

In the following, we consider the construction in a bottom-up manner.
Our $F_{\partial G_i}$ has $|\partial_E[i]|$ arguments, 
which are necessary to describe the path length from the source or 
the indegree zero vertices to the sink or the outdegree zero vertices
in $\partial G_i$. 
We remind that the vertices on the frontier of $\partial G_i$
come from the edges of $G$.
The following definition eliminates $z_{uv,u}$ for any $uv\in E$, 
where a vertex gadget of $i$ is left with $z_{uv,v}$.
\begin{definition}
\label{definition:FGi}
  For $i=1,\dots,n$ and $\Vec{z}^+_{\rm dst}\in \mathbb{R}^{E^+(i)}, \Vec{z}^-_{\rm dst}\in \mathbb{R}^{E^-(i)}$, we set 
\begin{align*}
\mathcal{V}^+_i(\Vec{z}_{\rm dst}^+,\Vec{z}_{\rm dst}^-)= \hspace*{-6mm}
\int\limits_{\hspace*{7mm}\Vec{\zeta}_{\rm src}\in \mathbb{R}^{E^+(i)}}\hspace*{-8mm}
\mathcal{V}_i(\cat{\Vec{\zeta}_{\rm src}}{\Vec{z}_{\rm dst}^-})\left(
\mathcal{E}_i(\cat{\Vec{\zeta}_{\rm src}}{\Vec{z}_{\rm dst}^+})\right) d\Vec{\zeta}.  
\end{align*}
\end{definition}
By this convolution, we obtain 
$\mathcal{V}_i^+(\Vec{z}^+_{\rm dst},\Vec{z}^-_{\rm dst})$ by replacing,
in the definition of 
$\mathcal{V}_i(\Vec{z}^+_{\rm src},\Vec{z}^-_{\rm dst})$, 
each $z_{iv,i}$ by $z_{iv,v}-X_{iv}$ that comes from $\mathcal{E}_i$.
\begin{proposition}
\label{proposition}
\begin{align*}
  &\mathcal{V}^+_i(\Vec{z}^+_{\rm dst},\Vec{z}^-_{\rm dst})=D_{\Vec{z}^-}\Pr\left[\bigvee_{ui,iv\in E(i)}\mathcal{U}_{ui,iv}(\Vec{z}^+_{\rm dst},\Vec{z}^-_{\rm dst})\right], \\
  &\text{where}~~\mathcal{U}_{ui,iv}(\Vec{z}^+_{\rm dst},\Vec{z}^-_{\rm dst})=(X_{iv}\le z_{ui,i} - z_{iv,v} )\wedge\mathrm{Garbage}(ui,iv,\Vec{z}),\\
  &\text{and}~~\mathrm{Garbage}(ui,iv,\Vec{z})=\bigwedge_{\begin{subarray}{c}
  u'i,iv'\in E(i)\\
  u'\neq u, v'\neq v
  \end{subarray}}(0\le z_{u'i,u'}-z_{i\bot,i}\wedge X_{v'}\le  z_{\bot i,i}-z_{iv',v}) 
\end{align*}
\end{proposition}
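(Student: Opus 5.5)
The plan is to compute the convolution of Definition~\ref{definition:FGi} one coordinate at a time, using the sifting property of the Dirac delta together with the commutativity of convolution (Proposition~\ref{proposition:convolution}). Write $\Phi(\Vec{z}^+_{\rm src},\Vec{z}^-_{\rm dst})=\Pr\bigl[\bigvee_{ui,iv}\mathcal{U}_{ui,iv}\bigr]$, so that $\mathcal{V}_i=D_{\Vec{z}^-}\Phi$. The operator $D_{\Vec{z}^-}$ acts only on the coordinates $z_{ui,i}$ of $\Vec{z}^-_{\rm dst}$, while the integration variable $\Vec{\zeta}_{\rm src}$ lives on $E^+(i)$. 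Hence $D_{\Vec{z}^-}$ commutes with $\int d\Vec{\zeta}$, and it suffices to show
\begin{align*}
\int_{\mathbb{R}^{E^+(i)}}\Phi(\cat{\Vec{\zeta}_{\rm src}}{\Vec{z}^-_{\rm dst}})\prod_{iv\in E^+(i)}\unif(\zeta_{iv,i}-z_{iv,v})\,d\Vec{\zeta}=\Pr\left[\bigvee_{ui,iv\in E(i)}\mathcal{U}_{ui,iv}(\Vec{z}^+_{\rm dst},\Vec{z}^-_{\rm dst})\right].
\end{align*}
This interchange needs justification, since $\mathcal{V}_i$ is only a generalized function. I would justify it through the $\epsilon$-regularization $\mathcal{V}_i^{(\epsilon)}$ of Remark~\ref{remark:generalizedfunction}: for $\epsilon>0$ everything is an honest bounded piecewise-polynomial function, so differentiation under the integral is legitimate, and then I let $\epsilon\to+0$.

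\textbf{Probabilistic reading.} The product $\prod_{iv}\unif(\zeta_{iv,i}-z_{iv,v})$ is exactly the joint density of the vector $(z_{iv,v}+X_{iv})_{iv\in E^+(i)}$, where the $X_{iv}\iidsim\Unif$ are independent of any randomness inside $\Phi$. The randomness inside $\Phi$ is at most that of the regularizing variables, which is independent by construction. The integral is therefore $\mathbb{E}\bigl[\Phi(\cat{(z_{iv,v}+X_{iv})_{iv}}{\Vec{z}^-_{\rm dst}})\bigr]$. By independence (conditioning, as in the proof of Proposition~\ref{proposition:Phi_Psi_convolution}), this equals the probability of the event obtained by substituting $\zeta_{iv,i}=z_{iv,v}+X_{iv}$ into every atom of $\mathcal{U}_{ui,iv}$. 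I would then check the substitution atom by atom:
\begin{itemize}
\item the main atom $0\le z_{ui,i}-\zeta_{iv,i}$ becomes $X_{iv}\le z_{ui,i}-z_{iv,v}$;
\item each garbage atom $0\le z_{\bot i,i}-\zeta_{iv',i}$ becomes $X_{iv'}\le z_{\bot i,i}-z_{iv',v'}$;
\item the atoms $0\le z_{u'i,u'}-z_{i\bot,i}$ contain no $\zeta$ and are unchanged.
\end{itemize}
This yields exactly the displayed statement. The sentinel coordinates $z_{\bot i,i}$ and $z_{i\bot,i}$ are treated as fixed parameters throughout.

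\textbf{Main obstacle.} The hard part is rigor. The vertex gadget is a distribution rather than a function, and the same $X_{iv}$ appears in several disjuncts (once in the main atom for the pair $(ui,iv)$ and once in the garbage atoms of the other pairs). The key point is to substitute the \emph{same} random variable consistently in every occurrence, rather than convolving each occurrence separately. Realizing the convolution as a single expectation over the joint density does exactly this.

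For the limit $\epsilon\to+0$, I would argue as follows. $\Phi^{(\epsilon)}$ converges pointwise to $\Phi$ except on the measure-zero set of ties. The integrand is dominated by the integrable density $\prod_{iv}\unif(\cdot)$. Dominated convergence therefore gives convergence of the integrals, and distributional convergence of the $D_{\Vec{z}^-}$-derivatives then follows by pairing against test functions in $\Vec{z}^-_{\rm dst}$.
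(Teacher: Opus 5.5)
Your proposal is correct and is essentially the paper's argument. The paper justifies this proposition only by the one-line remark that convolving with $\mathcal{E}_i$ substitutes each $z_{iv,i}$ by a value shifted by $X_{iv}$; your reading of the convolution as an expectation over the joint density $\prod_{iv}\unif(\zeta_{iv,i}-z_{iv,v})$ makes that substitution precise, including its consistency across disjuncts, after you commute $D_{\Vec{z}^-}$ past the $\Vec{\zeta}$-integral (legitimate distributionally, since the two act on disjoint variables). Your sign $\zeta_{iv,i}=z_{iv,v}+X_{iv}$ is the one that actually yields the displayed atoms, whereas the paper's sentence writes $z_{iv,v}-X_{iv}$, apparently a slip; also, the $\epsilon$-regularization is unnecessary, because the unregularized $\Pr[\cdots]$ is just a bounded indicator of $\Vec{z}$.
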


Remember that $(u,i+1),(i+1,v)\in \partial_E[i]~(u,v\in [i])$ are 
vertices in $\partial G_i$; the following identifies them as one vertex $i+1$ 
and gives an intermediate $F_{\partial G_i}(\Vec{z}_{\partial_E[i+1]})$.
\begin{definition}
\label{definition:FGi+1}
  Let  $\Delta E^\pm(i+1)=E^\pm(i+1)\cap \partial_E^\pm[i]$, taking the signs respectively.
  Also, let $\Delta E(i+1)=\Delta E^+(i+1)\cup \Delta E^-(i+1)$. 
  We have 
  \begin{align*}
    F_{\partial G_i}(\Vec{z}_{\partial_E[i+1]})
    =\hspace*{-6mm}\int\limits_{\hspace*{3mm}\Vec{\zeta}\in \mathbb{R}^{\Delta E(i+1)}}\hspace*{-4mm}F_{\partial G_i}(\hat{\Vec{\zeta}})
    \mathcal{V}^+_{i+1}\left(\check{\Vec{\zeta}}\right)d\Vec{\zeta},
  \end{align*}
  where $\hat{\Vec{\zeta}}=\cat{\Vec{\zeta}}{\Vec{z}_{\partial_E[i]\setminus E_{i+1}}}$ and
  $\check{\Vec{\zeta}}=\cat{\Vec{\zeta}}{\Vec{z}_{E(i+1)\setminus \Delta E(i+1)}}$.
\end{definition}

To make sure the correctness of the recurrence, we prove the following lemma.
\lemmaFDN*
\begin{proof}
The proof is by induction on $i\in V$.
As the induction hypothesis, we assume
\begin{align} 
  F_{\partial G_i}(\Vec{z}_{\partial_E[i]})&=\Pr\left[\bigvee_{\pi\in \Pi_i}\bigvee_{p\in \mathrm{Con}(\pi)}\left(\sum_{e\in p} Z_e\le z_{e_s(p),s_p}-z_{e_t(p),t_p}\right)\right],
  \label{form:induction_hypothesis}
\end{align}
where $\mathrm{Con}(\pi)$ is the set of connected components of $\pi$ in $\partial G_i$.
For $p\in \mathrm{Con}(\pi)$, we define a boolean predicate
\begin{align*}
\mathcal{W}_p=\left(\sum_{e\in p} Z_e\le z_{e_s(p),s_p}-z_{e_t(p),t_p}\right),
\end{align*}
so that (\ref{form:induction_hypothesis}) is equivalent to 
$F_{\partial G_i}(\Vec{z}_{\partial_E[i]})=\Pr\left[\bigvee_{\pi\in \Pi_i}\bigvee_{p\in \mathrm{Con}(\pi)}\mathcal{W}_p\right]$.
All possible functional subgraphs in 
$\partial i=(\{i\}\cup E_i), \{(i,e^+),(e^-,i)| e^+\in E^+(i), e^-\in E^-(i)\})$,
are enumerated in $\mathcal{V}_i$'s condition, 
where edge $(i,e^+)$ has edge length $X_e$; 
edge $(e^-,i)$ has zero edge length.
Since $\partial i$ is a star graph for $i=1,\dots,n$,
the functional graph set $\pfs_1$ 
of $\partial 1$ coincides with $\Pi_1$. Therefore, 
(\ref{form:induction_hypothesis}) holds trivially, for $i=1$.
Also, remember Definition~\ref{definition:sentinel}.
the sentinel edges $\bot s$ and $t\bot$ for $s\in S(G)$ and $t\in T(G)$ have
length zero and the other sentinel edges have sufficiently large length $M$,
implying that the sentinel edges eliminate the paths that do not start 
from $S(G)$, or ends at $T(G)$. 

We proceed to the induction step.
Assume that we have $F_{\partial G_i}(w,\Vec{z}_{\partial_E[i]})$ properly 
computed so that the induction hypothesis (\ref{form:induction_hypothesis}) holds.
Then, $\mathcal{V}_{i+1}^+\left(\check{\Vec{\zeta}}\right)~~(\check{\Vec{\zeta}}=\cat{\Vec{\zeta}}{\Vec{z}_{E(i+1)\setminus \Delta E(i+1)}}, \Vec{\zeta} \in\mathbb{R}^{\Delta E(i+1)})$ 
represents the density at the point where 
the length sum from the source 
to each of $\Delta E(i+1)$ are arbitrarily close to the value specified by 
$\hat{\Vec{\zeta}}=\cat{\Vec{\zeta}}{\Vec{z}_{E(i+1)\setminus\Delta E(i+1)}}$.
By taking the product with 
$F_{\partial G_i}\left(\cat{\Vec{\zeta}}{\Vec{z}_{\partial_{E}[i+1]\setminus \Delta E(i+1)}}\right)$, 
each component $\zeta_{(u,i+1),i+1}$ of $\Vec{\zeta}\in \mathbb{R}^{\Delta E(i+1)}$ 
in the inequality condition of $\mathcal{V}_{i+1}^+$ is substituted by some path length:
\begin{enumerate}
\item
$\zeta_{(u,i+1),i+1}$ in 
$\sum_{e\in p} Z_e\le z_{e_s(p),s_p}-\zeta_{(u,i+1),i+1}$ (due to $\mathcal{W}_p$)
is replaced by $Z_{(i+1,v)}+z_{(i+1,v),v}$ 
in $\mathcal{V}^+_i$'s $Z_{(i+1,v)}\le \zeta_{(u,i+1),i+1}-z_{(i+1,v),v}$,
meaning that $(i+1,v)$ is a new sink, or 
\item
$\zeta_{(i+1,v),v}$ in 
$\sum_{e\in p} Z_e\le \zeta_{(i+1,v),v}- z_{e_t(p),t_p}$ (due to $\mathcal{W}_p$),
is substituted by $z_{(u,i+1),i+1}-Z_{(i+1,v)}$ 
that originates in $\mathcal{V}_{i+1}^+$'s condition 
$Z_{(i+1,v)}\le z_{(u,i+1),i+1}-\zeta_{(i+1,v),v}$
meaning that $(u,i+1)$ is a new source. 
\end{enumerate}
The same applies to each combination of edges 
$(u,i+1),(i+1,v)\in E(i+1)$ and $p\in \mathrm{Con}(\pi)~(\pi\in\Pi_i )$.
If the inequality of (\ref{form:induction_hypothesis}) is
$\sum_{e\in p} X_e\le\zeta_{(u,i+1),i+1}-\zeta_{(i+1,v),v}$,
then, a path $p_1$ having sink $(u,i+1)$ and the another path
$p_2$ having source $(i+1,v)$ are connected at vertex $i+1$,
resulting in replacing $\zeta_{(u,i+1),i+1}-\zeta_{(i+1,v),v}$ by
$X_{(i+1,v)}$ due to 
$\mathcal{V}_{i+1}^+$'s $X_{(i+1,v)}\le \zeta_{(u,i+1),i+1}-\zeta_{(i+1,v),v}$.
In case $p_1=p_2$, we have a cycle, which ends up with zero probability.

By expanding the condition in the probability into 
the DNF of $\mathcal{U}_{(u,i+1),(i+1,v)}$'s, 
each term corresponds to a functional subgraph $\pi\in \pfs_{i+1}$
of $\partial G_{i+1}$. 
This is because, 
in case of a term where it has a mismatch 
in the choice of edges at the two ends of an edge, 
the path length is increased by $M$ due to the sentinel edge.

Formally, given Definition~\ref{definition:gadgets},
by specifying a term of the DNF of $\mathcal{U}_{(u,i+1),(i+1,v)}$, 
we have a corresponding projections $P^+:[i+1]\mapsto E^+(i+1)$ and
$P^-:[i+1]\mapsto E^-(i+1)$.
Then, if there is a mismatch in the term 
at the ends of edge $uv$ if $P^+(v)\neq P^-(u)$,
``$\mathrm{Garbage}$'' in the vertex gadget  
substitutes $z_{ui,i}$ or $z_{iv, v}$
with $z_{i\bot,i}$ or $z_{\bot i, \bot}$
leading to the sufficiently large path length $M$ due to the sentinel edges.
Thus, the subgraph corresponding to 
the term has a path with its path length 
sufficiently large, implying that 
the term with a mismatch is always false.

Since any $\pi\in\pfs_{i+1}$ with nonzero probability 
has vertices with indegree and outdegree at most one and 
$\pi$ has no cycle, $\pi$ is a path in $\partial G_{i+1}$,
proving the induction hypothesis 
(\ref{form:induction_hypothesis}) for $\partial G_{i+1}$.
After obtaining $F_{\partial G_n}(\Vec{z}_{\partial_E[n]})$, 
we set $z_{\bot s,s}=x$ and $z_{s \bot,\bot}=0$ for $s\in S(G)$,
and $z_{v\bot,\bot}=z_{\bot v,v}=0$ for $v\in V\setminus S(G)$.
The value of $\Vec{z}_{\partial_E[i]}$ 
gives $x$ on the right hand side of $\mathcal{W}_p$.
Therefore, we have the lemma.
\end{proof}

By evaluating the running time of Algorithm~\ref{algorithm:mainalg}, We prove Theorem~\ref{th:XPalgorithm} as follows.

\begin{proof} (of Theorem~\ref{th:XPalgorithm})
Consider how many words we need to store the representation of
the ongoing computation.
Though what we actually compute is $f_{\partial G_i}(\Vec{z}_{\partial_E[i]})$,
we instead consider the amount of memories that is necessary for 
$F_{\partial G_i}(\Vec{z}_{\partial_E[i]})$.
We will explain that taking the derivative of 
$F_{\partial G_i}(\Vec{z}_{\partial_E[i]})$ may produce 
tolerable amount of piecewise polynomial terms.
By the recurrence given in Definitions~\ref{definition:FGi+1},
we have that the resulting
form of $F_{\partial G_i}(\Vec{z}_{\partial_E[i]})$ is a piecewise smooth
polynomial of components of $\Vec{z}_{\partial_E[i]}$.

Let $b$ be the number of bags in the tree decomposition.
We apply Observation~\ref{observation:integral} to estimate
the number of cases due to the step function with 
$\Vec{z}_{\partial_E[i]}$ in the argument.
Notice that there are at most $n^{(k+1)^2}$ cases with respect to
$\Vec{z}_{\partial_E[i]}$ for each bag $B$ satisfying 
$B\cap \partial_V[i]\neq \emptyset$.
To verify this,
consider the resulting form of 
$F_{\partial G_i}(\Vec{z}_{\partial_E[i]})$
into a sum of products so that
each term is in the form of $g(x)$ in
Observation~\ref{observation:integral}.
We claim the following about the step function factors' arguments:

\smallskip
\noindent{\bf Claim:}
If we expand $F_{\partial G_i}(\Vec{z}_{\partial_E[i]})$ into
the sum of products, the step function factor of each term
is the product of 
$H(\pm z_{uv,v}\pm j)$ and $H(\pm(z_{e_s(p),s_p}-z_{t(p),t_p})\pm j)$ for $uv\in \partial_E[i]$, 
a connected path $p\in\mathrm{Con}(\pi)$ for $\pi\subseteq T\in \Pi_{i}$ and $j\in\{0,1,\dots,n-1,M\}$,
where $H(x)$ is the step function as defined in the preliminaries section.
\smallskip

We prove the correctness of the Claim by induction.
As the base case, by Definition~\ref{definition:gadgets},
each term of $F_{\partial_E[1]}(\Vec{z}_{\partial_E[1]})=\mathcal{V}_1^+(\Vec{z}_{\partial_E[1]})$ is given by
a combination of $H(z_{u1,1}-z_{1v,v})$'s for $u1, 1v\in E(1)$ when expanded into 
the sum of products, implying the Claim for $i=1$.

In the induction step, assume that the claim holds for some $i$.
Consider, focusing on the arguments of the step function factors,
the computation of $F_{\partial G_i}(\Vec{z}_{\partial_E[i+1]})$.
We compute $F_{\partial G_{i+1}}(\Vec{z}_{\partial_E[i+1]})$
by taking the convolution with 
$F_{\partial G_i}(\Vec{z}_{\partial_E[i+1]})$
and $\mathcal{V}^+_{i+1}(\Vec{z}_{E(i+1)})$ for 
$uv\in E(i+1)\cap \partial_E[i]$.
Since there is only one outgoing edge of $i+1$,
by the definition of the uniform density, executing the integral 
may replace $z_{uv,v}$ by another $z_{vw,w}$ or $z_{vw,w}-1$,
increasing $j$ in the claim by one.
Since we process the vertices in reverse BFS order in
the tree decomposition in our algorithm,
$|\partial_E[j]|$ is at most the number of edges in a bag
per a subtree rooted at $B_j\in \mathcal{B}$,
meaning that we have at most $(k+1)k$ edges on the frontier 
for each connected component of $\partial G_i$ when
we have $F_{\partial G_i}(\Vec{z}_{\partial_E[i]})$.
By the above claim, the number of possible permutations of
upper limits and lower limits is at most $i^{(k+1)k}$
for each connected component.

Consider, now, computing the derivative of $F_{\partial G_i}(\Vec{z}_{\partial_E[i]})$.
Taking the derivative turns some polynomial into another polynomial with 
less degree, and also it turns some step function factors into some Dirac delta factors.
We can manage the step function factors and the Dirac delta factors by  
having one more bit memory space for each factor, showing whether the factor is
the step function or the Dirac delta.
Since there are at most $(k+1)k$ step function factors in 
$F_{\partial G_i}(\Vec{z}_{\partial_E[i]})$,
the differentiation operation makes at most $2^{(k+1)k}$ times more
terms in the representation on the memory.
  
Therefore, for each connected component, we may have to 
store at most $i^{(k+1)k}$ coefficients of a polynomial, 
and there are at most $i$ intervals 
($[j,j+1]$ for $j=0,\dots,i-1$)
due to the integration of the uniform distribution function.
Now, it takes at most $O(b i^{2(k+1)k}2^{(k+1)k})$ time to compute
$F_{\partial G_{i+1}}(\Vec{z}_{\partial_E[i+1]})$
from $F_{\partial G_i}(\Vec{z}_{\partial_E[i]})$
for $i=1,\dots,n$, which proves the theorem. 

Since our recurrence holds for the case where the edge length vector is 
$-\Vec{X}$, our algorithm can be applied for computing 
$F_{\rm MAX}^{\Vec{X}}(x)=\bar F_{\rm MIN}^{-\Vec{X}}(-x)$
by Proposition~\ref{proposition:minmaxsymmetry},
which proves the theorem for LPPDF-IID.
\end{proof}

In the proof of Theorem~\ref{th:XPalgorithm}, we used 
the following observation about the integration of step function products.
Remember that $H(x)=1$ for $x\ge 0$, otherwise $H(x)=0$.
Let $F(x)$ be a function $F: \mathbb{R}\mapsto \mathbb{R}$, whose derivative
is $f(x)$. Let $[i]=\{1,\dots,i\}$ for any $i\in \mathbb{N}~~([0]=\emptyset)$.
Then, for 
\begin{align}
  g(x)=f(x)\prod_{i\in[\ell]}H(x-\alpha_i) \prod_{i\in[m]}H(\beta_i-x),\nonumber
\end{align}
where $\alpha_1,\dots,\alpha_\ell$ and $\beta_i,\dots,\beta_m$ are real numbers,
we have
\begin{align}
  &\int\limits_{\mathbb{R}}\!\!g(x)dx\!=\![F(x)]^{\min_{i\in[m]}\{\!\beta_i\!\}}_{\max_{i\in[\ell]}\{\!\alpha_i\!\}}\!H\!\left(\!\min_{i\in[m]}\{\!\beta_i\!\}\!-\!\max_{i\in[\ell]}\{\!\alpha_i\!\}\!\!\right). \nonumber
\end{align}
In this integration we define the following.
\begin{definition}
  For $\int_{\mathbb{R}}g(x)dx$ in the above, $\alpha_1,\dots,\alpha_{\ell}$
  (resp. $\beta_1,\dots,\beta_{m}$) are
  the lower (resp. the upper) limits of the integral. 
\end{definition}
We can eliminate the $\max$ and $\min$ operations by
taking the following sum for all possible permutations of
the lower limits $\alpha_1,\dots,\alpha_\ell$ and the upper limits
$\beta_1,\dots,\beta_m$.
\begin{observation}
  \label{observation:integral}
Let $\Pi[i]$ be the set of all permutations of $[i]$.
Then, $\int_\mathbb{R}g(x) dx$ is, almost everywhere, equal to
\begin{align}
  &\hspace*{-16mm}\sum_{\hspace*{15mm}(p,q)\in \Pi[\ell]\times \Pi[m]}\hspace*{-15mm}(F(\beta_{q(1)})\!-\!F(\alpha_{p(\ell)}))H(\beta_{q(1)}-\alpha_{p(\ell)}) \hspace*{-5mm}\prod_{\hspace*{5mm}i\in[\ell-1]}\hspace*{-5mm}H(\alpha_{p(i+1)}-\alpha_{p(i)})\hspace*{-5mm}\prod_{\hspace*{5mm}i\in[m-1]}\hspace*{-5mm}H(\beta_{q(i+1)}-\beta_{q(i)}).\nonumber
\end{align}
\end{observation}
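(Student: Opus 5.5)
Observation~\ref{observation:integral} will be proved pointwise, for almost every choice of the parameters $\alpha_1,\dots,\alpha_\ell,\beta_1,\dots,\beta_m$. First I will establish the closed form stated just before it. This uses the identity
\[
\prod_{i\in[\ell]}H(x-\alpha_i)\prod_{i\in[m]}H(\beta_i-x)=H\!\left(x-\max_{i\in[\ell]}\{\alpha_i\}\right)H\!\left(\min_{i\in[m]}\{\beta_i\}-x\right).
\]
With this identity, $\int_{\mathbb{R}}g(x)\,dx$ is the integral of $f$ over $[\max_i\alpha_i,\min_i\beta_i]$ when that interval is nonempty, and zero otherwise. The fundamental theorem of calculus then gives $(F(\min_i\beta_i)-F(\max_i\alpha_i))H(\min_i\beta_i-\max_i\alpha_i)$. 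The convention $\max\emptyset=-\infty$, $\min\emptyset=+\infty$ covers empty index sets, provided $F$ has limits there, which holds for the distribution-type functions used in the paper.

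Next I will remove $\max$ and $\min$ by a case split over orderings. Restrict to parameter vectors whose $\alpha_i$ are pairwise distinct and whose $\beta_i$ are pairwise distinct. The excluded set is a finite union of hyperplanes, hence of measure zero, and this is exactly why the statement says ``almost everywhere''. On the remaining set there is exactly one $p\in\Pi[\ell]$ with $\alpha_{p(1)}<\cdots<\alpha_{p(\ell)}$, and exactly one $q\in\Pi[m]$ with $\beta_{q(1)}<\cdots<\beta_{q(m)}$. The indicator of this event is $\prod_{i\in[\ell-1]}H(\alpha_{p(i+1)}-\alpha_{p(i)})\prod_{i\in[m-1]}H(\beta_{q(i+1)}-\beta_{q(i)})$. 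This product is $1$ for the unique correct pair $(p,q)$. For every other pair, some chain inequality is strictly violated, so the product is $0$. Ties are precisely where the convention $H(0)=1$ would let several pairs fire at once, and those points have already been excluded.

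Summing over all $(p,q)\in\Pi[\ell]\times\Pi[m]$ therefore leaves only the correct pair. For that pair, $\max_i\alpha_i=\alpha_{p(\ell)}$ and $\min_i\beta_i=\beta_{q(1)}$, so its summand equals the closed form from the first step. The sum thus equals $\int_{\mathbb{R}}g(x)\,dx$ off a null set, which is the claim.

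The only delicate step is the bookkeeping at ties. Since the statement only asks for almost-everywhere equality, disposing of them via the measure-zero hyperplane argument should suffice. The derivation also gives $\ell!\,m!$ as the bound on the number of summands, which is the count the proof of Theorem~\ref{th:XPalgorithm} relies on.
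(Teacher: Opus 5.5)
Your proof is correct and takes the same route as the paper. The paper states the closed form $[F(x)]^{\min_{i}\beta_i}_{\max_{i}\alpha_i}H(\min_{i}\beta_i-\max_{i}\alpha_i)$ and then asserts that $\max$ and $\min$ can be eliminated by summing over all permutations of the lower and upper limits. Your check that the Heaviside chains select exactly one pair $(p,q)$ outside the null set of ties, where several pairs fire and the sum overcounts, supplies the detail the paper leaves implicit in ``almost everywhere''.
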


\section{Conclusion and Future Work}
In this paper, we proved that SPPDF-IID and LPPDF-IID are $\sharpp$-hard.
If the underlying undirected graph of the input graph has bounded treewidth, 
we showed an $\xp$ algorithm.
At this point, whether SPPDF-IID and LPPDF-IID are fixed parameter tractable
or $\#\mathrm{W}[1]$-hard is an open problem.

\section{Supplementary Materials}
\label{section:supplementary}
\propositionConvolution*
\begin{proof}
By replacing $x_1=nh$ for $h$ arbitrarily close to zero, we have
\begin{align}
  \Pr[X_1+X_2\le x] =\lim_{h\rightarrow 0}\sum_{n\in \mathbb{Z}}\Pr[hn+X_2\le x | hn\le X_1\le h(n+1)]~ \Pr[hn\le X_1\le h(n+1)].\nonumber
\end{align}
In order to transform the limit of the sum into a Riemann integral,
we multiply $h$ by the earlier probability and $1/h$ to the latter probability,
so that $\Pr[X_1+X_2\le x]$ is equal to
\begin{align*}
  &\lim_{h\rightarrow 0}\sum_{n\in \mathbb{Z}}h\Pr[hn+X_2\le x | hn\le X_1\le h(n+1)] \frac{1}{h}\Pr[hn\le X_1\le h(n+1)]\\
  &=\lim_{h\rightarrow 0}\sum_{n\in \mathbb{Z}}h\Pr[hn+X_2\le x | hn\le X_1\le h(n+1)] \frac{1}{h}(F_1(h(n+1))-F_1(hn)).
\end{align*}
By the definition of the derivative $f(x)=\frac{d}{dx}F(x)=\lim_{h\rightarrow 0}\frac{F(x+h)-F(x)}{h}$, the above is equal to
\begin{align*}
  &\lim_{h\rightarrow 0}\sum_{n\in \mathbb{Z}}h\Pr[hn+X_2\le x | hn\le X_1\le h(n+1)] f(x) \\
  &=\int_{\mathbb{R}}\Pr[x_1+X_2\le x | X_1=x_1]f_1(x_1) dx_1
\end{align*}
by the definition of the Riemann integral $\lim_{h\rightarrow 0}\sum_{n\in \mathbb{Z}} hf(nh)=\int_{\mathbb{R}}f(x) dx$.
Since $X_1$ and $X_2$ are mutually independent,
\begin{align*}
\int_{\mathbb{R}}\Pr[x_1+X_2\le x]f_1(x_1) dx_1
=\int_{\mathbb{R}}F_2(x-x_1)f(x_1)dx_1
\end{align*}
is equal to $\Pr[X_1+X_2\le x]$.
\end{proof}

\thDAGLP*
\begin{proof}
Let us consider an intermediate variables $z_v$ for each vertex $v\in V$.
Here, $z_v$ represents the longest path length 
between vertex $v$ and any sinks, outdegree $0$ vertices. 
Then, we can rewrite $F_{\rm MAX}(x)$ as 
\begin{align}
\Pr\left[\bigwedge_{\pi\in\Pi}\sum_{e\in\pi} X_e\le x\right]&=\Pr\left[\bigwedge_{u\in V\setminus T}\bigwedge_{v\in V_u}(X_{uv}+z_v\le z_u) \right] \nonumber \\
&=\Pr\left[\bigwedge_{u\in V\setminus T}\left(\max_{v\in V_u}\{X_{uv}+ z_v\}\le z_u\right) \right], \label{form:transform_max}
\end{align}
where we set $z_s=x$ (resp. $z_t=0$) for any in-degree zero vertex $s$ 
(resp. any out-degree zero vertex $t$).
The equality is justified by the algorithm for the longest path problem 
with static edge lengths, given a realization of edge length $\Vec{X}$.

We here consider the relation between $X_{uv}$, $z_u$, and $z_v$ for $uv\in E$.
The condition of the probability (\ref{form:transform_max}) holds if 
$z_u$ and $z_v$ satisfy $z_u=\max_{v\in V_u}\{X_{uv}+z_v\}$ for all $uv\in E$. 
We consider the probability that 
another random variable $W_u=\max_{v\in V_u}\{X_{uv}+z_v\}$ 
is very close to $z_u$, where 
$z_u\le W_u\le z_u+h_u$ is satisfied for $h_u>0$, 
arbitrarily close to $0$~($u\in V$).
Let $I\subset V$ be the set of all internal vertices,
that is, both in-degree and out-degree nonzero vertices. 
By using $n_u\in \mathbb{Z}$, we replace $z_u$ by $h_u n_u$, where 
$h_u>0$ is the vector $\Vec{h}\in \mathbb{R}^I$'s 
component corresponding to vertex $u$.
Let $\Vec{n}\in\mathbb{Z}^I$ be an integer vector.
Then, we have that $F_{\rm MAX}(x)$ is, for all $n_u$,
the sum of the product of $W(x)=\Pr[\bigwedge_{s\in S} W_s \le x]$ and 
probability $\Pr\left[\bigwedge_{u\in I}h_un_u \le W_u\le h_u(n_u+1)\right]$
at the limit $h_u\rightarrow 0$ for all $u\in V$. That is,
we consider the limit $\Vec{h}\rightarrow \Vec{0}$ for 
\begin{align*}
    &\sum_{\Vec{n}\in \mathbb{Z}^I}W(x) \Pr\left[\bigwedge_{u\in I}h_un_u \le W_u\le h_u(n_u+1)\right] \\
    &=\sum_{\Vec{n}\in \mathbb{Z}^I}W(x) \prod_{u\in I}\Pr\left[h_un_u\le W_u \le h_u(n_u+1)\right] \\
    &=\sum_{\Vec{n}\in \mathbb{Z}^I}W(x) \prod_{u\in I}(\Pr[W_u \le h_u(n_u+1)] - \Pr[W_u\le h_un_u]) \\
    &=\sum_{\Vec{n}\in \mathbb{Z}^I}W(x) \prod_{u\in I}h_u\frac{\Pr[W_u \le h_u(n_u+1)] - \Pr[W_u\le h_un_u]}{h_u},
\end{align*}
which tends to $F_{\rm MAX}(x)$.
Remember the definition of the Riemann integration 
\begin{align*}
    \int_\mathbb{R} F(x) dx =\lim_{h\rightarrow 0} \sum_{n\in \mathbb{Z}} h F(nh)
\end{align*}
and the partial derivative with respect to a component $z_u$ of $\Vec{z}$,
\begin{align*}
    \frac{\partial}{\partial z_u}F(\Vec{z})=\lim_{h\rightarrow 0} \frac{F(\Vec{z}+h\mathbf{\rm e}_u)-F(\Vec{z})}{h},
\end{align*}
where $\mathbf{\rm e}_u\in \mathbb{R}^I$ is the orthogonal basis corresponding to $u\in V$. Then, we have
\begin{align*}
    F_{\rm MAX}(x)=\int\limits_{\mathbb{R}^{V\setminus T}} W(x)\prod_{u\in V\setminus T}\frac{\partial}{\partial z_u}\Pr[W_u\le z_u] {\rm d} \Vec{z}.
\end{align*}
Here $\Pr[W_u\le z_u]$ is the distribution function of $W_u$, given by 
\begin{align*}
    \Pr[W_u\le z_u]&=\Pr\left[\max_{v\in V_u}\{X_{uv}+z_v\}\le x\right] =\Pr\left[\bigwedge_{v\in V_u}(X_{uv}+z_v\le x)\right]\\
    &=\Pr\left[\bigwedge_{v\in V_u}(X_{uv}\le z_u-z_v)\right] \\
    &=\prod_{v\in V_u} F_{uv}(z_u-z_v).
\end{align*}
Since we also have 
\begin{align*}
    W(x)\!=\!\Pr\!\left[\bigwedge_{s\in S}\!\!W_s\!\le\! x\right]\!\!=\!\prod_{s\in S}\! \prod_{v\in V_s}\!\! F_{sv}(x\!-\!z_v) 
    \!=\!\!\!\!\int\limits_{\mathbb{R}^S}\!\prod_{u\in S}\!\!H(x\!-\!z_u)\frac{\partial}{\partial z_u}\!\! \prod_{v\in V_u}\!\!F_{uv}(z_u\!-\!z_v) {\rm d}\Vec{z},
\end{align*}
implying the theorem.
\end{proof}

\propositionPsiMax*
\begin{proof}
Notice that $\graphtwo_R$ and $\graphone_R$ shares exactly the same
$s$-side structure, implying that we have $\Psi_U(x,\Vec{y})$
for both of them.
Let us consider a dummy variable $z_u$ for each $u\in U$ 
representing the longest path length from $u$ to the sink.
Also, $y_{uv}$ for $uv\in R$ is the longest path length from
$uv\in R$ to the sink, which is a vertex in both $\graphtwo_R$ and $\graphone_R$.
Then, $\Unif(x-z_u)$ and $H(z_u-y_{uv})$ are the edge length 
distribution functions of $su$ and $(u,uv)$ in both 
$\graphtwo_R$ and $\graphone_R$.

Consider $\Psi_U(x,\Vec{y})$, we set
\begin{align*}
\Psi_U(x,\Vec{y})&=\Pr\left[\bigwedge_{u\in U}\left( X_u\le x-z_u\wedge \bigwedge_{uv\in R}0\le z_u- y_{uv}\right)\right]\\
&=\Pr\left[\bigwedge_{u\in U}\left( X_u\le x-z_u\wedge 0\le z_u-\max_{v\in V_u}\{y_{uv}\}\right)\right].
\end{align*}
By the sifting property of the Dirac Delta, this is transformed into
\begin{align}
&\int_{\mathbb{R}^U}\Pr\left[\bigwedge_{u\in U} X_u\le x-z_u\right]\delta\left(z_u-\max_{v\in V_u}\{y_{uv}\}\right)d\Vec{z}, \nonumber\\
&=\int_{\mathbb{R}^U}\prod_{u\in U}\Unif(x-z_u)\>\>\delta\left(z_u-\max_{v\in V_u}\{y_{uv}\}\right)d\Vec{z}, \label{form:Psi_U1}
\end{align}
since alle edge lengths are mutually independent.
By obserbing $\prod_{v\in V_u}H(z_u-y_{uv})=H(z_u-\max_{v\in V_u}\{y_{uv}\})$,
and $\frac{\partial}{\partial x}H(x)=\delta(x)$,
we have 
\begin{align}
\delta\left(z_u-\max_{v\in V_u}\{y_{uv}\}\right)=\frac{\partial}{\partial z_u} \prod_{v\in V_u}H(z_u-y_{uv}). \label{form:Psi_U2}
\end{align}
The above (\ref{form:Psi_U1}) and (\ref{form:Psi_U2}) leads to (\ref{form:Psi_U0}).

We next consider $\tilde \Psi_V(\Vec{y})$.
Let $z_v$ for each $v\in V$ be the longest path length from
$v$ to the sink of both $\graphtwo_R$ and $\graphone_R$.
We have
\begin{align*}
\tilde \Psi_V(\Vec{y})&=\Pr\left[\bigwedge_{uv\in R} X_{(uv,v)}+X_{vt}\le y_{uv}\right]=\Pr\left[\bigwedge_{uv\in R}\left(X_{(uv,v)}\le y_{uv}-z_v \wedge X_v\le z_v\right)\right].
\end{align*}
Let $\Vec{n}\in \mathbb{Z}^V$ be an integer vector with
components $n_v\in \mathbb{Z}$.
We consider taking the sum of all possibility $z_v\le X_v\le z_v-h$
where $h\rightarrow 0$ and $z_v=hn_v$ for all $\Vec{n}\in \mathbb{Z}^V$, 
which leads to
\begin{align*}
&\lim_{h\rightarrow 0}\sum_{\Vec{n}\in\mathbb{Z}^V}\Pr\left[\bigwedge_{uv\in R}\left(X_{(uv,v)}\le y_{uv}-hn_v \wedge hn_v\le X_v\le hn_v+h\right)\right]\\
&=\lim_{h\rightarrow 0}\sum_{\Vec{n}\in\mathbb{Z}^V}\Pr\left[\bigwedge_{uv\in R}\left(X_{(uv,v)}\le y_{uv}-hn_v \right)\right]\Pr\left[\bigwedge_{v\in V} hn_v\le X_v\le hn_v+h\right],
\end{align*}
by the mutual independence of edge lengths.
Since $\frac{\Pr\left[hn_v\le X_v\le hn_v+h\right]}{h}=\frac{\Unif(hn_v+h)-\Unif(hn_v)}{h}\rightarrow \unif(hn_v)$ when $h\rightarrow 0$ by the definition of the derivative,
the above is equal to 
\begin{align*}
&\lim_{h\rightarrow 0}\sum_{\Vec{n}\in\mathbb{Z}^V}\Pr\left[\bigwedge_{uv\in R}\left(X_{(uv,v)}\le y_{uv}-hn_v\right)\right]\prod_{v\in V}h \unif(n_vh)
\end{align*}
By the definition of the Riemann integral, this is equal to
\begin{align*}
&\int_{\mathbb{R}^V}\Pr\left[\bigwedge_{uv\in R}\left(X_{(uv,v)}\le y_{uv}-z_v\right)\right]\prod_{v\in V}h \unif(z_v)d\Vec{z}.
\end{align*}
By the mutual independence of edge lengths, the above is equal to (\ref{form:tildePsi_V0}).

We proceed to $\Psi_V^+(\Vec{y})$.
We set 
\begin{align*}
\Psi_V^+(\Vec{y})&=\Pr\left[\bigwedge_{uv\in R}X_{vt}+X_t\le y_{uv}\right].
\end{align*}
Observe that 
\begin{align*}
\Pr\left[\bigwedge_{uv\in R}X_{vt}\le y_{uv}-t\right]=\int_{\mathbb{R}^V}\prod_{uv\in R} H(y_{uv}- z_v)\prod_{v\in V} \unif(z_u-t) d\Vec{z},
\end{align*}
by repeathing the above arguments for $\tilde \Psi_V(\Vec{y})$ 
replacing the distribution function $\Unif(y_{uv}-z_v)$ of
the edge length of $(uv,v)$ by $H(y_{uv}-z_v)$ for $uv\in R$, 
and $\unif(z_v)$ by $\unif(z_v-t)$ for $t\in \mathbb{R}$.
Then, we obtain (\ref{form:Psiplus_V0}) by aggregating the 
probability that $X_t\sim\Unif_{|R|}$ is in a small interval 
$[hn,hn+h]$ for $n\in \mathbb{Z}$ and $h\rightarrow 0$.

Finally, by considering the partial derivative of $y_{uv}$ of
$\tilde \Psi_V(\Vec{y})$ for each $uv\in R$, the resulting form implies 
the limit $h\rightarrow 0$ 
of the probability, divided by $h$, 
that the longest path length from $y_{uv}$ is in 
an arbitrarily small interval $[y_{uv},y_{uv}+h]$ for each $uv\in R$.
By aggregating all probabilities multiplied by $\Psi_U(x,\Vec{y})$
for $y_{uv}=n_{uv}h$, where $\Vec{n}\in\mathbb{Z}^{R}$ with components
$n_{uv}\in\mathbb{Z}$ for $uv\in R$,
we have $\tilde \Psi_{\rm MAX}(x)$. 
The similar arguments hold for $\Psi_{\rm MAX}^+(x)$ 
using $\Psi_V^+(\Vec{y})$ instead of $\tilde \Psi_{\rm MAX}(x)$.
\end{proof}

\begin{proposition}
\label{proposition:constant_min} 
Let $G=(V,E)$ be a graph.
We assume that there are $m_0$ edges with static length $0$ and
$m_1=|E|-m_0$ edges with uniform random lengths i.i.d. over $[0,1]$.
If every path in $\Pi_G$ has $k$ nonzero length edges,
we have $\bar F_{\rm MIN}(k-x)=1-F_{\rm MIN}(x)=Ax^{m_1}$ for a constant $A$ 
and $0\le x \le 1$.
\end{proposition}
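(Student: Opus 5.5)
The plan is to apply the reflection $X_e\mapsto 1-X_e$ to the random edges and then reuse the homogeneity argument of Proposition~\ref{proposition:constant}. Concretely, put $Y_e=1-X_e$ for each of the $m_1$ random edges. The $Y_e$ are again i.i.d.\ uniform over $[0,1]$ and independent, so the joint law of the random edge lengths does not change.

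Fix a path $\pi\in\Pi_G$. By hypothesis it contains exactly $k$ random edges, and all its other edges have length $0$. Hence
\begin{align*}
\sum_{e\in\pi}X_e=k-\sum_{e\in\pi}Y_e .
\end{align*}
The event $X_{\rm MIN}>k-x$ means that every path satisfies $\sum_{e\in\pi}X_e>k-x$. By the identity above, this is the same as every path satisfying $\sum_{e\in\pi}Y_e<x$. Therefore
\begin{align*}
\bar F_{\rm MIN}(k-x)=\Pr\left[\bigwedge_{\pi\in\Pi_G}\left(\sum_{e\in\pi}Y_e< x\right)\right].
\end{align*}
Each boundary event $\sum_{e\in\pi}Y_e=x$ is a hyperplane and has measure zero. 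So the strict inequalities may be replaced by non-strict ones, and the right-hand side equals $F_{\rm MAX}^{\Vec{Y}}(x)$ for the same graph with the same zero-length edges. This is the step where the hypothesis that every path has the same number $k$ of random edges is essential: without it the constant $k$ would differ from path to path, and no single shift would work.

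Next, Proposition~\ref{proposition:constant} applies directly to $F_{\rm MAX}^{\Vec{Y}}$ and gives $F_{\rm MAX}^{\Vec{Y}}(x)=Ax^{m_1}$ for $0\le x\le 1$. The reason is that for $0\le x\le 1$ the upper constraints $y_e\le 1$ are implied by the path constraints whenever each random edge lies on some path. So the polytope is $\{\Vec{y}\ge\Vec{0}:\sum_{e\in\pi}y_e\le x\ \forall\pi\}=xK(1)$, whose volume is $x^{m_1}\mathrm{Vol}(K(1))$.

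The main obstacle is a random edge that lies on no $s$--$t$ path. Such an edge is unconstrained by the path inequalities, and I would handle it separately. It contributes a factor $1$ to the probability, while the polynomial degree counts only the edges that do lie on paths. So either one assumes, as is implicit in the intended application, that all edges lie on some $s$--$t$ path (true for the transportation graphs), or $m_1$ is read as the number of such edges. Finally, the stated equality $\bar F_{\rm MIN}(k-x)=1-F_{\rm MIN}(x)$ is not the natural identity; the natural one is $\bar F_{\rm MIN}(k-x)=1-F_{\rm MIN}(k-x)$. I would prove the monomial form for $\bar F_{\rm MIN}(k-x)$ as derived above, and check whether the statement intends $x$ to be replaced by $k-x$ on the right-hand side.
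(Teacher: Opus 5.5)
Your proposal is correct and follows essentially the paper's route: reflect the random lengths via $Y_e=1-X_e$, use the fact that every path carries exactly $k$ random edges to turn $\bar F_{\rm MIN}(k-x)$ into $F_{\rm MAX}^{\Vec{Y}}(x)$, and then invoke the homogeneity argument of Proposition~\ref{proposition:constant}; the paper phrases the middle step through Proposition~\ref{proposition:minmaxsymmetry}, while you compute it directly, which also avoids a sign slip in the paper's intermediate superscript. Your reading that the left-hand side should be $1-F_{\rm MIN}(k-x)$ agrees with the paper's own proof, and your caveat about random edges lying on no $s$--$t$ path is a legitimate refinement of the degree-$m_1$ claim, which the paper leaves implicit.
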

\begin{proof}
Let $\Vec{X}\in[0,1]^{m1}$ be a random vector with components
$X_i\iidsim \Unif$ for $i=1,\dots,m_1$. 
Since every path $\pi\in \Pi_G$ has $k$ nonzero length edges,
we have
\begin{align*}
    1-F_{\rm MIN}^{\cat{\Vec{0}}{\Vec{X}}}(k-x)=\bar F_{\rm MIN}^{\cat{\Vec{0}}{\Vec{X}}}(k-x)=\bar F_{\rm MIN}^{\cat{\Vec{0}}{\Vec{1}-\Vec{X}}}(-x)=F_{\rm MAX}^{\cat{\Vec{0}}{\Vec{1}-\Vec{X}}}(x),
\end{align*}
by Proposition~\ref{proposition:minmaxsymmetry}.
Since $X_i$ is uniformly distributed over $[0,1]$,
we have $1-X_i\iidsim \Unif$ for $i=1,\dots,i$, 
implying the claim by Proposition~\ref{proposition:constant}. 
\end{proof}






\bibliography{lipics-v2021-sample-article}

\end{document}